\documentclass[aps, amsmath, amssymb, notitlepage, twocolumn, superscriptaddress, longbibliography, nofootinbib, floatfix]{revtex4-2}

\usepackage[dvipsnames, svgnames, x11names]{xcolor}
\definecolor{darkred}{rgb}{0.5,0,0}
\definecolor{darkblue}{rgb}{0,0,0.5}
\definecolor{darkgreen}{rgb}{0,0.5,0}
\usepackage[normalem]{ulem}

\usepackage[pdftex, colorlinks=true, linkcolor=darkblue, citecolor=blue, urlcolor=darkred]{hyperref}

\usepackage{amsmath, amsfonts, amssymb, amsthm, dsfont, bm, bbm, breqn, mathrsfs}
\usepackage{bm}  
\usepackage{physics}

\usepackage{wrapfig}
\usepackage{graphicx}
\graphicspath{{figs/}{}}
\usepackage[caption=false]{subfig}  
\usepackage{tikz}

\usepackage{verbatim, soul, wasysym, enumerate, multirow,footmisc}
\usepackage[capitalise]{cleveref} 

\usepackage{aliascnt}
\newtheorem{theorem}{Theorem}
\newenvironment{thmprime}[1]
{\renewcommand{\thetheorem}{\ref{#1}'}
	\addtocounter{theorem}{-1}
	\begin{theorem}}
	{\end{theorem}}
\newaliascnt{corollary}{theorem}
\newtheorem{corollary}[corollary]{Corollary}
\aliascntresetthe{corollary}
\crefname{corollary}{Corollary}{Corollaries}
\Crefname{corollary}{Corollary}{Corollaries}
\newtheorem{prop}[theorem]{Proposition} 
\newtheorem{definition}{Definition}

\newaliascnt{lemma}{theorem}
\newtheorem{lemma}[lemma]{Lemma}
\aliascntresetthe{lemma}

\newtheorem{fact}[theorem]{Fact} 

\newcommand{\Eq}[1]{Eq.~(\ref{#1})}
\newcommand{\Ineq}[1]{Ineq.~(\ref{#1})}
\newcommand{\Def}[1]{Definition~\ref{#1}}
\newcommand{\Lem}[1]{Lemma~\ref{#1}}
\newcommand{\Thm}[1]{Theorem~\ref{#1}}
\newcommand{\Sec}[1]{Sec.~\ref{#1}}

\newcommand{\cRef}[1]{Ref.~\cite{#1}}
\newcommand{\cRefs}[1]{Refs.~\cite{#1}}
\newcommand{\Fig}[1]{Fig.~\ref{#1}}

\newcommand{\ignore}[1]{}

\usepackage{mdframed}
\usepackage[textwidth=1.6cm]{todonotes}
\newcommand{\bigO}[1]{\mathcal{O}(#1)}

\newcommand{\EqDef}{\stackrel{\mathrm{def}}{=}}

\DeclareMathOperator*{\dist}{dist}

\DeclareMathOperator{\poly}{poly}
\DeclareMathOperator{\supp}{supp}
\DeclareMathOperator{\gap}{gap}
\DeclareMathOperator{\EX}{\mathbb{E}}

\newcommand{\prob}[1]{\mathbb{P}\left(#1\right)}

 \newcommand{\LC}[1]{\mathrm{LC}( #1)}

\newcommand{\PTr}[2]{\mathrm{Tr}_{#1}\left[ #2\right]}

\newcommand {\br} [1] {\ensuremath{ \left( #1 \right) }}
\newcommand {\minusspace} {\: \! \!}
\newcommand {\fn} [2] {\ensuremath{ #1 \minusspace \br{ #2 }}}
\newcommand{\mutinf}[2]{\fn{\mathrm{I}}{#1  :  #2}}
 \newcommand{\ent}[1]{\fn{\mathrm{S}}{#1}}
 \newcommand{\av}[1]{{ \langle {#1} \rangle }}
\newcommand{\fid}[2]{{\mathrm{F}\left({#1} , {#2} \right)}}
\newcommand{\bur}[2]{{\mathrm{D_{B}}\left({#1} , {#2} \right)}}
 \newcommand{\condinf}[3]{{\mathrm{I}}{\big(#1  :  #2 | #3\big)}}
 \newcommand{\drel}[2]{\fn{\mathrm{D}}{#1 \middle \| #2}}

\newcommand{\BBC}{\mathbb{C}}
\newcommand{\BBH}{\mathbb{H}}

\newcommand{\LH}{\mathrm L(\BBH)}

    \newcommand{\Id}{\mathbbm{1}}
\newcommand{\mcC}{\mathcal{C}}
\newcommand{\mcD}{\mathcal{D}}

 \newcommand{\mcF}{\mathcal{F}}
\newcommand{\mcH}{\mathcal{H}}
\newcommand{\mcL}{\mathcal{L}}
 \newcommand{\mcM}{\mathcal{M}}
 \newcommand{\mcP}{\mathcal{P}}
\newcommand{\mcR}{\mathcal{R}}
\newcommand{\mcS}{\mathcal{S}}
\newcommand{\mcT}{\mathcal{T}}

\newcommand\tab[1][1cm]{\hspace{#1}}

\begin{document}

\title{A Unified Framework for Locally Stable Phases}

\author{Zhi Li}\thanks{Co-first author}\email{zli@ibm.com}\affiliation{\IBM}
\author{Raz Firanko}\thanks{Co-first author}\email{rfiranko@pitp.ca}\affiliation{\PI}\affiliation{\IQC}
\author{Timothy H. Hsieh}\email{thsieh@pitp.ca}\affiliation{\PI}
\newcommand*{\IBM}{IBM Quantum, IBM Research}
\newcommand*{\PI}{Perimeter Institute for Theoretical Physics, Waterloo, Ontario N2L 2Y5, Canada}
\newcommand*{\IQC}{Institute for Quantum Computing (IQC), University of Waterloo, Ontario, Canada}

\begin{abstract}

We propose a unifying framework for characterizing pure and mixed-state phases of matter across equilibrium, non-equilibrium, and metastable regimes. We introduce the concept of locally stable states, defined by the operational property that any local operation (including post-selection) can be reversed by a local channel. We prove that local stability is equivalent to a state being short-range correlated—defined by the decay of both correlations and conditional mutual information.  We demonstrate that these properties are invariant under locally reversible channels, thus defining locally stable phases.  Furthermore, we prove that local stability implies both the decay of a family of nonlinear correlators, including the fidelity correlator, and the decay of correlations in the canonical purification, thus bridging the gap between mixed and pure states.  Along the way, we establish two results which may be of independent interest: we show that post-selection on locally stable / short-range correlated states can be implemented via local channels and that quantum Markov chains can be characterized by the local computability of nonlinear observables.

\end{abstract}
\maketitle

\tableofcontents 

\section{Introduction}

\begin{figure*}[!ht]
    \centering
    \includegraphics[width=\textwidth]{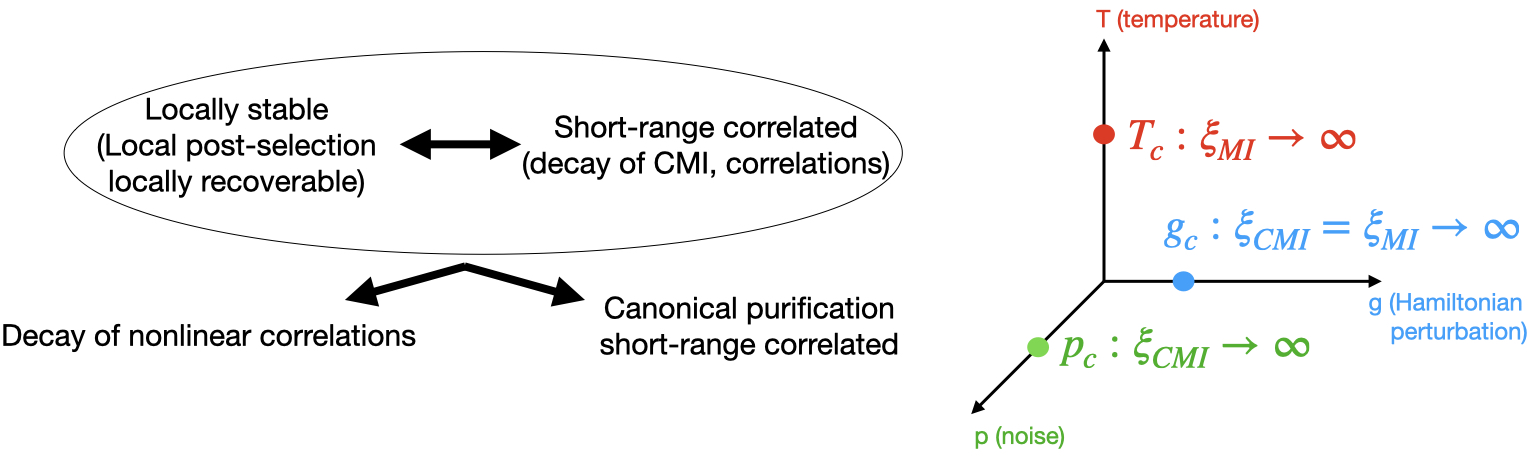}
    \caption{(left) Summary of main results, which provides a unified framework for the (right) generalized landscape of phases including equilibrium thermal (red) and quantum (blue), and non-equilibrium (green) phase transitions driven by noise, for example. The former can be diagnosed by conventional correlation lengths, for example given by mutual information (MI), and the latter can be diagnosed by conditional mutual information (CMI) and its associated length scale, the Markov length. }
    \label{fig:intro-figure}
\end{figure*}

The concept of a phase of matter is central to condensed matter physics and intimately tied to the structure of spatial correlations.  In equilibrium, Gibbs states of local Hamiltonians at finite and zero temperature (ground states) constitute phases of matter whose stability is typically dictated by decay of correlations of local operators.  In recent years, advances in the controllability of open quantum systems and error correction have motivated the study of non-equilibrium states arising from local decoherence ~\cite{bao2023mixedstatetopologicalordererrorfield,Fan_2024,WangLi2024AnomalyOpenQuantumSystems,coser2019classification, SangZouHsieh2024MixedStatePhases,PhysRevX.14.041031,
  ChenGrover2024SeparabilityTransitions,
  ChenGrover2024UnconventionalMixedStateTransition,ref:Sang2025stability,ref:sang2025mixedstatephases,
  SohalPrem2025NoisyIntrinsicMixedOrder,
  WangWuWang2025IntrinsicMixedStateTopo,
  EllisonCheng2025TowardClassificationMixedState,ZhangXuZhangXuBiLuo2025, ma2023average,Lee2025symmetryprotected, ref:swssb1,ref:lessa2025fidelity, ref:Sala2024Renyi,wang2025decoherenceinducedselfdualcriticalitytopological,liu2025coherenterrorinducedphase, zhang2025stabilitymixedstatephasesweak,hauser2026strongtoweaksymmetrybreakingopen}.  For such generally mixed states, the behavior of correlation functions alone is insufficient to fully characterize phase transitions, and correlations nonlinear in the state are essential.  In particular, the decay of conditional mutual information (CMI) has been identified as an important criterion for the stability of a phase \cite{ref:Sang2025stability}, in addition to decay of correlations.  Moreover, phenomena such as strong-to-weak symmetry breaking (SWSSB) are characterized exclusively by nonlinear correlations 
such as CMI and Renyi/fidelity correlators~\cite{ref:swssb1,ref:lessa2025fidelity,ref:Sala2024Renyi,ref:Weinstein2025CPcor,ref:Bloch2026swssb}. 
However, a general understanding of how all these correlation
measures are related (and furthermore, their relation to pure
state linear correlations) has been lacking, despite important progress when there is a global symmetry \cite{ref:lessa2025fidelity,ref:Weinstein2025CPcor}.

To add to the mix, metastability is another important phenomenon beyond equilibrium, with examples ranging from false vacua \cite{Devoto_2022} to transient order in driven quantum materials \cite{doi:10.1126/science.1197294,Ilyas2024,Li2025} and self-correcting classical and quantum memories \cite{bergamaschi2025rapidmixinggibbsstates}.  Recent works have defined metastable states in terms of both dynamical properties (approximate stationarity~\cite{ref:chen2025metastability}) and a notion of local energy gap~\cite{ref:Lucas2025metstablity}.  These works have highlighted the importance of considering the effect of local operations on the state. How is metastability, defined dynamically or operationally, related to the above stable phases of matter, thus far defined from their correlation structure?

In this work, we provide a framework that unifies pure and mixed-state phases, both in and out of equilibrium or metastable.  We refer to this unified class of states as locally stable, and we define a state to have this property if {\it any} local operation (channel or Kraus operator/post-selection) can be reversed by a local channel.  We prove that this operational property is  equivalent to the state being short-range correlated, which we define as decay of {\it both} CMI and correlations.  We show that local stability is invariant under locally reversible channels, hence defining a locally stable {\it phase}.
From a dynamical perspective, we show that in certain settings, the Lindbladian giving rise to a steady state can be used to recover against local perturbations to the state, and we identify the Lindbladian gap as a key ingredient.
We then establish that local stability implies the decay of a family of nonlinear correlators including the fidelity correlator, without assuming any symmetry.  Finally, we show that local stability of a state implies decay of correlations in its canonical purification, thus further bridging mixed and pure state phases.

As a byproduct, we obtain two results that may be of independent interest.  We show that post-selection of a local measurement on any short-range correlated state can be physically implemented by a local channel.  Moreover, we show that a state being a quantum Markov chain is equivalent to the local computability of nonlinear observables.

\section{Short-range Correlated (SRC) States}\label{sec:gapped}
Throughout this work we study families of states $\rho$ defined on $D$-dimensional lattices $\Lambda$ of sites, each with a $d$-dimensional Hilbert space.
For further settings and notation, we refer the reader to Appendix~\ref{app:settings}.
\begin{definition}[Short-range correlation]
\label{def:gapped}
We say that a state $\rho$ is short-range correlated (SRC) if for any contiguous region
$A\subseteq \Lambda$ and a buffering annulus $B$ separating $A$ from the complement $C$,
and for any observables $O_A\in L(\BBH_A)$ and $O_C\in L(\BBH_C)$ such that $\norm{O_A}_\infty=\norm{O_C}_\infty=1$, we have
\begin{align}
    \label{eq:DOC}
    \left|\langle O_A O_C \rangle_\rho 
      - \langle O_A \rangle_\rho \langle O_C \rangle_\rho\right| 
    \le f(A)  e^{-\operatorname{dist}(A,C)/\eta}, \\
    \label{def:CMI-decay}
    I(A{:}C|B)_\rho 
      \le g(A)\, e^{-\operatorname{dist}(A,C)/\zeta}.
\end{align}
for some constants $\eta,\zeta$ and  functions $f$ and $g$ no larger than exponentials.
\end{definition}

 This definition is a generalization of the notion of short-range correlations commonly used for thermal or gapped ground states, for which correlations between distant observables decay exponentially with their separation~\cite{ref:hastings2006spectral}.
Property \eqref{eq:DOC} is the usual notion of decay of operator correlations, where $\eta$ is known as the \emph{correlation length}.  The additional property \eqref{def:CMI-decay} is the approximate Markov chain property, with $\zeta$  dubbed the \emph{Markov length}. While quantum and thermal phase transitions typically involve a diverging correlation length, noise-induced mixed-state transitions involve a diverging Markov length ~\cite{ref:Sang2025stability,ref:sang2025mixedstatephases} (see Fig. \ref{fig:intro-figure}).  As finite Markov length has been established in Gibbs states of local Hamiltonians ~\cite{ref:Poulin2012markov,ref:chen2025markovian,ref:Kuwahara2025CMI}, the diverging Markov length is a signature of intrinsically non-equilibrium phenomena in quantum and classical~\cite{chen2026localreversibilitydivergentmarkov} systems.

While~\Eq{eq:DOC} is formulated in terms of connected operator 
correlations, one could equivalently quantify correlations using other conventional measures, such as 
the mutual information $\mutinf{A}{C}_\rho$ or the trace distance 
$\norm{\rho_{AC}-\rho_A\otimes\rho_C}_1$. 
Importantly, in our work we focus on the case where the region $A$ is local $(|A|=O(1))$, for which all these correlation measures are equivalent up to $O(1)$ factors. 

More quantitatively, defining
\begin{equation}
\mcC(A,C)
\EqDef\sup_{\|O_A\|_\infty=\|O_C\|_\infty=1}
\bigl|
\av{ O_A O_C}_\rho
-
\av{O_A}_\rho\av{ O_C}_\rho
\bigr|,
\end{equation}
one has the bounds
\begin{equation} \label{eq:first-equivalence}
\mcC(A,C) \le  \norm{\rho_{AC}-\rho_{A}\otimes \rho_{C}}_1
        \lesssim \sqrt{\mutinf{A}{C}_\rho},
\end{equation}
as well as
\begin{equation} \label{eq:second-equivlence} 
\mutinf A C_\rho 
\lesssim
\log(d_A)\cdot \varepsilon + \varepsilon \log(1/\varepsilon )
\end{equation}
for $\varepsilon=\norm{\rho_{AC}-\rho_A\otimes\rho_C}_1$,
and
\begin{equation}\label{eq:third-equivlence}
    \norm{\rho_{AC}-\rho_A\otimes\rho_C}_1 
\lesssim
d_A\, \mcC(A,C). 
\end{equation}
Both inequalities in \Eq{eq:first-equivalence} are straightforward applications of
H\"older inequality with $(\rho_{AC}-\rho_A\otimes \rho_C)$ and $O_AO_C$, and Pinsker’s bound, respectively.
Inequality~\eqref{eq:second-equivlence} is a standard consequence of Alicki-Fannes-Winter continuity argument \cite{Alicki2004cont,winter2016tight}
and  inequality~\eqref{eq:third-equivlence} is proved in Ref.~\cite{correa2022maximal}.
To summarize, all correlation measures discussed above are interchangeable up to constants depending only on the size of $A$.
For more discussions, see the end of \cref{sec:equiv}.

\section{Local Stability}
\label{sec:recovery}
We identify an operational characterization for SRC states.
\begin{definition}[Local Stability] \label{def:local-rec}
    A state $\rho$ is locally stable if for any local region $A$ and any quantum channel $\Phi _A$ on $A$ with any Kraus decomposition $\Phi _A(\cdot)=\sum F_k(\cdot)F_k^\dagger$, there 
    exists a buffering annulus region $B$ of radius $r>r_0$ and recovery channel $\mcR_{AB}$ on $AB$
    such that:
    \begin{align} \label{eq:recovery}
        \sum_k p_k \norm{\mcR_{AB}\Big(\frac{1}{p_k}F_k\rho F_k^\dagger\Big)-\rho}_1\le h(A) \cdot  e^{-r/\xi} .
    \end{align}
    Here $h(\cdot)$ is some function of $|A|$, $\xi$ is a constant, and $p_k=\Tr(F_k\rho F_k^\dagger)$.
\end{definition}

To understand the condition \cref{eq:recovery}, we may view $\Phi$ as a measurement device:
with probability $p_k$ it returns the post-measurement state $\frac{1}{p_k}F_k\rho F_k^\dagger$.
The condition (\ref{eq:recovery}) states that, on average, $\mcR$ acts as a universal (independent of the measurement device and outcome) recovery channel for all post-measurement states.

A straightforward consequence of the triangle inequality is that local stability implies that $\mcR$ recovers the action of the overall channel $\Phi _A$:
\begin{align} \label{eq:channel-recovery}
    \norm{\mcR\circ\Phi_A(\rho)-\rho}_1 \le h(A) \cdot  e^{-r/\xi} .
\end{align} 
However, individual trajectories  $\frac{1}{p_k}F_k\rho F_k^\dagger$ need not be obtainable from $\rho$ via a quantum channel.
Therefore, recovering trajectories is a stronger requirement than recovering only the output of the overall channel. 
In this sense, $\mcR$ can be viewed as a strengthened analog of a Petz recovery map.  

We note that a state that admits an $\epsilon$ error recovery of trajectories is called an ``$\epsilon$-strongly Markov state'' in \cRef{ref:chen2025metastability}, though the relation between local support of recovery and SRC properties of the state was not considered.  
We also note that, for systems subject to a global symmetry, the {\it lack} of local stability against postselection was dubbed ``weak spontaneity'' in Ref.~\cite{ref:lessa2025fidelity}; as an example, the authors noted that the coherent and incoherent cat states $\frac{1}{\sqrt{2}}(|0...0\rangle+|1...1\rangle)$, $\frac{1}{2}(|0...0\rangle\langle0...0|+|1...1\rangle\langle1...1|)$ are unstable (weakly spontaneous) against $Z$ measurement.
   
\subsection{Equivalence between SRC and local stability}\label{sec:equiv}
The main result of this section establishes an equivalence between local stability and short-range correlations, modulo some technical details specified in \cref{thm:LR->gap,thm:gap->LR}.

\begin{theorem}[Local stability $\Rightarrow$ SRC] \label{thm:LR->gap}
    Let $\rho$ be a locally stable state (\cref{def:local-rec}) for some $h$ and $\xi$.
    Then for a region $A$, a buffer $B$ and the exterior $C$, any operators $O_A$ and $O_C$ such that $\norm{O_A}_\infty=\norm{O_C}_\infty=1$, we have
    \begin{align} 
        \label{eq:LR->DOC} 
        |\av{O_AO_C}_\rho-\av{O_A}_\rho\av{O_C}_\rho|  \lesssim 
        h(A) e^{-\dist(A,C)/\xi} ,\\
        \label{eq:LR->CMI} \condinf{A}{C}{B} \lesssim
        (|A|+|B|)h(A) e^{-\dist(A,C)/2\xi} .
    \end{align}
\end{theorem}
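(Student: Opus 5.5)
The plan is to derive both bounds from the recovery condition \eqref{eq:recovery} by choosing a suitable channel and Kraus decomposition on $A$. We take the buffer $B$ of the statement to contain the recovery annulus of radius $r\simeq \dist(A,C)$, so that $\mcR_{AB}$ acts trivially on $C$. This assumes \cref{def:local-rec} supplies a recovery for every radius $r>r_0$, which is how we read it.

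\textbf{Correlations.} First reduce to Hermitian $O_A$. Write $O_A=X+iY$ with $X,Y$ Hermitian of norm at most $1$; this costs a factor $2$, and similarly for $O_C$. For Hermitian $O_A$ with $\norm{O_A}_\infty\le1$, use the two-outcome measurement with Kraus operators $F_\pm=\sqrt{(\Id\pm O_A)/2}$. These satisfy $F_+^\dagger F_+ + F_-^\dagger F_- = \Id$, and $p_\pm=(1\pm\av{O_A})/2$. Because $\mcR_{AB}$ and $F_\pm$ both act trivially on $C$, we can pass $O_C$ through them:
\begin{align}
p_\pm \Tr\!\big[O_C\,\mcR_{AB}(F_\pm\rho F_\pm^\dagger/p_\pm)\big]
=\Tr\!\big[O_C F_\pm^\dagger F_\pm \rho\big]
=\tfrac12\big(\av{O_C}\pm\av{O_AO_C}\big).
\end{align}
Subtracting $p_\pm\av{O_C}$ gives $\pm\tfrac12\big(\av{O_AO_C}-\av{O_A}\av{O_C}\big)$ for each outcome. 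Hölder's inequality with $\norm{O_C}_\infty=1$, applied to \eqref{eq:recovery}, then bounds the sum over outcomes, which equals the absolute covariance, by $h(A)e^{-r/\xi}$. This yields \eqref{eq:LR->DOC}.

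\textbf{CMI.} Take $\Phi_A$ to be the replacement channel $\Phi_A(X)=\Tr_A(X)\otimes\tau_A$, where $\tau_A$ is maximally mixed; only the averaged form \eqref{eq:channel-recovery} is needed here. Then $\Lambda_{B\to AB}(\cdot)\EqDef\mcR_{AB}(\tau_A\otimes\cdot)$ is a channel from $B$ to $AB$ with
\begin{align}
\norm{\Lambda(\rho_{BC})-\rho_{ABC}}_1\le\varepsilon\EqDef h(A)e^{-r/\xi}.
\end{align}
The steps are as follows:
\begin{enumerate}
\item Write $I(A{:}C|B)_\rho=I(AB{:}C)_\rho-I(B{:}C)_\rho$.
\item Set $\sigma=\Lambda(\rho_{BC})$. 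Data processing gives $I(AB{:}C)_\sigma\le I(B{:}C)_\rho$.
\item Bound $|I(AB{:}C)_\rho-I(AB{:}C)_\sigma|$ by continuity. Expand $I(AB{:}C)=S(AB)-S(AB|C)$ so that only the finite dimension $d_{AB}$ enters, and not the possibly huge $d_C$. By Fannes–Audenaert for $S(AB)$ and Alicki–Fannes–Winter for $S(AB|C)$, the difference is at most of order $\varepsilon\log d_{AB}+\varepsilon\log(1/\varepsilon)$.
\end{enumerate}
Combining these gives $I(A{:}C|B)\lesssim(|A|+|B|)\varepsilon+\varepsilon\log(1/\varepsilon)$. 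Using $\varepsilon\log(1/\varepsilon)\lesssim\sqrt{\varepsilon}$ and absorbing the constant into $h$ gives the stated $(|A|+|B|)h(A)e^{-\dist(A,C)/2\xi}$, which explains the halved rate.

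The main obstacle is this continuity step. One must order the entropies so that the dimension of the large system $C$ never appears, and hence condition on $C$ rather than on $AB$. One must also make sure that the recovery channel factorizes as a map $B\to AB$ applied to $\rho_{BC}$, which holds because the replacement channel erases all dependence on the original $A$.
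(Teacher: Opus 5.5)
Your proposal is correct and follows essentially the same route as the paper. For correlations, the recovery channel acts trivially on $C$; you use the two-outcome measurement $F_\pm=\sqrt{(\Id\pm O_A)/2}$ where the paper uses a single Kraus operator $\sqrt{O_A}$ for PSD $O_A$. For the CMI, the completely depolarizing channel composed with $\mcR$ gives a recovery map $B\to AB$, and you bound the CMI via the same $I(AB{:}C)-I(B{:}C)$ decomposition, data processing, and $d_{AB}$-only continuity as Fact~\ref{fact:FoR->CMI}. The one difference is that you derive that continuity bound from Fannes--Audenaert plus Alicki--Fannes--Winter on $S(AB|C)$, whereas the paper cites Shirokov's mutual-information continuity bound directly.
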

Note that although there is a $|B|$ dependence in the second bound, the overall bound is still exponentially small in $r=\dist(A,C)$, since $|B|=\poly(r)$.
\begin{proof}
    We first show that \cref{eq:recovery} implies \cref{eq:LR->DOC}. 
    We may assume without loss of generality that $O_A$ is positive semi-definite (PSD), since a general $O_A$ can be decomposed into Hermitian and anti-Hermitian parts, and both parts can be reduced to the PSD case by shifting a scalar matrix.
    Consider the recovery channel $\mcR$ given by \cref{eq:recovery}.
    Note that $\mcR$ only applies on $AB$, therefore $\mcR^*(O_C)=O_C$, implying
    \begin{align}\label{eq:OAOC}\begin{split}
        \Tr[O_A O_C \rho]
        &=\Tr[ \mcR^*(O_C) \sqrt{O_A} \rho\sqrt{O_A}]\\
        & =  \Tr[ O_C \mcR\big( \sqrt{O_A} \rho\sqrt{O_A}\big)].
    \end{split}
    \end{align}
    Note that as $O_A$ is normalized, the map $\rho\mapsto \sqrt{O_A}\rho\sqrt{O_A}$ is completely positive and trace non-increasing, and thus it can be completed to a quantum channel on $A$ (that is, $\sqrt{O_A}$ is one of the Kraus operators). 
    \cref{eq:recovery} then implies that
    \begin{align}
         \norm{\mcR\big( \sqrt{O_A} \rho\sqrt{O_A}\big)- \Tr(\rho O_A)\rho}_1\le h(A) e^{-r/\xi} .
    \end{align}
    Plugging it into \cref{eq:OAOC} and using $\norm{O_C}= 1$, we find
    \begin{align}
        |\av{O_AO_C}_\rho-\av{O_A}_\rho \av{O_C}_\rho|
        \le h(A) e^{-r/\xi}.
    \end{align}

    Next, let us show that \cref{eq:recovery} implies \cref{eq:LR->CMI}. 
    Consider the special case where $\Phi$ is the depolarizing channel, then \cref{eq:channel-recovery} implies that:
    \begin{equation}
            \norm{\mcR\circ\mcT(\rho_{BC})-\rho_{ABC}}_1 \le h(A) \cdot  e^{-r/\xi},
    \end{equation}
    where $\mcT:B\rightarrow AB$ is a ``stitching" map $\mcT(\rho_B)=\frac{\Id_A}{d_A}\otimes \rho_B$.
    Namely, the channel $\mcP\EqDef \mcR\circ \mcT:B\rightarrow AB$ can approximately recover $\rho_{ABC}$ from $\rho_{BC}$, up to an error $\delta\le h(A)e^{-r/\xi}$.
    \cref{eq:LR->CMI} then follows from the fact that good recovery implies small CMI (see Fact \ref{fact:FoR->CMI} in Appendix~\ref{app:settings}).
\end{proof}

\begin{figure}
    \centering
\begin{tikzpicture}[scale=.8]

\def\rA{0.5}
\def\rBOne{1.35}
\def\rBTwo{2.70}

\definecolor{Acol}{RGB}{120,140,170}    
\definecolor{BOnecol}{RGB}{132,168,160} 
\definecolor{BTwocol}{RGB}{214,188,182} 
\definecolor{Outline}{RGB}{50,65,95}

\foreach \x in {-4,-3,...,4} {
  \foreach \y in {-4,-3,...,4} {
    \fill[black] (\x,\y) circle (0.06);
  }
}

\fill[BTwocol, opacity=0.6] (0,0) circle[radius=\rBTwo];
\fill[BOnecol, opacity=0.6] (0,0) circle[radius=\rBOne];
\fill[Acol,    opacity=0.6] (0,0) circle[radius=\rA];

\draw[Outline, thick] (0,0) circle[radius=\rBTwo];
\draw[Outline, thick] (0,0) circle[radius=\rBOne];
\draw[Outline, thick] (0,0) circle[radius=\rA];

\node at (0,0) {\large $A$};
\node at (-0.72,0.42) {\large $B_1$};
\node at (-1.7,1.45) {\large $B_2$};
\node at (-3.55,2.20) {\large $C$};

\end{tikzpicture}
    \caption{The partition of the lattice used in \Thm{thm:gap->LR}.}
    \label{fig:Gap-recovery}
\end{figure}

We now prove the converse direction, that the short-range correlations property implies local stability. The
decay profile in the local stability definition (\Def{def:local-rec}) depends on the
functional form of $g$ in~\Def{def:gapped}.
This dependence is made precise in the theorem below, followed by a discussion on the implications for different choices of $g$.

\begin{theorem}[SRC $\Rightarrow$ local stability] \label{thm:gap->LR} 
    Let $\rho$ be an SRC state, and consider an arbitrary local
    quantum channel given in a Kraus form $\Phi_A(\cdot)=\sum F_k(\cdot)F_k^\dagger$.
    Then for any annulus $B$ of radius $r$ around $A$ and $r_1,r_2\ge0$ such that $r=r_1+r_2$, there is a universal recovery channel $\mcR$ on $AB$ that satisfies
    \begin{equation}\label{eq:gap->LR}
        \sum p_k \norm{\mcR\Big(\frac{1}{p_k}F_k\rho F_k^\dagger\Big)-\rho}_1
        \le C_1 e^{-r_1/\eta} + C_2e^{-r_2/2\zeta},
    \end{equation}
    where $p_k=\Tr(F_k\rho F_k^\dagger)$,
    $C_1=e^{\bigO{|A|}} f(A)$ and
    $C_2= {g(c |A| r_1^D)}^{\frac12}$ for a constant $c$.
\end{theorem}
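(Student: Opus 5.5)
Split $B=B_1B_2$ as in \Fig{fig:Gap-recovery}: $B_1$ is the inner annulus of width $r_1$ around $A$, and $B_2$ is the outer annulus of width $r_2$ separating $AB_1$ from $C$. The recovery map will be a Petz-type recovery on $B_2\to AB_1B_2$, composed with a step that discards $AB_1$. Concretely, set
$$\mcR\EqDef \mcP_{B_2\to AB_1B_2}\circ \Tr_{AB_1},$$
where $\mcP$ is the rotated (universal) Petz recovery map built from $\rho_{AB_1B_2}$ and $\rho_{B_2}$. By the Fawzi--Renner bound (the fact invoked in \cref{app:settings}, used in its forward direction), it satisfies
$$\norm{\mcP(\rho_{B_2C})-\rho}_1\lesssim \sqrt{I(AB_1:C|B_2)_\rho}\le g(AB_1)^{1/2}e^{-r_2/2\zeta}.$$
Here we apply \eqref{def:CMI-decay} with the region $AB_1$, which has size $\lesssim c|A|r_1^D$. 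This is the source of $C_2$. Note that $\mcR$ is independent of $\Phi$ and of the Kraus decomposition, so it is universal, and it acts only on $AB$.

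The proof then reduces to controlling the post-measurement states. Let $\sigma_k\EqDef F_k\rho F_k^\dagger/p_k$. Since $F_k$ acts on $A$ only, $\mcR(\sigma_k)=\mcP(\Tr_{AB_1}\sigma_k)=\mcP((\sigma_k)_{B_2C})$. Contractivity of $\mcP$ and the triangle inequality give
$$\sum_k p_k\norm{\mcR(\sigma_k)-\rho}_1\le \norm{\mcP(\rho_{B_2C})-\rho}_1+\sum_k p_k\norm{(\sigma_k)_{B_2C}-\rho_{B_2C}}_1 .$$
The first term is bounded above. For the second, write $p_k(\sigma_k)_{B_2C}=\Tr_{A}[F_k^\dagger F_k\,\rho_{AB_2C}]$, which is the operator $\Tr_A[(M_k\otimes\Id)\rho]$ with $M_k=F_k^\dagger F_k$. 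Then
$$p_k\norm{(\sigma_k)_{B_2C}-\rho_{B_2C}}_1=\norm{\Tr_A[(M_k-\Tr(M_k\rho))\rho_{AB_2C}]}_1 .$$
By duality this equals $\sup_{\norm{O}\le1}|\langle M_k O\rangle-\langle M_k\rangle\langle O\rangle|$ with $O$ supported on $B_2C$, which is at distance $r_1$ from $A$. This is where \eqref{eq:DOC} enters.

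The main obstacle is the sum over $k$. Applying \eqref{eq:DOC} term by term gives a factor equal to the number of Kraus operators, which can be unbounded. I would get around this by not bounding term by term. Instead, use the trace-norm form: by \eqref{eq:third-equivlence}, $\norm{\rho_{AE}-\rho_A\otimes\rho_E}_1\lesssim d_A f(A)e^{-r_1/\eta}$ with $E=B_2C$. Then
$$\sum_k\norm{\Tr_A[M_k(\rho_{AE}-\rho_A\otimes\rho_E)]}_1\le \norm{\rho_{AE}-\rho_A\otimes\rho_E}_1 ,$$
because the map $X\mapsto\sum_k\ket{k}\bra{k}\otimes\Tr_A[M_kX]$ is a quantum channel: it is the measurement $\{M_k\}$ with $\sum_k M_k=\Id$. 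Monotonicity of the trace norm under this channel, together with the block-diagonal structure, gives exactly the sum. This also handles the replacement of $\langle M_k\rangle\rho_E$ by $\Tr_A[M_k\rho_A\otimes\rho_E]$, which are equal.

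The cost is a factor $d_A=e^{\bigO{|A|}}$, giving $C_1=e^{\bigO{|A|}}f(A)$. Combining the two terms yields \eqref{eq:gap->LR}.
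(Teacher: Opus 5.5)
Your proof is correct and is essentially the paper's. It uses the same split $B=B_1B_2$ and the same universal recovery $\mcP_{B_2\to AB_1B_2}\circ\mathrm{Tr}_{AB_1}$, controlled by $I(AB_1{:}C|B_2)\le g(AB_1)e^{-r_2/\zeta}$, and your measurement-channel bound $\sum_k p_k\norm{(\sigma_k)_{B_2C}-\rho_{B_2C}}_1\le\norm{\rho_{AB_2C}-\rho_A\otimes\rho_{B_2C}}_1$ is exactly the step in the paper's main-text sketch. The appendix instead goes term by term, and the obstacle you anticipated does not occur there: applying \eqref{eq:DOC} to $F_k^\dagger F_k/\norm{F_k^\dagger F_k}_\infty$ leaves a weight $\norm{F_k^\dagger F_k}_\infty\le\mathrm{Tr}_A(F_k^\dagger F_k)$, these weights sum to $d_A$ however many Kraus operators there are, and this gives the same $e^{\bigO{|A|}}$ prefactor without invoking \eqref{eq:third-equivlence}.
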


Before we sketch the proof, let us discuss some of its implications.
Note that r.h.s. of \cref{eq:gap->LR} contains both $r_1$ and $r_2$ explicitly, and we can optimize the bound by choosing suitable $r_1, r_2$ subject to the constraint $r=r_1+r_2$.

First, if $g(x)=\poly(x)$, as expected from high-temperature 
Gibbs states~\cite{ref:bakshi2025dobrushin}, we may choose both $r_1$ and $r_2$ to be linear in $r$.
For example, we may choose $r_1$ and $r_2$ so that the exponential decay rates match:
\begin{equation}
\label{eq:balanced-split}
r_1 = \frac{\eta}{\eta+2\zeta}\,r,
\quad
r_2 = \frac{2\zeta}{\eta+2\zeta}\,r .
\end{equation}
With this choice, we have
\begin{align}
\sum p_k \norm{\mcR\Big(\frac{1}{p_k}F_k\rho F_k^\dagger\Big)-\rho}_1 
\le  \poly\big(
d^{|A|},
 |A|\,r^D
\big)\cdot e^{-r/(\eta+2\zeta)}.
\end{align}
In particular, for a fixed $A$, the bound decays exponentially in $r$.

However, if we only know $g(x)=e^{O(x)}$, which is the state-of-the-art result for Gibbs state~\cite{ref:chen2025markovian}\footnote{The result in Ref.~\cite{ref:chen2025markovian} also contains a prefactor $|C|$, which can be eliminated using fact \ref{fact:FoR->CMI}, making the local Markovian property holds even for unbounded $C$.},
we have that $C_2=\exp\big(\bigO{A r_1^D}\big)$.
In this case, choosing $r_1=\bigO{r}$ as in \cref{eq:balanced-split} would lead to superexponential growth in $r$.
Instead, we choose
\begin{equation}
\label{eq:sublinear-split}
r_1 =  (\alpha r)^{1/D},
\qquad
r_2 = r - r_1,
\end{equation}
with $\alpha >0$.
Choosing $\alpha$ small enough, we would find that the second term in~\eqref{eq:gap->LR}
drops exponentially in $r$
while the first term satisfies
\begin{align*}
d_A f(A)\,e^{-r_1/\eta}
=
d_A f(A)\,e^{-(\alpha r)^{1/D}/\eta}.
\end{align*}
We conclude that for $D>1$, \Thm{thm:gap->LR} 
produces a stretched exponential decay bound on local stability.

\begin{proof}[Sketch of Proof]
    Fixing $A$ and $C$, divide $B$ into two consecutive annular regions, denoted by $B_1$ and $B_2$, as illustrated in \Fig{fig:Gap-recovery}. 
    Assuming 
    \begin{equation}
        \norm{\rho_{AB_2C}-\rho_{A}\otimes\rho_{B_2C}}_1 \leq \epsilon_1,
    \end{equation}
    one can show that
    \begin{equation}\label{eq:gap2LR2}
        \sum_k  p_k \norm{\rho_{k,B_2C}-\rho_{B_2C}}_1\leq \epsilon_1,
    \end{equation}
    where $p_k=\Tr(F_k\rho F_k^\dagger)$ and $\rho _{k,B_2C}=\Tr_{AB_1}(\frac{1}{p_k}F_k\rho F_k^\dagger)$.
    Assuming
    \begin{equation}
        \condinf{AB_1}{C}{B_2}_\rho \leq \epsilon_2, 
    \end{equation}
    there exists a recovery map $\tilde{R}: B_2\to AB$ such that~\cite{ref:fawzi2015approximate}:
    \begin{equation}\label{eq:gap2LR4}
        \norm{\tilde{\mcR}(\rho_{B_2C})-\rho}_1\lesssim \sqrt{\epsilon_2}.
    \end{equation}
    Applying $\tilde\mcR$ to \cref{eq:gap2LR2}, the triangle inequality and \cref{eq:gap2LR4} imply that:
    \begin{equation}
        \sum_k p_k\norm{\tilde{\mcR}(\rho_{k,B_2C})-\rho}_1\lesssim \epsilon_1+\sqrt{\epsilon_2}.
    \end{equation}
    The optimal upper bound is given by plugging in $\epsilon_1$ and $\epsilon_2$ given by \cref{def:gapped} and tuning the division between $B_1$ and $B_2$.
    The full proof is presented in Appendix~\ref{app:recovery}.
    \end{proof}

{\centering \textbf{\small Comments on correlation measures} \par}
\vspace{1em}

In the local stability $\Rightarrow$ SRC direction, we get \cref{eq:LR->DOC}; 
on the other hand, for the reverse direction, the $C_1$ in \cref{eq:gap->LR} contains an $e^{O(|A|)}$ factor.
This mismatch ultimately arises from converting the operator-correlation bound in \Def{def:gapped} into a trace-norm estimate. 
If one replaces \cref{eq:DOC} by a bound of the form
\begin{align}
    \norm{\rho_{AC} - \rho_A \otimes \rho_C}_1 \le f(A)e^{-\mathrm{dist}(A,C)/\eta},
\end{align}
then the prefactor $e^{O(|A|)}$ in $C_1$ in \cref{eq:gap->LR} can be removed.
However, we would instead need an $A$-dependent factor in \cref{eq:LR->DOC}.

The mismatch can be eliminated by choosing the one-way LOCC norm~\cite{ref:matthews2009locc}, characterizing the maximum bias achievable by one-way LOCC measurements, 
as the correlation measure. Namely, by replacing \cref{eq:DOC} with
\begin{align}
\norm{\rho_{AC} - \rho_A \otimes \rho_C}_{\text{LOCC}\to} \le f(A)\,e^{-\mathrm{dist}(A,C)/\eta},
\end{align}
no extra factors would appear in either \cref{eq:LR->DOC} or \cref{eq:gap->LR}.
Indeed, the monotonicity of the trace norm applied to \cref{eq:recovery} yields an upper bound on $  \sum_k  p_k \norm{\rho_{k,C}-\rho_C}_1$, the maximization  of which is essentially the one-way LOCC norm.
Conversely, in the proof of \cref{thm:gap->LR}, it suffices to assume \cref{eq:gap2LR2}, which is again given by the one-way LOCC norm.

Similarly, there is an extra factor $|A|+|B|$ in \cref{eq:LR->CMI}, arising from converting between CMI and recovery error.
One may eliminate this factor by replacing CMI by the (logarithmic) fidelity of recovery throughout this work.

\vspace{1em}
{\centering \textbf{\small Comment on symmetries} \par}
\vspace{1em}

The properties of short-range correlations and local stability, as well as their equivalence, can be refined to accommodate symmetries by restricting the local operators in both properties to be symmetric.  As an example, even the low-temperature, ordered phase of the 2d Ising model is SRC, locally stable with respect to Ising symmetric local operators. 

\subsection{Local implementability of postselection}

In this subsection, we explore another consequence of the short-range correlation property.
Namely, we demonstrate that local postselections on an SRC state can be physically realized through local quantum channels.
This provides an alternative perspective on the state's local stability.
In principle, since local stability ensures the local recovery of any measurement trajectory, one could adopt a brute-force strategy of successive measurement and recovery until the target outcome is obtained. 

Below, we develop a more systematic approach based on the following lemma, which establishes that quantum operations applied to Markov chains preserve the Markov property on average.

\begin{lemma} \label{lem:measurement-CMI}
    Consider a tripartite state $\rho_{ABC}$ and a family of Kraus operators $\{F_k\}$ on subsystem $A$ ($\sum F_k^\dagger F_k=1$).
    Denote $\sigma_k=\frac{1}{p_k}F_k \rho F_k^\dagger$ where $p_k=\Tr(F_k \rho F_k^\dagger)$.
    Then
    \begin{equation}
    \sum_k p_k \,\condinf A C B_{\sigma_k} \leq \condinf{A}{C}{B}_{\rho}.
    \end{equation}
\end{lemma}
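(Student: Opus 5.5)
The plan is to encode the measurement outcome in a classical register, so that the averaged conditional mutual information becomes a single CMI of one state. Then I will bound that CMI using the chain rule and data processing.

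\emph{Step 1: the classical register.} Introduce a register $K$ with orthonormal basis $\{|k\rangle\}$ and the quantum instrument $\mathcal{E}_{A\to AK}(X)=\sum_k F_k X F_k^\dagger\otimes|k\rangle\langle k|_K$. Since $\sum_k F_k^\dagger F_k=\Id$, this is a CPTP map acting only on $A$. Set
\begin{equation*}
\omega_{AKBC}=\mathcal{E}(\rho_{ABC})=\sum_k p_k\,\sigma_{k,ABC}\otimes|k\rangle\langle k|_K .
\end{equation*}
Conditioning on a classical register gives a convex combination, so
\begin{equation*}
\condinf{A}{C}{BK}_\omega=\sum_k p_k\,\condinf{A}{C}{B}_{\sigma_k},
\end{equation*}
which is exactly the left-hand side of the lemma.

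\emph{Step 2: chain rule.} By the chain rule,
\begin{equation*}
\condinf{AK}{C}{B}_\omega=\condinf{K}{C}{B}_\omega+\condinf{A}{C}{BK}_\omega .
\end{equation*}
Strong subadditivity gives $\condinf{K}{C}{B}_\omega\ge 0$, and therefore $\condinf{A}{C}{BK}_\omega\le\condinf{AK}{C}{B}_\omega$.

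\emph{Step 3: data processing.} Write the conditional mutual information as
\begin{equation*}
\condinf{AK}{C}{B}_\omega=\mutinf{C}{AKB}_\omega-\mutinf{C}{B}_\omega .
\end{equation*}
The map $\mathcal{E}$ acts only on $A$. Hence $\omega_{BC}=\rho_{BC}$, so the term $\mutinf{C}{B}$ is unchanged. Mutual information is monotone under local channels, so $\mutinf{C}{AKB}_\omega\le\mutinf{C}{AB}_\rho$. Together these give $\condinf{AK}{C}{B}_\omega\le\condinf{A}{C}{B}_\rho$, and combining with Steps 1 and 2 proves the lemma.

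The only step needing care is Step 3. Monotonicity of CMI under channels on the non-conditioning system $A$ (rather than on the conditioning system $B$) is what makes the argument work, and the decomposition via $\mutinf{C}{AB}-\mutinf{C}{B}$ handles it directly. Everything else is bookkeeping with the classical register.
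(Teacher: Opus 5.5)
Your proposal is correct and follows essentially the same route as the paper: store the outcome in a classical register so the averaged CMI becomes $\condinf{A}{C}{BK}$, use the chain rule with $\condinf{K}{C}{B}\ge 0$, and apply data processing for the channel acting on $A$. The only difference is that you justify that last data-processing step explicitly, by writing the CMI as $\mutinf{C}{AKB}-\mutinf{C}{B}$, whereas the paper simply cites the data processing inequality.
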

\begin{proof}
    We introduce a classical ancilla $X$ that stores the outcome, so that there exists a quantum channel $A\to AX$ that converts
\begin{equation}
\rho_{ABC}\mapsto \rho'_{XABC} = \sum_k p_k \ketbra{k}_X \otimes \sigma_{k, ABC}.
\end{equation}
Explicit calculation shows that:
\begin{equation}
\condinf A C {BX}_{\rho'} = \sum_k p_k \,\condinf A C B_{\sigma_k}.
\end{equation}
So we are left to show that $\condinf A C {BX}_{\rho'} \le \condinf A C B_{\rho}$.
This is achieved by combining the data processing inequality:
\begin{equation}
\condinf {XA} C B_{\rho'} \leq \condinf A C B_\rho
\end{equation}
with the chain rule $\condinf{XA}{C}{B}_{\rho'} = \condinf{X}{C}{B}_{\rho'} + I(A:C|BX)_{\rho'}$,
hence,
\begin{equation}
\condinf A C{BX}_{\rho'} \leq \condinf{XA}CB_{\rho'}.
\end{equation}
\end{proof}

\begin{theorem}[local implementability of postselection] \label{thm:gap->LIM}
    Consider a state $\rho$ and a local quantum channel in the Kraus form $\Phi_A(\cdot)=\sum F_k(\cdot)F_k^\dagger$.
    If $\rho$ is SRC, then for any buffering region $B$ of radius $r=r_1+r_2$,
    there exists a (non-universal) family of local quantum channels $\{\Phi_k\}$ on $AB$ such that  
    \begin{equation} \label{eq:local-implementability}
        \sum p_k \norm{\Phi_k(\rho)-\sigma_k}_1
        \leq  C_1 e^{-r_1/\eta} + C_2e^{-r_2/2\zeta} ,
    \end{equation}
    where $p_k, C_1$ and $C_2$ are as defined in \cref{thm:gap->LR}, and $\sigma_k=\frac1{p_k}F_k\rho F_k^\dagger$.
\end{theorem}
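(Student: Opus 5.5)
Our plan is to construct $\Phi_k$ by first preparing, locally on $AB_1$, a state whose marginal on $AB_1$ matches that of $\sigma_k$, and then using the Markov property of $\sigma_k$ (guaranteed on average by \Lem{lem:measurement-CMI}) to stitch this onto the untouched exterior through a recovery map acting on $AB$. As in the proof of \Thm{thm:gap->LR}, split $B$ into an inner annulus $B_1$ of width $r_1$ and an outer annulus $B_2$ of width $r_2$. Let $\sigma_k=\frac1{p_k}F_k\rho F_k^\dagger$.

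\textbf{Step 1: the exterior is almost unchanged.} Since $F_k$ acts on $A$, we have $\Tr_{AB_1}\sigma_k=\rho_{k,B_2C}$. The estimate \cref{eq:gap2LR2} from the proof of \Thm{thm:gap->LR} gives
\begin{equation*}
\sum_k p_k\norm{\rho_{k,B_2C}-\rho_{B_2C}}_1\le\epsilon_1\le C_1e^{-r_1/\eta},
\end{equation*}
where $\epsilon_1$ bounds $\norm{\rho_{AB_2C}-\rho_A\otimes\rho_{B_2C}}_1$ and is controlled by \cref{eq:DOC} together with \cref{eq:third-equivlence}. This is where the factor $e^{\bigO{|A|}}$ in $C_1$ enters.

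\textbf{Step 2: the post-selected states are approximately Markov.} Apply \Lem{lem:measurement-CMI} with tripartition $(AB_1,B_2,C)$; this is legitimate because the $F_k$ act only on $A\subseteq AB_1$. It gives
\begin{equation*}
\sum_k p_k\, I(AB_1{:}C|B_2)_{\sigma_k}\le I(AB_1{:}C|B_2)_\rho=\epsilon_2 ,
\end{equation*}
where $\epsilon_2\le g(c|A|r_1^D)e^{-r_2/\zeta}$ by \cref{def:CMI-decay}. For each $k$, the Fawzi--Renner theorem~\cite{ref:fawzi2015approximate} then supplies a recovery map $\mcR_k:B_2\to AB$ with
\begin{equation*}
\norm{\mcR_k(\sigma_{k,B_2C})-\sigma_k}_1\lesssim\sqrt{I(AB_1{:}C|B_2)_{\sigma_k}} .
\end{equation*}
Concavity of the square root (Jensen's inequality) gives
\begin{equation*}
\sum_k p_k\sqrt{I_{\sigma_k}}\le\sqrt{\epsilon_2}\le C_2e^{-r_2/2\zeta}.
\end{equation*}
The price of using one map per outcome is that the $\mcR_k$ now depend on $k$, which is why the family in the statement is non-universal.

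\textbf{Step 3: define $\Phi_k$.} Set $\Phi_k=\mcR_k\circ\Tr_{AB_1}$, i.e.\ discard $AB_1$ and apply $\mcR_k$ on $B_2$. This is a channel supported on $AB$, and $\Phi_k(\rho)=\mcR_k(\rho_{B_2C})$. By the triangle inequality and contractivity of the trace norm under $\mcR_k$,
\begin{equation*}
\norm{\Phi_k(\rho)-\sigma_k}_1\le\norm{\rho_{B_2C}-\rho_{k,B_2C}}_1+\norm{\mcR_k(\rho_{k,B_2C})-\sigma_k}_1 .
\end{equation*}
Averaging over $p_k$ and inserting the bounds from Steps 1 and 2 yields \cref{eq:local-implementability}, with the same constants as in \Thm{thm:gap->LR}.

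\textbf{Main obstacle.} The delicate point is Step 2: controlling the Markov property of each individual, possibly low-probability, trajectory $\sigma_k$ rather than of $\rho$ itself. This is exactly what \Lem{lem:measurement-CMI} together with Jensen's inequality handles, and it requires that the CMI bound for $\rho$ be taken with $AB_1$, of size $\sim|A|r_1^D$, as the first party. That is the source of the argument $c|A|r_1^D$ of $g$ in $C_2$, which we would check carefully. Step 1 is inherited directly from \Thm{thm:gap->LR}.
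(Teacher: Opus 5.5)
Your proposal is correct and follows the paper's proof essentially step for step: discard $AB_1$, apply the Fawzi--Renner recovery map $\mcR_k:B_2\to AB$ built for $\sigma_k$, and split the error by the triangle inequality and data processing into $\norm{\rho_{B_2C}-\sigma_{k,B_2C}}_1$, which is controlled on average by \cref{eq:gap2LR2}, and the recovery error, which is controlled via Jensen's inequality and \Lem{lem:measurement-CMI} on the tripartition $AB_1$--$B_2$--$C$. The one mismatch is cosmetic: your opening sentence describes separately preparing the $AB_1$ marginal of $\sigma_k$, but your actual Step 3, like the paper's construction, lets $\mcR_k$ reconstruct all of $AB$ from $B_2$, so that preliminary step never occurs.
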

The above theorem shows that SRC is sufficient for local implementability of postselection.
Together with \cref{thm:gap->LR}, this shows that an SRC state $\rho$ is two-way connected to trajectory states $\sigma_k$ via local quantum channels.

\begin{proof}
    Partition the lattice to $AB_1B_2C$ as in \cref{fig:Gap-recovery}. 
    Define $\Phi_k: AB\to AB$ as the map obtained by tracing out $AB_1$ followed by the recovery channel $\mcR_k:B_2\to AB$ associated with $\sigma_k$.
    Thus, we get
    \begin{align}
         \norm{\Phi_k(\rho)-\sigma_k}_1
        & \le \norm{\rho_{B_2C}- \sigma_{k,B_2C} }_1
        +\norm{\mcR_k(\sigma_{k,B_2C})-\sigma_k}_1,
    \end{align}
    where we used the triangle inequality and DPI with $\mcR_k$.
    To get a bound on \cref{eq:local-implementability} we take the sum over $k$. 
    According to \cref{eq:gap2LR2} the first term is bounded by
        $\norm{\rho_{AB_2C}-\rho_A\otimes\rho_{B_2C}}_1$.
    The second term is bounded (up to a constant prefactor) by $\sum_k p_k \big(\condinf{AB_1}{C}{B_2}_{\sigma_k}\big)^{\frac12}$.
    Using concavity of the square root together with \Lem{lem:measurement-CMI},
    this is further bounded by $\big(\condinf {AB_1}{C}{B_2}_\rho \big)^{\frac12}$.
    Finally, the r.h.s of \cref{eq:local-implementability} follows from the definition of SRC as seen in the proof of \cref{thm:gap->LR}.
\end{proof}

While SRC is sufficient for local implementability, in Appendix~\ref{app:countereg}, we provide a counterexample showing that it is not necessary.
On the other hand, while decay of correlations is necessary for local implementability, another counterexamples appendix \ref{app:countereg} shows that it is not sufficient. 
Therefore, local implementability of postselection is a property that sits strictly in between decay of ordinary correlations and our definition of SRC.

\subsection{Phase invariance}
We now show that local stability is a property of an entire phase of matter.  \cRef{ref:sang2025mixedstatephases} proposed a refined definition of mixed-state phase based on the notion of locally reversible circuits: 
a state $\rho$ is said to be in the same phase as $\rho '$ if they are connected by locally reversible circuits\footnote{We note that generalizing the notion of local reversibility from circuits to continuous evolution is an open problem, so here we focus on the circuit definition.}-- if there exists a shallow circuit channel  $\mcF:\rho \mapsto \rho' $ and a circuit channel $\; \mcD:\rho' \mapsto \rho$,
such that $\mcD$ reverses the action of $\mcF$ on $\rho$ gatewise. 
Formally,
    \begin{align} \label{eq:LR-circuit}
        \mcF = \mcF_T \dots \mcF_1, \qquad & \mcD = \mcD_1 \dots \mcD_T, \nonumber \\
        \mcF_t = \prod_{\ell} \mcF_{t,\ell}, \qquad & \mcD_t = \prod_{\ell} \mcD_{t,\ell}, \\
        \rho' = \mcF(\rho), \qquad & \rho = \mcD(\rho'), \nonumber \\
        \quad \bigl(\mcD_{k,\ell} \circ \mcF_{k,\ell}\bigr) \mcF_{k-1} \cdots \mcF_1(\rho) &= \mcF_{k-1} \cdots \mcF_1(\rho),~~(\forall k\le T). \nonumber
\end{align}
    Here, each $\mcF_t$ (and $\mcD_t$) is a layer of non-overlapping local channels, each of which act on a few contiguous sites.
    For simplicity of illustration, we assume that $\mcF_{k,\ell}$ and $\mcD_{k,\ell}$ share precisely the same support. In general, the gates in $\mcD$ may have slightly larger supports; our arguments remain essentially unchanged under this more general scenario.

It is shown \cite{yi2026universal} that decay of correlations and CMI are invariant under locally reversible circuits.
Therefore, our notion of SRC (\Def{def:gapped}), and hence 
local stability, are also universal properties of a phase. 
Here we provide an alternative and direct proof for the latter.
\begin{theorem}[Phase invariance of local stability] 
\label{Thm:invariance}
    Suppose that $\rho$ and $\rho'$ are connected by a locally reversible shallow circuit.
    Then $\rho$ is locally stable if and only if $\rho '$
    is.
\end{theorem}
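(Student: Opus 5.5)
The plan is to prove one direction directly: if $\rho$ is locally stable, then so is $\rho'=\mcF(\rho)$. The converse needs separate care, because the relation in \cref{eq:LR-circuit} is not manifestly symmetric ($\mcF$ need not reverse $\mcD$ gatewise). For the converse I would either rerun the same argument with the roles of the two circuits adjusted, or fall back on \cref{thm:LR->gap,thm:gap->LR} together with the invariance of SRC from \cite{yi2026universal}. The core of the direct argument is to \emph{pull back} a local Kraus operator acting on $\rho'$ to a local operation acting on $\rho$, recover there using the local stability of $\rho$, and then \emph{push forward} again.

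\textbf{Step 1 (pull-back).} Fix a region $A$ and a Kraus decomposition $\{F_k\}$ on $A$, and let $\sigma'_k=F_k\rho' F_k^\dagger/p_k$. Write $\rho_t=\mcF_t\cdots\mcF_1(\rho)$. I would apply $\mcD=\mcD_1\cdots\mcD_T$ to $F_k\rho' F_k^\dagger$ one layer at a time. In layer $T$, each gate $\mcD_{T,\ell}$ whose support is disjoint from $A$ commutes with $F_k$. Using $\mcD_{T,\ell}\circ\mcF_{T,\ell}(\rho_{T-1})=\rho_{T-1}$, which still holds gate by gate after tensoring with the other non-overlapping gates, these gates undo their partners. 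The gates meeting $A$ are absorbed, together with $F_k$ and the corresponding $\mcF_{T,\ell}$, into a CP map on an enlarged region $A_1$. Iterating over the $T$ layers shows that
\begin{equation*}
\mcD(F_k\rho' F_k^\dagger)=\mcG_k(\rho),
\end{equation*}
where each $\mcG_k$ is a CP map supported on the light-cone region $A_T\supseteq A$ of radius $O(T)$, with $\Tr\mcG_k(\rho)=p_k$. Completing $\{\mcG_k\}$ to an instrument on $A_T$ and splitting each $\mcG_k$ into its Kraus operators, the local stability of $\rho$ (\Def{def:local-rec}) yields a universal recovery $\mcR$ on $A_TB$ with
\begin{equation*}
\sum_k p_k\norm{\mcR\big(\mcG_k(\rho)/p_k\big)-\rho}_1\le h(A_T)e^{-r/\xi},
\end{equation*}
using convexity of the trace norm to pass from the Kraus operators of $\mcG_k$ to $\mcG_k$ itself.

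\textbf{Step 2 (push-forward and localization).} The natural candidate recovery for $\rho'$ is $\mcF\circ\mcR\circ\mcD$. Since $\mcF$ is a channel and the trace norm is contractive, it achieves the same averaged error, because $\mcF(\rho)=\rho'$. However, this map is global, so the main obstacle is to replace it by a channel $\mcR'$ on $AB'$ with $B'$ only $O(T)$ larger than $B$. What I would try first is to let $\mcR'$ consist only of the gates of $\mcD$ and of $\mcF$ that lie in the backward and forward light cones of $\operatorname{supp}\mcR$, which has radius $r+O(T)$, sandwiching $\mcR$.

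To justify this, I would redo the layer-by-layer argument of Step 1 for the composite map. Far from $\operatorname{supp}\mcR$, the state $\sigma'_k$ agrees with $\rho'$ layer by layer, and the gatewise reversibility identities show that applying the remote gates of $\mcD$ followed by those of $\mcF$ leaves the relevant intermediate states $\rho_t$ unchanged. Hence omitting these gates changes nothing on the relevant states, and $\mcR'(\sigma'_k)$ equals $\mcF\circ\mcR\circ\mcD(\sigma'_k)$ up to the bookkeeping of the boundary gates. The delicate point is that the reversal identities hold only on the specific states $\rho_{t-1}$, not as operator identities. I therefore need to check that, outside the cone, the states produced along the way coincide with the $\rho_t$ on the supports of the remote gates, and that the approximation error (rather than exact equality) near the boundary does not accumulate beyond the $h(A_T)e^{-r/\xi}$ bound.

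\textbf{Conclusion.} This gives local stability of $\rho'$ with $h'(A)=h(A_T)$, with $r$ shifted by $O(T)$, and with the same $\xi$.
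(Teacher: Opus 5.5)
Your Step 1 and the truncated map in Step 2 are the paper's construction. Your $\mathcal{G}_k$ is the paper's $\mcM_k=\mcD_{A_\uparrow}\circ\mcM'_k\circ\mcF_{A_\downarrow}$, and your $\mcR'$ is $\mcF_{\mcR_\uparrow}\circ\mcR\circ\mcD_{\mcR_\downarrow}$. However, the step you call the main obstacle is left as a checklist, and that checklist is partly wrong.

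There is no boundary error to control, because the localization is exact. Agreement with $\rho_t$ on the supports of the remote gates is also not the right criterion: a completely depolarizing gate fixes $\frac{\Id}{2}\otimes\tau$ but not a Bell pair with the same one-qubit marginal. What you need is that each remote pair $\mcD_{t,\ell}\circ\mcF_{t,\ell}$ can be commuted until it acts directly on $\rho_{t-1}$.

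The one ingredient beyond your Step 1 is as follows. Let $S$ be the forward light cone of $\supp\mcR$ under $\mcF$, with $r\gtrsim T$ so that the light cone of $A$ lies inside $\supp\mcR$. Write $\mcF=\mcF_S\circ\mcF_{S^c}$, and let $\mcD_S$ use the same gate labels. This is legitimate because $S$ is future-closed; by construction, no gate of $S^c$ meets $\supp\mcR$ or an $S$-gate of an earlier layer. Then, with $\mcM'_k=F_k(\cdot)F_k^\dagger$,
$$\mcR'(F_k\rho'F_k^\dagger)=\mcF_S\,\mcR\,(\mcD_S\,\mcM'_k\,\mcF_S)\,\mcF_{S^c}(\rho).$$
Now peel layers $t=T,\dots,1$ as in your Step 1. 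At each layer, first commute the $S^c$ gates of layers $>t$ past everything applied after them, so the state entering layer $t$ is $\rho_{t-1}$ acted on only by layer-$t$ gates of $S^c$. This cancels every remote pair and leaves $\mcF_{S^c}\circ\mcM_k(\rho)$. Since $\mcF_{S^c}$ commutes with $\mcR\circ\mcM_k$, you get $\mcR'(F_k\rho'F_k^\dagger)=\mcF\circ\mcR\circ\mcM_k(\rho)$ identically. Contractivity of $\mcF$ then leaves only the recovery error of $\mcR$.

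For the converse, your claim that $\mcF$ need not reverse $\mcD$ gatewise is incorrect under \cref{eq:LR-circuit} with equal supports. Applying the gatewise identities one gate at a time gives $\mcD_t(\rho_t)=\rho_{t-1}$, hence $\mcD_{t+1}\cdots\mcD_T(\rho')=\rho_t$. Moreover, $\mcD_{t,\ell}(\rho_t)=\prod_{\ell'\neq\ell}\mcF_{t,\ell'}(\rho_{t-1})$, hence $\mcF_{t,\ell}\circ\mcD_{t,\ell}(\rho_t)=\rho_t$. So $(\rho',\mcD,\mcF)$ satisfies the same hypotheses with the layer order reversed, and the converse is the same argument. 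The paper's proof writes out only the direction you did.

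Your fallback through \cref{thm:LR->gap,thm:gap->LR} and the invariance of SRC would not suffice. If $h$ grows exponentially in $|A|$, so does the resulting $g$. Then $C_2=g(c|A|r_1^D)^{1/2}$ in \cref{thm:gap->LR} forces $r_1\lesssim r^{1/D}$. For $D>1$ this gives only a stretched-exponential bound, not the exponential one demanded by \cref{def:local-rec}.
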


\begin{figure*}[ht!]
    \centering
    \subfloat[\label{fig:LR-a}Locally reversible circuit between $\rho$ and $\rho '$]{
    \includegraphics[width= 0.45\linewidth]{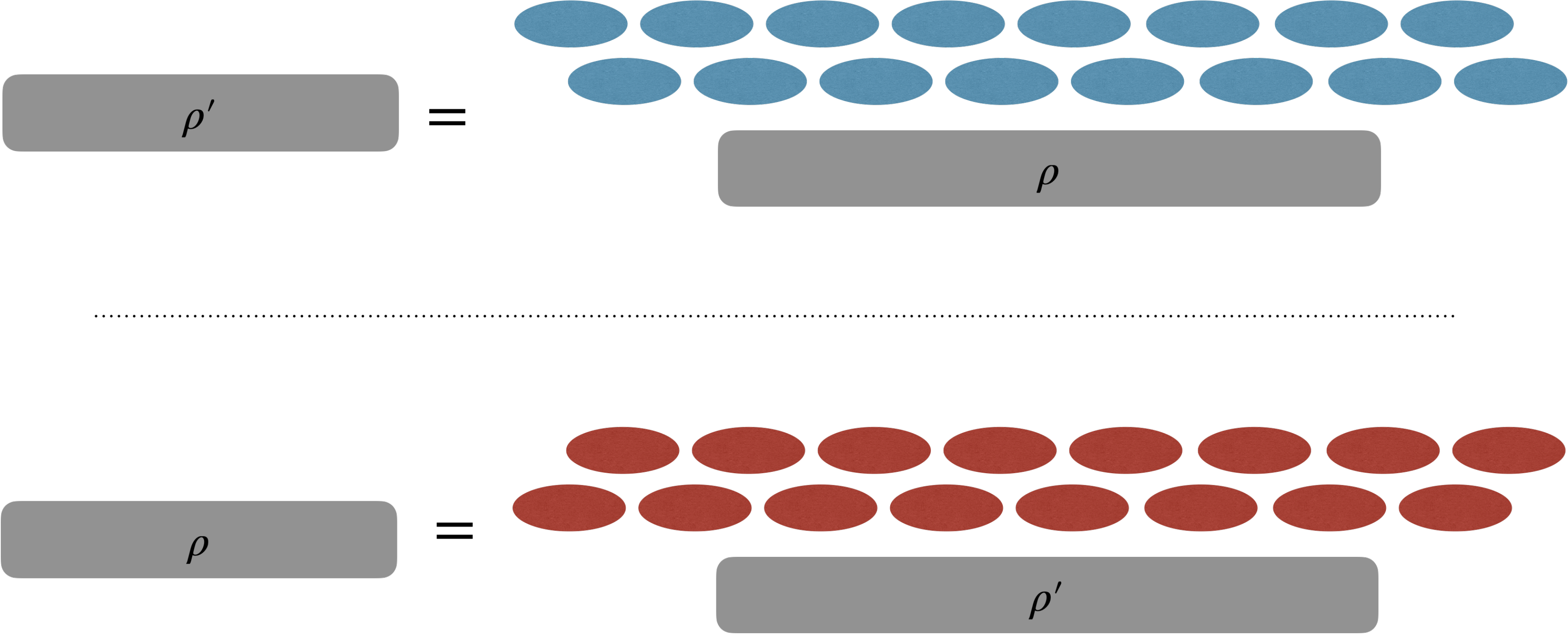}} \hfill
   \subfloat[\label{fig:LR-b}Local recovery channel $\mcR '$]{\includegraphics[width= 0.4\linewidth]{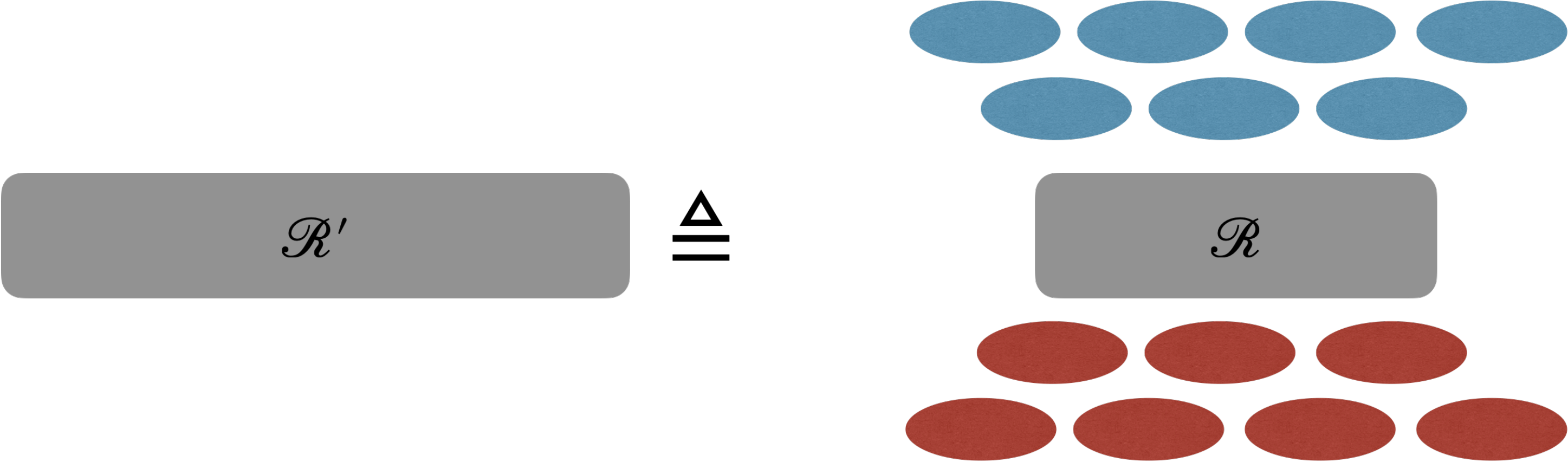}}\\
    \subfloat[\label{fig:LR-c}Proof of recovery by $\mcR '$]{\includegraphics[width= 0.8\linewidth]{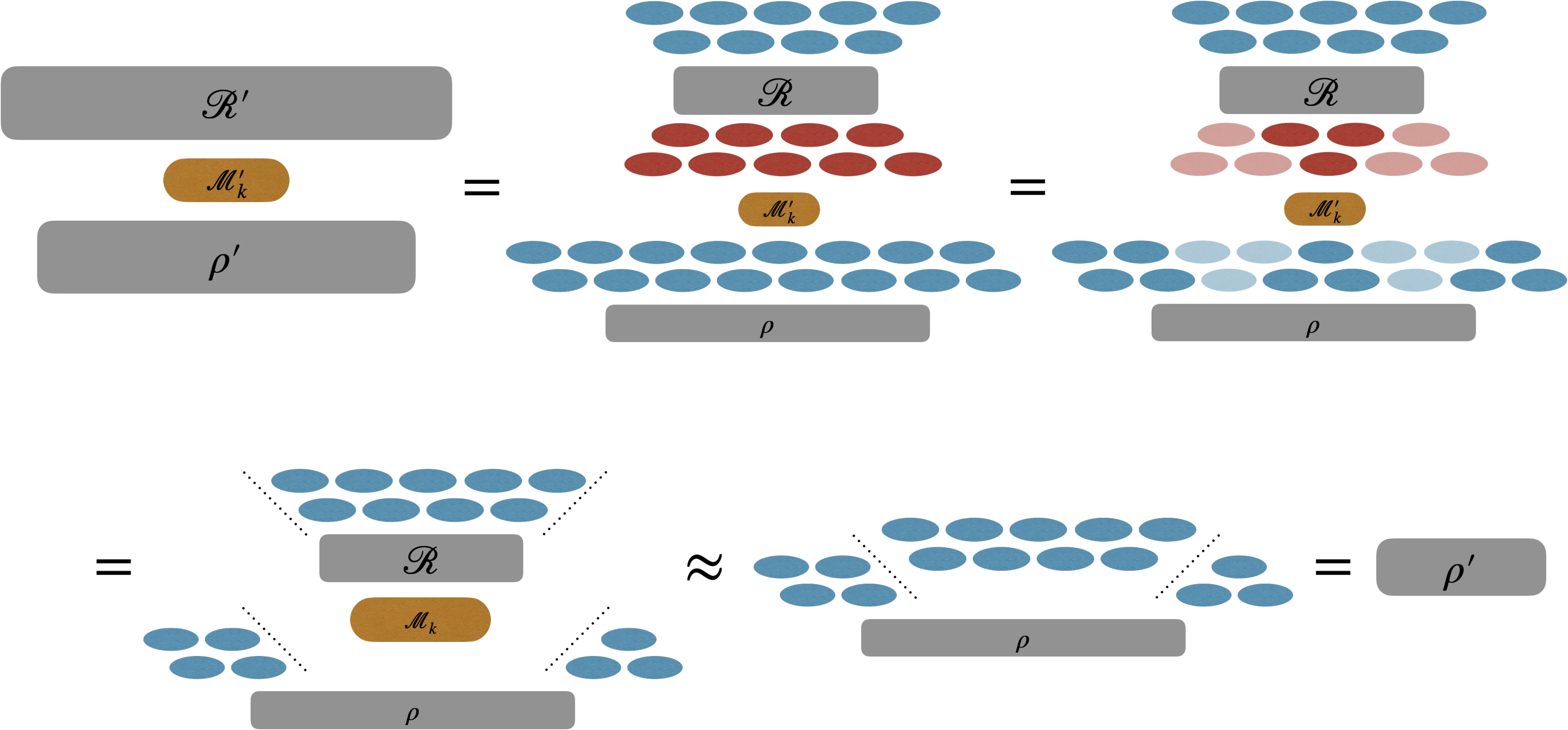}}
    \caption{Schematic proof of \Thm{Thm:invariance}: 
    \ref{fig:LR-a} - Locally reversible circuit $\mcF:\rho\mapsto\rho'$ 
     and the reverse circuit $\mcD$ .
    \ref{fig:LR-b} - Recovery channel $\mcR'$ for $\rho'$ based on      ``dressing'' the recovery channel $\mcR$ for $\rho$.    \ref{fig:LR-c} - $\mcR'$  recovers $\rho'$ from the effect of local operations.}
    \label{fig:LRC}
\end{figure*}
\begin{proof}
    (The proof idea is illustrated in \Fig{fig:LRC}.) 
    Let $\mcF:\rho \mapsto \rho' $ and $\; \mcD:\rho' \mapsto \rho$
    be shallow quantum circuit channels as described in \cref{eq:LR-circuit}.
    Denote their depth as $T$.
    Consider a local channel acting on $\rho'$ and supported on $A$, with a decomposition $\Phi^\prime_A(\cdot)=\sum_k \mcM^\prime_k(\cdot)$ where each $\mcM^\prime_k(\cdot)=F_k (\cdot) F_k^\dagger$ is a single Kraus operation.
    
    To recover such local operations, we first pick a recovery channel $\mcR$ for $\rho$ support around $A$ with a radius (counting from the boundary of $A$) $r\gg T$, as guaranteed by \Def{def:local-rec}, 
    then define a local quantum channel $\mcR'$ as:
    \begin{align} \begin{split}
        \mcR' &\EqDef \Big(\prod_{i\in  \mcR_\uparrow} \mcF_i\Big) 
        \circ \mcR \circ\Big(\prod_{i\in  \mcR_\downarrow} \mcD_i\Big) \\
        & \EqDef  \mcF_{  \mcR_\uparrow}\circ \mcR \circ  \mcD_{  \mcR_\downarrow}.
    \end{split}\end{align}
    Here, $ \mcR_\uparrow $ and $ \mcR_\downarrow $ denotes the forward and backward light cones of $\mcR$ under $\mcF$ and $\mcD$, respectively (they are space-time regions rather than spacial regions, see \Fig{fig:LR-b}).
   We then have:
    \begin{align} \label{eq:why-dressed?}
        \mcR' \mcM^\prime _k(\rho')  
        &=\mcF_{  \mcR_\uparrow}  \mcR    (\mcD_{  \mcR_\downarrow}\mcM'_k   \mcF_{\mcR_\uparrow}) \mcF_{  (\mcR_\uparrow)^c} (\rho)\\
        & = \mcF_{  \mcR_\uparrow} \mcR  \mcM_k \mcF_{  (\mcR_\uparrow)^c} (\rho)
        \nonumber 
        =\mcF \mcR  
        \mcM_k (\rho).
    \end{align}
    Here, $\mcM_k
    =\mcD_{A_\uparrow} \circ 
    \mcM^\prime_k\circ  \mcF_{A_\downarrow}$ is a local operation supported in the light cone of $A$ under $\mcF$, and  $\mcR_\uparrow^c$ is the spacetime complement of $\mcR_\uparrow$.
    The second equality is due to local reversibility  
    and the third equality is due to disjointness of supports (for an illustration, see \cref{fig:LR-c}).
    
    The set $\{\mcM_k \}$ defines a quantum instrument
    acting on $\rho$, which is recoverable by $\mcR$ by assumption:
    \begin{align} \label{eq:dressed-R}
        \sum_k p_k\norm{\frac{1}{p_k}\mcR\circ\mcM_k(\rho)-\rho}_1 
        \le \tilde f(A)\,e^{-(r-T)/\xi},
    \end{align}
    where $\tilde f (A)=f(A^{+T})$, which admits the same scaling as $f$ (here $A^{+T}$ is the $T$-radius neighbor of $A$), and
    \begin{align}
    p_k&=\Tr{\mcM_k(\rho)}
    =\Tr{\mcD_{A_\uparrow} \mcM^\prime_k  \mcF_{A_\downarrow}(\rho)}
     = \Tr{ \mcM^\prime_k  \mcF_{A_\downarrow}(\rho)   }
     \nonumber\\
     &= \Tr{ \mcM^\prime_k  \mcF(\rho) }
     =\Tr{ \mcM^\prime_k(\rho')  }.
    \end{align}
    Here, the third equality is due to the trace preservation of the $\mcD_{A_\downarrow}$ and the fourth equality is due to the light cone property.
    Finally, acting $\mcF$ on \cref{eq:dressed-R}, using monotonicity of the trace distance, we get:
     \begin{align}
         \sum_k p_k \norm{\frac{1}{p_k}\mcR'(F_k \rho' F_k^\dagger) -\rho'}_1 \le \tilde f (A)   e^{-(r-T)/\xi}.
     \end{align}
     Noticing that the radius of $\supp(\mcR')$ is at most $r+T$, this proves that $\mcR'$ serves as a local recovery channel for $\rho'$. 
\end{proof}

\subsection{Local stability via dynamics}
\label{sec:dynamics}

When an SRC state arises dynamically as the steady state of a Lindbladian, a natural question is whether that Lindbladian can serve as the recovery map against local operations.  
One motivation for this question is the use of a truncated Gibbs sampler to recover against partially tracing out a Gibbs state~\cite{ref:chen2025markovian,ref:bakshi2025dobrushin}. 
Such recovery, as well as the one required in \cref{def:local-rec}, requires knowledge of the location on which the quantum operations act.
Moreover, the aforementioned papers address specific cases of Gibbs samplers.
This raises the broader question of how general
such recovery properties are, and whether they 
extend beyond specific constructions to generic Lindblad dynamics.

Here, we argue that, provided that the Liouvillian is reversible and gapped,
the natural global short-time evolution can serve as a recovery map for local stability, 
complementing the Petz map approach in \cref{thm:gap->LR}.

Formally, let $\mcL=\sum_i\mcL_i$ be a local Lindbladian 
satisfying a detailed-balance condition with respect to a unique, full-rank steady state $\rho$. 
Although the quantum detailed-balance exhibits many definitions \cite{ref:Temme2010chi},
our main focus here will be on the KMS condition and the GNS condition as follows:
\begin{align} \label{def:detailed-balance}
    \mcL\circ \Gamma = \Gamma \circ \mcL^\dagger, \qquad \Gamma (A) = \begin{cases}
        A  \rho  & \mathrm{GNS}, \\
        \rho^{\frac 12}A \rho^{\frac 12} & \mathrm{KMS} .
    \end{cases}
\end{align}
Then, for a local channel and its decomposition $\Phi_A(\cdot)=\sum_k F_k(\cdot)F_k^\dagger $, define $\sigma_k=\frac{1}{p_k}F_k\rho F_k^\dagger$ where $p_k=\Tr{F_k\rho F_k^\dagger}$.
We suggest considering the natural time evolution $\mcR_t \EqDef e^{\mcL t}$ as a potential universal recovery map $\sigma_k\mapsto \rho$.

In the following, we establish a standard convergence guarantee
and prove it in Appendix~\ref{app:recovery}.
\begin{prop}[Fast convergence from gap] \label{clm:Var-conv}
    Let $\rho$ be a unique, full-rank steady state of a detailed-balance Lindbladian $\mcL$ with a gap $\gamma$. Then for any state $\tilde \rho$ we get
    \begin{align}
        \norm{e^{\mcL t}(\tilde\rho)-\rho}_1 \lesssim e^{-\gamma t} \times 
        \begin{cases}
            \exp(\frac12\mathrm D_2(\tilde \rho||\rho)) & \mathrm{GNS},\\
            \exp(\frac12\tilde {\mathrm D}_2(\tilde \rho||\rho)) & \mathrm{KMS} ,
        \end{cases}
    \end{align}
    where $\mathrm{D}_2$ is the Petz-Rényi relative entropy
    and $\tilde{\mathrm{D}}_2$ is the sandwiched-Rényi 
relative entropy with $\alpha=2$ (see 
    Appendix~\ref{app:settings} for definitions). 
\end{prop}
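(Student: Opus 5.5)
The plan is the standard weighted-$L^2$ (chi-square) argument. Write the deviation as $\tilde\rho-\rho=\Gamma(X)$ with $X=\Gamma^{-1}(\tilde\rho)-\Id$; this is well defined since $\rho$ is full rank. We work in the Hilbert space of operators with inner product $\langle X,Y\rangle_\Gamma=\Tr(X^\dagger\Gamma(Y))$, which is $\Tr(X^\dagger Y\rho)$ for GNS and $\Tr(X^\dagger\rho^{1/2}Y\rho^{1/2})$ for KMS.

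\textbf{Step 1: reduce to the Heisenberg picture.} From \cref{def:detailed-balance}, $e^{\mcL t}\Gamma=\Gamma e^{\mcL^\dagger t}$, so
\begin{equation}
e^{\mcL t}(\tilde\rho)-\rho=\Gamma\big(e^{\mcL^\dagger t}X\big).
\end{equation}
Detailed balance also makes $\mcL^\dagger$ self-adjoint with respect to $\langle\cdot,\cdot\rangle_\Gamma$. The steady state $\rho$ is unique, so $\ker\mcL^\dagger=\mathrm{span}(\Id)$. Moreover $X$ is orthogonal to $\Id$, since $\langle\Id,X\rangle_\Gamma=\Tr(\tilde\rho)-\Tr(\rho)=0$. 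The gap $\gamma$ is the spectral gap of this self-adjoint operator, so the spectral theorem gives
\begin{equation}
\|e^{\mcL^\dagger t}X\|_\Gamma\le e^{-\gamma t}\|X\|_\Gamma .
\end{equation}

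\textbf{Step 2: pass from the $\Gamma$-norm to the trace norm.} I claim $\|\Gamma(Y)\|_1\le\|Y\|_\Gamma$.

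For KMS, write $\Gamma(Y)=\rho^{1/4}\big(\rho^{1/4}Y\rho^{1/4}\big)\rho^{1/4}$ and apply H\"older with exponents $(4,2,4)$. This gives
\begin{equation}
\|\Gamma(Y)\|_1\le\|\rho^{1/4}\|_4^2\,\|\rho^{1/4}Y\rho^{1/4}\|_2=\|\rho^{1/4}Y\rho^{1/4}\|_2=\|Y\|_\Gamma .
\end{equation}

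For GNS, $\Gamma(Y)=Y\rho=(Y\rho^{1/2})\rho^{1/2}$, and Cauchy--Schwarz (H\"older $(2,2)$) gives $\|Y\rho\|_1\le\|Y\rho^{1/2}\|_2\|\rho^{1/2}\|_2=\|Y\|_\Gamma$.

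Combining with Step 1, $\|e^{\mcL t}\tilde\rho-\rho\|_1\le e^{-\gamma t}\|X\|_\Gamma$.

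\textbf{Step 3: identify $\|X\|_\Gamma^2$ with a R\'enyi divergence.} Expanding the square,
\begin{equation}
\|X\|_\Gamma^2=\Tr\big(\tilde\rho\,\Gamma^{-1}(\tilde\rho)\big)-1 .
\end{equation}
For KMS this equals $\Tr\big[(\rho^{-1/4}\tilde\rho\rho^{-1/4})^2\big]-1=e^{\tilde{\mathrm D}_2(\tilde\rho\|\rho)}-1$. For GNS, $\Gamma^{-1}(\tilde\rho)=\tilde\rho\rho^{-1}$, so this equals $\Tr(\tilde\rho^2\rho^{-1})-1=e^{\mathrm D_2(\tilde\rho\|\rho)}-1$. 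Taking square roots and dropping the $-1$ yields the stated bounds.

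\textbf{Main obstacle.} The GNS case is the delicate one. There $\Gamma$ is not Hermiticity-preserving, and the adjoint in the GNS inner product can fail to be compatible with $\mcL^\dagger$ being self-adjoint unless GNS detailed balance is taken in its standard strong form. That strong form implies that $\mcL$ commutes with the modular operator, and hence that $\mcL^\dagger$ is self-adjoint for both inner products. I would invoke this known fact to justify Step 1 in the GNS case, and also use it to confirm that the gap $\gamma$ is well defined as a real spectral gap.
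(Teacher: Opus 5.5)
Your proof is correct and is essentially the paper's argument: detailed balance moves the evolution to $e^{\mcL^\dagger t}\big(\Gamma^{-1}(\tilde\rho)-\Id\big)$, the gap gives $e^{-2\gamma t}$ decay of the $\Gamma$-weighted variance, and that variance equals $e^{\mathrm D_2}-1$ (GNS) or $e^{\tilde{\mathrm D}_2}-1$ (KMS). The only difference is cosmetic: you bound $\norm{\Gamma(Y)}_1\le\norm{Y}_\Gamma$ directly by H\"older, whereas the paper dualizes with an optimal observable $E_t$ and applies Cauchy--Schwarz to the covariance, which is the same estimate in dual form; and your GNS worry is unnecessary, because the stated condition $\mcL\Gamma=\Gamma\mcL^\dagger$ is literally self-adjointness of $\mcL^\dagger$ for $\langle X,Y\rangle_\Gamma=\Tr(X^\dagger\Gamma(Y))$, which is a genuine inner product for full-rank $\rho$ in both cases.
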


This statement applies to \emph{any} initial state $\tilde\rho$.
It reflects how the relaxation time of a Lindbladian 
is controlled by the deviation of the initial state from the steady state, captured by the Rényi-2 divergences, as well as the spectral gap of the Lindbladian $\gamma$, in direct analogy to the Hamiltonian scenario.
In general, an initial state with exponentially small overlap with $\rho$ 
might lead to relaxation times that scale polynomially
with system size.

In our setting, however, the relevant initial states arise from applying local Kraus outcomes to $\rho$.
If such perturbed states are not significantly different from the steady state, the Lindblad evolution may thermalize quickly and recover the steady state.
Furthermore, if $\mcL$ is additionally frustration-free (as in Gibbs samplers),
the evolution is highly localized around the corresponding patch due to the Lieb-Robinson bound,
and hence can be truncated to that region and made into a recovery map as in \Def{def:local-rec}.
We omit any analysis of such truncation here and refer to \cRef{ref:brandao2015area} for related results.

To formally deduce fast recovery, one would ideally want a mixing bound of the form:
\begin{align} \label{eq:fast-converge}
    \sum_k p_k\norm{\mcR_t(\sigma_k)-\rho}_1 \le C(A)\, e^{-\gamma t}.
\end{align}
For such a bound to hold, 
it is important that $\mathrm D_2(\sigma_k\|\rho)$ or
$\tilde{\mathrm D}_2(\sigma_k\|\rho)$ scale with the size of the region $A$ rather than the full system.
It is reasonable to expect this property to hold for physically relevant SRC states.
Indeed, for such states, \cref{thm:gap->LR,thm:gap->LIM} show that local operations perturb the state only locally around $A$, suggesting that the resulting distance naturally scales with $|A|$. 
Adding another perspective, for SRC states one can also show the existence of a single recovery map $\mcR_{B_2\to B_2C}$ that approximately reconstructs the full states $\rho,\sigma_k$ from their $AB$ marginals $\rho_{AB},\sigma_{k,AB}$, respectively (see \cref{fig:Gap-recovery} for the geometry). 
The data processing inequality would then imply that the distance is captured by the difference localized around $A$ (however, bounding the error term seems to be challenging).
While we do not have a general proof for this scaling, we verify it explicitly
in Appendix~\ref{app:recovery} for several classes of thermal states.
These include commuting Hamiltonians,
one-dimensional systems, and high-temperature regimes.

In appendix \cref{app:detectRecover}, we provide a more explicit global recovery map tailored to commuting Hamiltonians,
based on their local GNS Gibbs sampler and the detectability lemma (see similar constructions in \cRefs{ref:Kastoryano2016commutinggibbssampler,ref:fang2026GibbssamplingDL}).
It follows from stacking the infinite-time projectors onto the local kernels in a tower.

\section{Non-linear Correlators}
\label{sec:non-lin}

\subsection{Definition and basic properties}
Given a state $\rho$ and operator $O$, we consider a family of observables as follows:
\begin{equation}\label{eq:Cpq}
\mcC_{p,q}(O,\rho)
    \EqDef \norm{\rho^{\frac{p}{2}}O \rho^{\frac{q}{2}}}_{2s}
    =
    \left[\Tr\big(\rho^{\frac{p}{2}} O \rho^{q} O^\dagger \rho^{\frac{p}{2}}\big)^{s}\right]^{\frac{1}{2s}},
\end{equation}
where $s=\frac{1}{p+q}$ and $0< p,q\leq 1$\footnote{
Some of the arguments below using concavity/convexity also apply to $(p,q)$ or $(q,p)\in(-1,0]\times[1,2]$. However, we restrict $p,q\geq 0$ to ensure that $\mcC_{p,q}$ behaves well (e.g., does not diverge).
The line $pq=0$ was omitted since $\mcC_{p,q}$ is not continuous and our results below become irrelevant there.}.
Unlike the ordinary expectation value $\expval{O}_\rho=\Tr(\rho O)$, these observables are in general nonlinear in $O$ and $\rho$.

\begin{figure}[]
    \centering
        \begin{tikzpicture}[>=stealth, line cap=round]

  \draw[->] (0,0) -- (4.5,0) node[right] {$p$};
  \draw[->] (0,0) -- (0,4.5) node[above] {$q$};

  \draw (0,0) -- (4,0) -- (4,4) -- (0,4) -- cycle;

  \draw[line width=3pt, green!70!black] (0,0) -- (4,4);
  \draw[line width=3pt, blue!70!black] (0,4) -- (4,0);
  \draw[line width=3pt, magenta] (0,4) -- (4,4);

  \fill (4,4) circle (4pt);

  \path[fill] (2,2.22) -- (1.80,1.82) -- (2.20,1.82) -- cycle;

  \begin{scope}[xshift=4.8cm, yshift=3.9cm]

    \def\dy{0.55}

    \fill (0,0) circle (3pt);
    \node[right] at (0,0) {\;fidelity};

    \path[fill] (0,-\dy) -- (-0.18,-\dy-0.28) -- (0.18,-\dy-0.28) -- cycle;
    \node[right] at (0,-\dy-0.14) {\;R\'enyi-1};

    \draw[line width=3pt, green!70!black]
          (-0.4,-2.3*\dy) -- (0.7,-2.3*\dy);
    \node[right] at (0.7,-2.5*\dy) {$\norm{\rho^{\frac{p}{2}}O \rho^{\frac{p}{2}}}_{\frac 1p}$};

    \draw[line width=3pt, magenta]
          (-0.4,-3.5*\dy) -- (0.7,-3.5*\dy);
    \node[right] at (0.7,-3.5*\dy) {sandwiched R\'enyi};

    \draw[line width=3pt, blue!70!black]
          (-0.4,-4.5*\dy) -- (0.7,-4.5*\dy);
    \node[right] at (0.7,-4.5*\dy) {Petz-R\'enyi};

  \end{scope}

\end{tikzpicture}

    \caption{$\mcC_{p,q}$ for $p,q\in (0,1]^2$.}
    \label{fig:placeholder}
\end{figure}

The family \cref{eq:Cpq} includes most non-linear observables appearing in the previous literature, usually for diagnosing  mixed-state phases in the presence of a global symmetry~\cite{ref:Zhang2026fidelity,ref:lessa2025fidelity,ref:Weinstein2025CPcor,
liu2025diagnosing}.  
Let us review some familiar cases.

First, we set $p=q$ and denote $\mcC_{p,p}$ as $\mathrm F_p$.
For $p=1$, this is the fidelity correlator proposed in \cite{ref:lessa2025fidelity}:
\begin{equation}\label{def:fidcor}
    \mathrm F_1(O,\rho)=\expval{O}_F=\mathrm F(\rho,O\rho O^\dagger),
\end{equation}
where $\mathrm F(\cdot,\cdot)$ is the (unsquared) Uhlmann fidelity.
For $p=\frac{1}{2}$, this is the Renyi-1 correlator~\cite{ref:Weinstein2025CPcor}
(also dubbed Wightman correlator in~\cite{liu2025diagnosing}):
\begin{equation}
    \mathrm F_{\frac{1}{2}}(O,\rho)=\left[\Tr(\sqrt{\rho}O\sqrt{\rho}O^\dagger)\right]^{\frac{1}{2}}.
\end{equation}
For unitary $O$, it equals the square root of the Holevo fidelity between $\rho$ and $O\rho O^\dagger$, analogous to \cref{def:fidcor}.
In general, for $0\leq p\leq 1$~\cite{liu2025diagnosing}, 
\begin{equation}
    \mathrm F_p(O,\rho)=\norm{\rho^{\frac{p}{2}}O \rho^{\frac{p}{2}}}_{\frac{1}{p}}.
\end{equation} 

Second, if we set $p+q=1$, then
\begin{equation}
    \mcC_{1-q,q}(O,\rho)
=    \left[\Tr\big(O \rho^{q} O^\dagger \rho^{1-q}\big)\right]^{\frac{1}{2}}.
\end{equation}
This observable has been considered in \cite{liu2025diagnosing,ref:Huang2025hydro-SWSSB,ref:Kliesch2014locality}.
In particular, if $O$ is unitary, then it equals $\exp(\frac{q-1}{2}\mathrm D_q(O\rho O^\dagger || \rho))$, where $\mathrm D$ is the (Petz-type) Rényi relative entropy~\cite{petz1986quasi}.

Third, besides the (Petz-type) Rényi relative  entropy, we can equally well consider the sandwiched Rényi entropy $\tilde{\mathrm D}$ by setting $q=1$ and defining $p=\frac{1-\alpha}{\alpha}$, then
\begin{equation} \label{eq:sandwich-Reny}
\begin{aligned}
    \mcC_{p,1}(O,\rho)
    &=    \left[\Tr\big(\rho^{\frac{1-\alpha}{2\alpha}} O \rho O^\dagger \rho^{\frac{1-\alpha}{2\alpha}}\big)^{\alpha}\right]^{\frac{1}{2\alpha}}\\
    &=  \exp(\frac{\alpha-1}{2\alpha}\tilde{\mathrm D}_\alpha(O\rho O^\dagger || \rho)).
\end{aligned}
\end{equation}
This type of observable is considered in \cRef{ref:lessa2025fidelity}.

Finally, we comment on the general case. Setting $p=\frac{1-\alpha}{z}$, $q=\frac{\alpha}{z}$ and taking $O$ to be a unitary,
$\mcC_{p,q}$ takes the form of a divergence,
\begin{align} \label{eq:pq->az}
     \mcC_{p,q}(O,\rho) & = \exp\left[\frac{\alpha-1}{2z}\cdot\mathrm{D}_{\alpha,z}(O\rho O^\dagger\| \rho)\right],
    \end{align}
 where $\mathrm{D}_{\alpha,z}$ is the
$\alpha$-$z$-Rényi divergence~\cite{jaksic2011entropic,ref:audenaert2015alpha} (see Appendix~\ref{app:settings}).
    
{~}

We record some useful properties of $\mcC_{p,q}$ as follows. 
Items \ref{prop5:bul-pos}-\ref{prop5:bul-ti} are basic, and item \ref{prop5:bul-log} can be proved succinctly using complex interpolation theory. 
For full proofs, see Appendix~\ref{app:non-lin}.
\begin{prop}\label{prop:Cpqbasic}
    The two-parameter family $\mcC_{p,q}(O,\rho)$ satisfies the following:
    \begin{enumerate}[(1)]
        
        \item \label{prop5:bul-pos} 
        Positivity: $\mcC_{p,q}(O,\rho)\geq 0$. For any $(p,q)$, 
        $\mcC_{p,q}(O,\rho)=0$ if and only if $O(\supp(\rho))\subseteq \supp(\rho)^\perp$;
        
        \item \label{prop5:bul-mon}
        Monotonicity: $\mcC_{p,q}(O,\rho)$ is monotonically non-increasing in both $p$ and $q$;
        
        \item \label{prop5:bul-range} 
        Range: $|\expval{O}_\rho|\leq\mcC_{p,q}(O,\rho)\leq\norm{O}_\infty$;

        \item \label{prop5:bul-sym} 
        Symmetry: $\mcC_{p,q}(O,\rho) = \mcC_{q,p}(O^\dagger,\rho)$;

        \item \label{prop5:bul-mult} 
        Multiplicativity:
        $\mcC_{p,q}(O_A\otimes O_B,\rho_A\otimes \rho_B)
        =\mcC_{p,q}(O_A,\rho_A)\cdot\mcC_{p,q}( O_B,\rho_B)$
    
        \item \label{prop5:bul-ti}
        Continuity: $\abs{\mcC_{p,q}(O,\rho)-\mcC_{p,q}(O,\sigma)} \lesssim
        \norm{O}_\infty D_B(\rho,\sigma)^{\min\{p,q\}}$, where 
        $\bur{\rho}{\sigma}$ is the Bures metric;

        \item \label{prop5:bul-log}
        Log-convexity in $(p,q)$: the logarithmic of \cref{eq:Cpq} is 
        jointly convex in $(p,q)$, namely,
        \begin{equation}
            \mcC_{(1-\theta)p_0+\theta p_1,(1-\theta)q_0+\theta q_1}(O,\rho)\le 
        \mcC_{p_0,q_0}^{1-\theta}\mcC_{p_1,q_1}^{\theta}.
        \end{equation}
    \end{enumerate}
\end{prop}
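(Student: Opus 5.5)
The plan is to work throughout with the Schatten-norm form $\mcC_{p,q}(O,\rho)=\norm{X_{p,q}}_{2s}$, where $X_{p,q}\EqDef\rho^{p/2}O\rho^{q/2}$ and $\frac{1}{2s}=\frac{p+q}{2}\le 1$. Since $2s\ge 1$, this is a genuine unitarily invariant norm. The basic tool is the generalized H\"older inequality together with the identity $\norm{\rho^{a/2}}_{2/a}=(\Tr\rho)^{a/2}=1$.

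\emph{Items (1), (4), (5).} For (1), nonnegativity is immediate. Moreover, $X_{p,q}=0$ iff $P O P=0$, where $P$ is the support projector of $\rho$: indeed, $\rho^{p/2}$ and $\rho^{q/2}$ are invertible on $\supp(\rho)$ and vanish on its complement. The condition $POP=0$ is exactly $O(\supp\rho)\subseteq\supp(\rho)^\perp$. For (4), note $X_{p,q}(O)^\dagger=X_{q,p}(O^\dagger)$, and Schatten norms are invariant under adjoints. For (5), use $(\rho_A\otimes\rho_B)^{a}=\rho_A^a\otimes\rho_B^a$ and multiplicativity of Schatten norms under tensor products.

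\emph{Items (2), (3).} For (2), take $p'\ge p$ and write $X_{p',q}=\rho^{(p'-p)/2}X_{p,q}$. The exponents satisfy $\frac{p'+q}{2}=\frac{p'-p}{2}+\frac{p+q}{2}$, so H\"older gives
\begin{equation*}
\norm{X_{p',q}}_{2/(p'+q)}\le\norm{\rho^{(p'-p)/2}}_{2/(p'-p)}\,\norm{X_{p,q}}_{2/(p+q)}=\mcC_{p,q}.
\end{equation*}
The argument in $q$ is symmetric. For (3), the upper bound follows from H\"older with three factors: $\norm{\rho^{p/2}}_{2/p}\norm{O}_\infty\norm{\rho^{q/2}}_{2/q}$. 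For the lower bound, write
\begin{equation*}
\Tr(\rho O)=\Tr\big(\rho^{1-\frac{p+q}{2}}X_{p,q}\big),
\end{equation*}
using cyclicity, and apply H\"older with the conjugate exponent $1/(1-\frac{p+q}{2})$. The corresponding norm of $\rho^{1-\frac{p+q}{2}}$ equals $1$, and the case $p=q=1$ is trivial.

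\emph{Item (6), which I expect to be the main obstacle.} Because $\mcC_{p,q}$ is itself a norm, I will use the triangle inequality to telescope:
\begin{equation*}
\rho^{p/2}O\rho^{q/2}-\sigma^{p/2}O\sigma^{q/2}=(\rho^{p/2}-\sigma^{p/2})O\rho^{q/2}+\sigma^{p/2}O(\rho^{q/2}-\sigma^{q/2}).
\end{equation*}
Applying H\"older to each term leaves the quantities $\norm{\rho^{p/2}-\sigma^{p/2}}_{2/p}$ and $\norm{\rho^{q/2}-\sigma^{q/2}}_{2/q}$ to control. Since $x\mapsto x^{p}$ is operator monotone for $p\le 1$, the Ando/Birman--Koplienko--Solomyak inequality for unitarily invariant norms, applied with $A=\sqrt\rho$ and $B=\sqrt\sigma$, gives
\begin{equation*}
\norm{(\sqrt\rho)^p-(\sqrt\sigma)^p}_{2/p}\le\norm{|\sqrt\rho-\sqrt\sigma|^p}_{2/p}=\norm{\sqrt\rho-\sqrt\sigma}_2^{p}.
\end{equation*}
The remaining issue is to relate $\norm{\sqrt\rho-\sqrt\sigma}_2$ to the Bures metric. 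The crude route uses Powers--St{\o}rmer, $\norm{\sqrt\rho-\sqrt\sigma}_2^2\le\norm{\rho-\sigma}_1\lesssim D_B$, and yields exponent $\min\{p,q\}/2$. To obtain the stated exponent, I would first try to exploit the unitary freedom in $D_B=\min_U\norm{\sqrt\rho-\sqrt\sigma U}_2$. If that fails, I would fall back to the weaker exponent, which suffices for all qualitative uses of continuity.

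\emph{Item (7).} I will use the Hadamard three-lines theorem. First restrict to $\supp\rho$, or replace $\rho$ by $(1-\epsilon)\rho+\epsilon\Id/d$ and let $\epsilon\to0$, so that complex powers are defined. By duality, $\mcC_{p,q}=\sup_Y|\Tr(X_{p,q}Y)|$ over $\norm{Y}_{(2s)'}\le1$. Fix the target point $(p_\theta,q_\theta)$ and an optimal $Y=U|Y|$ there. Define on the strip $0\le\Re z\le1$ the function
\begin{equation*}
F(z)=\Tr\big(\rho^{p(z)/2}O\rho^{q(z)/2}\,U|Y|^{r(z)}\big),
\end{equation*}
where $p(z),q(z)$ interpolate affinely between $(p_0,q_0)$ and $(p_1,q_1)$, and $r(z)$ is chosen so that $|Y|^{r(j+it)}$ has unit norm in the dual exponent of $\frac{2}{p_j+q_j}$. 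On each boundary line, the imaginary parts produce unitary factors $\rho^{it(\cdot)}$, so $|F(j+it)|\le\mcC_{p_j,q_j}$. The three-lines theorem then gives $|F(\theta)|\le\mcC_{p_0,q_0}^{1-\theta}\mcC_{p_1,q_1}^{\theta}$, and $F(\theta)=\mcC_{p_\theta,q_\theta}$. Convexity of $\log\mcC_{p,q}$ along every segment is exactly joint convexity.
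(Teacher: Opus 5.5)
\textbf{Gap in item (6); the other items are fine.} Items (1), (3), (4), (5) are correct; the paper treats these as basic. Your (2) and (7) match the paper's H\"older and three-lines arguments. Your duality construction in (7) is an inline proof of the operator-valued three-lines theorem that the paper quotes from Beigi. The first half of (6) (triangle inequality, H\"older, and the Ando bound down to $\norm{\sqrt\rho-\sqrt\sigma}_2^{\,p}$) also coincides with the paper's proof.

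The gap is the last step of (6): you never show $\norm{\sqrt\rho-\sqrt\sigma}_2\lesssim\bur{\rho}{\sigma}$. Your Powers--St{\o}rmer fallback only gives $\norm{\sqrt\rho-\sqrt\sigma}_2\lesssim\bur{\rho}{\sigma}^{1/2}$, hence exponent $\min\{p,q\}/2$. That is not the stated claim. The loss is also not just qualitative: it would turn $\condinf{A}{C}{B}^{p/2}$ in \cref{thm:local-comp-1} into $\condinf{A}{C}{B}^{p/4}$ and halve the decay rate in \cref{thm:clusterCpq}. The unitary freedom does not help by itself either. $\norm{\sqrt\rho-\sqrt\sigma}_2$ is the $U=\Id$ value of the quantity whose minimum over $U$ is $\bur{\rho}{\sigma}$, so it is automatically $\ge\bur{\rho}{\sigma}$. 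What you need is a genuine reverse inequality.

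\textbf{The missing ingredient.} You need $\norm{\sqrt\rho-\sqrt\sigma}_2\le\sqrt2\,\bur{\rho}{\sigma}$, which is the second inequality in \eqref{eq:DBvs2}. The paper gets it from the Araki--Yamagami estimate $\norm{|X|-|Y|}_2\le\sqrt2\,\norm{X-Y}_2$ for arbitrary $X,Y$ (Bhatia (VII.39)). Apply it with $X=\sqrt\rho$ and $Y=U^\dagger\sqrt\sigma$ in your convention, so that $|Y|=\sqrt\sigma$, and then minimize over $U$.

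There is also a two-line route using only your own tools. Write $\sqrt\rho\sqrt\sigma=\rho^{1/4}(\rho^{1/4}\sigma^{1/4})\sigma^{1/4}$. H\"older with exponents $(4,2,4)$ gives $F=\norm{\sqrt\rho\sqrt\sigma}_1\le\norm{\rho^{1/4}\sigma^{1/4}}_2$, i.e.\ $F^2\le\Tr(\sqrt\rho\sqrt\sigma)$. Hence
$\norm{\sqrt\rho-\sqrt\sigma}_2^2=2-2\Tr(\sqrt\rho\sqrt\sigma)\le 2(1-F^2)\le 4(1-F)=2\,\bur{\rho}{\sigma}^2$.

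Inserting this into your chain gives $\abs{\mcC_{p,q}(O,\rho)-\mcC_{p,q}(O,\sigma)}\le\norm{O}_\infty\bigl(2^{p/2}\bur{\rho}{\sigma}^{p}+2^{q/2}\bur{\rho}{\sigma}^{q}\bigr)\lesssim\norm{O}_\infty\,\bur{\rho}{\sigma}^{\min\{p,q\}}$. The last step uses $\bur{\rho}{\sigma}\le\sqrt2$.
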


 Another important property of $\mcC$ is given in Fact~\ref{thm:DPI} below.
It is a generalized version of Lieb's concavity theorem~\cite{ref:Lieb1973convexity}, and we refer the reader to \cRef{ref:zhang2020wigner} for a proof.
Such concavity implies a data-processing inequality (DPI).
We include a proof here for completeness.
\begin{fact}[Concavity and DPI]\label{thm:DPI}
\leavevmode
\begin{enumerate}[(1)]
    \item \label{bul1:DPI} For any state $\rho$, operator $O$, and $(p,q)\in(0,1]^2$,
    the function 
    \begin{equation}
        f(\rho)=\Tr\big(\rho^{\frac{p}{2}} O \rho^{q} O^\dagger \rho^{\frac{p}{2}}\big)^{\frac{1}{p+q}}
    \end{equation} 
    is concave in $\rho$.
    \item \label{Bul2-DPI} For a bipartite state $\rho_{AB}$ and an operator $O_A$ on $A$,
    the observables $\mcC_{p,q}(O_A,\cdot)$ satisfy ``data-processing inequality": for any quantum channel $\Phi: B\to B'$, we have:
    \begin{equation}\label{eq:DPICpq}
        \mcC_{p,q}(O_A,\rho_{AB}) \le  \mcC_{p,q}(O_A,\Phi(\rho_{AB})).
    \end{equation}
\end{enumerate}
\end{fact}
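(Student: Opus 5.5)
The plan is to take the concavity in part \ref{bul1:DPI} as given (the generalized Lieb concavity of \cRef{ref:zhang2020wigner}) and derive the data-processing inequality \eqref{eq:DPICpq} from it. The route is the standard one: first establish invariance of $\mcC_{p,q}(O_A,\cdot)$ under isometries and under tensoring with an ancilla state, then represent the channel through a twirl.

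\textbf{Step 1 (invariances).} Write $f_O(\rho)=\mcC_{p,q}(O,\rho)^{2}$. With $s=1/(p+q)$, this is exactly the concave function of part \ref{bul1:DPI}.

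For an isometry $V$ acting on $B$ only, we have $(V\rho V^\dagger)^{t}=V\rho^{t}V^\dagger$ for $t>0$. Since $O_A$ commutes with $V$, every $V^\dagger V=\Id$ cancels inside the trace, and the nonzero spectrum of the inner operator is unchanged. Hence $f_{O_A}(V\rho_{AB}V^\dagger)=f_{O_A}(\rho_{AB})$.

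For an ancilla, multiplicativity (Prop.~\ref{prop:Cpqbasic}\ref{prop5:bul-mult}) with $O=O_A\otimes\Id_E$ gives $\mcC_{p,q}(O_A\otimes\Id_E,\rho\otimes\tau_E)=\mcC_{p,q}(O_A,\rho)\,\mcC_{p,q}(\Id,\tau_E)$. The second factor is $\norm{\tau^{(p+q)/2}}_{2/(p+q)}=(\Tr\tau)^{(p+q)/2}=1$.

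\textbf{Step 2 (partial trace).} Let $\{U_j\}$ be a unitary $1$-design on $E$, for instance the $d_E^2$ Weyl operators. Then
\[
\frac{1}{d_E^2}\sum_j (\Id_{AB}\otimes U_j)\,\sigma_{ABE}\,(\Id_{AB}\otimes U_j)^\dagger=\sigma_{AB}\otimes \tfrac{\Id_E}{d_E}.
\]
Apply concavity of $f_{O_A\otimes\Id_E}$ to this average. Each term on the left has the same value of $f$ as $\sigma_{ABE}$, by unitary invariance in Step 1. This gives $f(\sigma_{ABE})\le f(\sigma_{AB}\otimes\Id_E/d_E)=f_{O_A}(\sigma_{AB})$, where the last equality uses Step 1 again. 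So $\mcC_{p,q}$ does not decrease under a partial trace on a system not containing the support of $O_A$.

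\textbf{Step 3 (general channel).} Write $\Phi$ in Stinespring form, $\Phi(X)=\Tr_E(VXV^\dagger)$ with $V:B\to B'E$. Isometric invariance gives $\mcC_{p,q}(O_A,\rho_{AB})=\mcC_{p,q}(O_A,V\rho_{AB}V^\dagger)$, and Step 2 then bounds this by $\mcC_{p,q}(O_A,\Phi(\rho_{AB}))$. Monotonicity of $x\mapsto x^{1/2}$ transfers the inequality from $f$ to $\mcC_{p,q}$.

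\textbf{Main obstacle.} The real content is the concavity of part \ref{bul1:DPI}, which is a deep Lieb-type theorem and is cited rather than reproved. Within the DPI argument itself, the only delicate point is the isometric invariance when $V$ is not surjective. This requires checking that the functional calculus $(V\rho V^\dagger)^{t}=V\rho^{t}V^\dagger$ holds for non-integer powers. It does, because $V\rho V^\dagger$ is unitarily equivalent to $\rho\oplus 0$ and $0^{t}=0$ for $t>0$; this is precisely why the restriction $p,q>0$ matters here.
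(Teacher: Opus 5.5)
Your route is essentially the paper's: dilate the channel, twirl the environment into the maximally mixed state, apply the concavity of part (1) via Jensen, use invariance under unitaries or isometries on the non-$A$ systems, and drop the $\Id_E/d_E$ factor. The differences are cosmetic. You use a Stinespring isometry and a finite Weyl $1$-design, where the paper uses a unitary dilation with a $\ket{0_E}$ ancilla and a Haar average. You also remove $\Id_E/d_E$ via multiplicativity, where the paper invokes degree-$1$ homogeneity of $f$.

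There is one concrete error you must fix. The function in part (1) is not $\mcC_{p,q}(O,\rho)^2$. It is $f(\rho)=\mathrm{Tr}\big[(\rho^{p/2}O\rho^{q}O^\dagger\rho^{p/2})^{s}\big]=\mcC_{p,q}(O,\rho)^{2s}=\mcC_{p,q}(O,\rho)^{2/(p+q)}$. The two agree only when $p+q=1$. For $p+q>1$, including the fidelity point $p=q=1$, the function $\mcC_{p,q}^2=f^{p+q}$ is genuinely not concave. Take $\rho_t=\mathrm{diag}(t/2,\,t/2,\,1-t)$ and $O=\ket{1}\bra{2}$. Then $\rho_t^{p/2}O\rho_t^{q/2}=(t/2)^{(p+q)/2}\ket{1}\bra{2}$, so $\mcC_{p,q}^2=(t/2)^{p+q}$, which is strictly convex along this affine segment. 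Read literally, the Jensen step in your Step 2 therefore fails whenever $p+q>1$.

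The repair is immediate. Run the identical argument with $f=\mcC_{p,q}^{2/(p+q)}$. Your unitary, isometric and ancilla invariances are equalities for $\mcC_{p,q}$ itself, using $\mcC_{p,q}(\Id,\tau)=1$, so they hold for any power. Then transfer the inequality back with the increasing map $x\mapsto x^{(p+q)/2}$ instead of $x\mapsto x^{1/2}$. This degree-$1$ normalization is exactly why the paper fixes $s=1/(p+q)$.
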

\begin{proof}[Proof of $(1)\Rightarrow(2)$]
    We introduce an environment $E$ to dilate the channel $\Phi$ to a unitary:
    \begin{equation}
        \Phi(\cdot) 
        = \Tr_{E'}[V_{BE}(\cdot\otimes\ketbra{0_E})V_{BE}^\dagger],
    \end{equation}
    where $B'E'=BE$.
    Equivalently, 
    \begin{equation}\label{eq:temp-1}
        \Phi(\cdot)\otimes \frac{1_{E'}}{\dim(E')} 
        = \EX \left[U_{E'}V_{BE}(\cdot\otimes\ketbra{0_E})V_{BE}^\dagger U_{E'}^\dagger\right],
    \end{equation}
    where $\EX = \int dU_{E'}$ is the Haar average over $U_{E'}$.
    
    Now consider the function $f(\rho)$ for $O=O_A$.
    It suffices to show $f(\rho_{AB})\leq f(\Phi(\rho_{AB}))$.
    
    By \ref{bul1:DPI}, $f(\rho)$ is concave in $\rho$.
    In particular, conjugating $\rho_{AB}\otimes\ketbra{0_E}$ with $U_{E'}V_{BE}$ and taking the average over $U_{E'}$, we find:
    \begin{equation} \label{eq:EX}
    \begin{aligned}
        &\EX \left[f(U_{E'} V_{BE}(\rho_{AB}\otimes\ketbra{0_E})V_{BE}^\dagger U_{E'}^\dagger)\right]\\
        \leq& 
        f(\EX \left[U_{E'}V_{BE}(\rho_{AB}\otimes\ketbra{0_E})V_{BE}^\dagger U_{E'}^\dagger\right]).
    \end{aligned}
    \end{equation}
    Since $O_A$ acts only on $A$, $f$ is invariant under conjugation by any unitary on $BE$. Hence, the left-hand side of \cref{eq:EX} is constant, even before averaging:
    \begin{equation}
        (\text{l.h.s. of \cref{eq:EX}}) = f(\rho_{AB}\otimes\ketbra{0_E}) = f(\rho_{AB}).
    \end{equation}
    For the right-hand side, \cref{eq:temp-1} implies:
    \begin{equation}
        (\text{r.h.s. of \eqref{eq:EX}}) = f\Big(\Phi_B(\rho_{AB})\otimes \frac{1_{E'}}{\dim(E')}\Big)=f\big(\Phi_B(\rho_{AB})\big),
    \end{equation}
    where the last equality uses the fact that $f$ is  homogeneous of degree 1 in $\rho$, enforced by the choice $s=\frac{1}{p+q}$.
\end{proof}

\subsection{Quantum Markov chains and local computability}

Here we introduce a notion central to our work which we term
\emph{local computability}.
Recall that traditional expectation values satisfy $\Tr{\rho O_A}=\Tr{\rho_A O_A}$,
and therefore depend only on the reduced state $\rho_A$ rather than the global state.
In contrast, the family of observables defined in \cref{eq:Cpq}, being nonlinear in $\rho$ and $O_A$, a priori require information about the full state rather than $\rho_A$ alone.

However, we show that, as a consequence of the concavity and DPI, for a QMC $\rho_{ABC}$, these nonlinear observables can be evaluated using only the marginal on $AB$:
\begin{align}\label{eq:localcompute1}
        \mcC_{p,q}(O_A,\rho_{AB}) = \mcC_{p,q}(O_A,\rho_{ABC}).
\end{align} 
Interestingly, the converse also holds generically, providing an equivalence characterization of quantum Markov chains.

\begin{theorem}[QMC $\Leftrightarrow$ local computability]\label{thm:local-comp}
\leavevmode
\begin{enumerate}[(1)]
\item   \label{bul1:QMC-operational} If $\rho_{ABC}$ is a Markov chain with respect to $A$-$B$-$C$, 
then for any operator $O_A$, \cref{eq:localcompute1} holds.
    \item \label{bul2:QMC-operational} Conversely, fix a $(p,q)$ in $(0,1]\times (0,1]$ and $pq\neq 1$.
    If $\rho_{ABC}$ is full-rank, and \cref{eq:localcompute1} holds for any operator $O_A$, then $\rho_{ABC}$ is a quantum Markov chain.
\end{enumerate}
\end{theorem}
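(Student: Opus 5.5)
For part \ref{bul1:QMC-operational} I would use only the data-processing inequality of Fact~\ref{thm:DPI}, applied in both directions. First take the channel $\Tr_C$, which acts on the complement of $A$. Equation~\eqref{eq:DPICpq} then gives $\mcC_{p,q}(O_A,\rho_{ABC})\le \mcC_{p,q}(O_A,\rho_{AB})$. Next use that $\rho_{ABC}$ is an exact quantum Markov chain. By the Petz/Hayden--Jozsa--Petz--Winter characterization, there is a channel $\mcR_{B\to BC}$ with $\mcR(\rho_{AB})=\rho_{ABC}$. This channel also acts only on the complement of $A$, so the same DPI gives $\mcC_{p,q}(O_A,\rho_{AB})\le \mcC_{p,q}(O_A,\rho_{ABC})$. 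The two inequalities together give \cref{eq:localcompute1} for every $O_A$. This part is routine.

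For part \ref{bul2:QMC-operational} I would convert the hypothesis into equality in a data-processing inequality for a quantum divergence, and then invoke a sufficiency theorem.

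\textbf{Step 1 (reduce to a divergence).} Restrict to unitaries $O_A=U_A$. By \cref{eq:pq->az}, $\mcC_{p,q}(U_A,\rho)$ is a fixed monotone function of $\mathrm D_{\alpha,z}(U_A\rho U_A^\dagger\|\rho)$, with $\alpha=p/(p+q)$ and $z=1/(p+q)$. Since $U_A\rho U_A^\dagger$ commutes with $\Tr_C$, the hypothesis says that the $\alpha$-$z$ divergence between the pair $\{\rho_{ABC},\,U_A\rho_{ABC}U_A^\dagger\}$ is unchanged by the channel $\Tr_C$.

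\textbf{Step 2 (sufficiency).} Appeal to the equality conditions for $\alpha$-$z$ Rényi divergences, as developed in the Petz / Hiai--Mosonyi / Zhang line of work. Full rank of $\rho_{ABC}$ lets these apply. The conclusion should be that $\Tr_C$ is sufficient for the pair: the Petz recovery map
\begin{equation*}
\mcP(\sigma_{AB})=\rho_{ABC}^{1/2}\rho_{AB}^{-1/2}\sigma_{AB}\rho_{AB}^{-1/2}\rho_{ABC}^{1/2}
\end{equation*}
satisfies $\mcP(U_A\rho_{AB}U_A^\dagger)=U_A\rho_{ABC}U_A^\dagger$ for every unitary $U_A$.

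\textbf{Step 3 (conclude CMI vanishes).} The map $\mcP$ is linear. Average the relation from Step 2 over the Haar measure on $U_A$, or over a unitary $1$-design. This gives $\mcP(\frac{\Id_A}{d_A}\otimes\rho_B)=\frac{\Id_A}{d_A}\otimes\rho_{BC}$, and also $\mcP(\rho_{AB})=\rho_{ABC}$. So $\Tr_C$ is sufficient for the pair $\{\rho_{ABC},\,\frac{\Id_A}{d_A}\otimes\rho_{BC}\}$. Hence the Umegaki relative entropy between these two states is preserved under $\Tr_C$. That is,
\begin{equation*}
\mathrm D\big(\rho_{ABC}\,\big\|\,\tfrac{\Id_A}{d_A}\otimes\rho_{BC}\big)=\mathrm D\big(\rho_{AB}\,\big\|\,\tfrac{\Id_A}{d_A}\otimes\rho_B\big).
\end{equation*}
The left-hand side equals $\log d_A - S(ABC)+S(BC)$ and the right-hand side equals $\log d_A - S(AB)+S(B)$. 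Their difference is exactly $\condinf{A}{C}{B}_\rho$, so the CMI vanishes and $\rho$ is a quantum Markov chain.

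\textbf{Main obstacle.} I expect Step 2 to be the hardest: extracting sufficiency from equality of a single $\alpha$-$z$ quantity. Such equality characterizations are known to fail at the boundary of the admissible range. The fidelity point $p=q=1$, which is the excluded case $pq=1$, is exactly where Uhlmann-fidelity equality does not force sufficiency. If no off-the-shelf theorem covers the whole region, my fallback is a second-order perturbation argument. Set $O_A=\Id+\epsilon X_A$ and expand \cref{eq:localcompute1} to order $\epsilon^2$. The zeroth- and first-order terms agree automatically. The second-order term is a quadratic form in $X_A$ of generalized Wigner--Yanase--Dyson type, given by an operator-convex function of the relative modular operator. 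Equality of this form for all $X_A$ should force $\rho_{ABC}^{it}\rho_{AB}^{-it}$ to act trivially on $A$-operators, which again yields Petz recovery. Strict operator convexity of the relevant function holds precisely when $pq\neq1$.
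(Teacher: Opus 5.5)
Your proposal is correct and follows the paper's proof essentially step for step. For (1) you use the DPI under $\Tr_C$ and under the Petz recovery map. For (2) you reduce to $\mathrm D_{\alpha,z}$ via unitaries, obtain sufficiency from equality in the DPI (the paper cites Hiai's Theorem 4.5 for exactly this region) with the $U_A$-independent Petz map, Haar-average, and use the Umegaki identity to get $\condinf{A}{C}{B}_\rho=0$. One cosmetic point: in the paper's convention, $\mathrm D_{\alpha,z}(U_A\rho U_A^\dagger\|\rho)$ corresponds to $\alpha=q/(p+q)$ rather than $p/(p+q)$. This swap is harmless by the symmetry $\mcC_{p,q}(O,\rho)=\mcC_{q,p}(O^\dagger,\rho)$ and because the hypothesis quantifies over all unitaries.
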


\begin{proof}
    (1)     
    We apply the data-processing inequality (\ref{eq:DPICpq}) twice.
    First, applying it to the state $\rho_{ABC}$ and $\Tr_C$, we find:
    \begin{align}
        \mcC_{p,q}(O_A,\rho_{ABC}) \leq \mcC_{p,q}(O_A,\rho_{AB}).
    \end{align}
    Second, applying it to the state $\rho_{AB}$ and the Petz recovery channel $\mcR: B\to BC$, we find:
    \begin{align}
        \mcC_{p,q}(O_A,\rho_{AB}) \leq \mcC_{p,q}(O_A,\rho_{ABC}).
    \end{align}
    Combining two inequalities above, we have proved \cref{eq:localcompute1}.

(2) It suffices to restrict attention to unitary operators,
    which allows to express $\mcC_{p,q}$ in terms of $\mathrm D_{\alpha,z}$ (see \cref{eq:pq->az}).
    Therefore, for any unitary $U_A$,
    \begin{equation}
        \mathrm{D}_{\alpha,z}(U_A\rho_{ABC} U_A^\dagger || \rho_{ABC}) = \mathrm{D}_{\alpha,z}(U_A\rho_{AB} U_A^\dagger || \rho_{AB}),
    \end{equation}
    where $p=\frac{1-\alpha}{z}$, $q=\frac{\alpha}{z}$. 
    Under the assumption that $\rho$ is full-rank, the region of $(p,q)$ we consider corresponds to the region of $(\alpha,z)$ such that the equality of the data processing of $D_{\alpha,z}$ implies sufficiency (\cRef{hiai2024alpha}, Theorem 4.5).
    Namely, there exists a quantum channel $\mcR:AB\rightarrow ABC$ such that:
    \begin{align}
            \mcR(\rho_{AB})&=\rho_{ABC},\label{eq:sufficency1}\\
            \mcR(U_A\rho_{AB}U_A^\dagger)&=U_A\rho_{ABC}
            U_A^\dagger.\label{eq:sufficency2}
    \end{align}
    While the channel $\mcR$ may depend on $U_A$ apriori, Ref.~\cite{petz1986sufficient} shows that the following $U_A$-independent Petz map always suffices:
    \begin{equation}
        \mcR_{\rho_{ABC},\Tr_C}(\cdot)=\rho_{ABC}^{\frac{1}{2}}(\rho_{AB}^{-\frac{1}{2}}(\cdot)\rho_{AB}^{-\frac{1}{2}}\otimes 1_C)\rho_{ABC}^{\frac{1}{2}}.
    \end{equation}
    Now integrating over $U_A$ in \cref{eq:sufficency2}, we find:
    \begin{equation}
        \mcR(\tau_A\otimes \rho_B)= \tau_A \otimes \rho_{BC},
    \end{equation}
    where $\tau_A=\frac{\Id_A}{d_A}$.
    Combined with \cref{eq:sufficency1} and the 
    data processing inequality, we get:
    \begin{equation} \label{eq:CMI-almostequality}
        \drel{\rho_{ABC}}{\tau_A\otimes \rho_{BC}}  =\drel{\rho_{AB}}{\tau_A\otimes \rho_{B}}.
    \end{equation} 
    Straightforward calculation shows that this is equivalent to $\condinf{A}{C}{B}_\rho=0$.
\end{proof}

For what follows, we need an approximate version of \cref{thm:local-comp} (\ref{bul1:QMC-operational}), that approximate Markov chains implies approximate local computability.
We defer the proof to Appendix~\ref{app:non-lin} for conciseness.
\begin{thmprime}{thm:local-comp}[1, approximate version] \label{thm:local-comp-1}
For any tripartite state $\rho_{ABC}$, assuming $p\leq q$ without loss of generality (WLOG), we have:
    \begin{align} \label{eq:localcompute3}
        0\leq \mcC_{p,q}(O_A,\rho_{AB})-\mcC_{p,q}(O_A,\rho_{ABC})
        \lesssim \norm{O_A}_\infty\, \condinf{A}{C}{B}^{\frac p2}.
    \end{align}
\end{thmprime}

It is natural to ask whether an approximate version of \cref{thm:local-comp} (\ref{bul2:QMC-operational}) also holds.
Following the proof for the exact case, a positive answer would be expected if a universal approximate recoverability result holds for the $\alpha$-$z$ R\'enyi divergence.
However, to the best of our knowledge, such results are currently only known for special cases, such as the Petz–Rényi line and the sandwiched Rényi line; see, e.g., \cite{carlen2018recovery, gao2021recoverability}.
Moreover, these recoverability results also depend on the minimal eigenvalue of certain mixed states (or norm of modular operators), unlike the counterpart for the Umegaki relative entropy~\cite{ref:fawzi2015approximate}.
We leave the question of recoverability for general $(p,q)$ (or equivalently, $(\alpha,z)$) for future work.

We also note that the Uhlmann fidelity point $p=q=1$ is excluded from the theorem.
Indeed, it is known that the Uhlmann fidelity does not satisfy sufficiency 
(see Corollary A.9 in \cRef{mosonyi2015quantum} and Remark 5.15 in \cRef{hiai2017different}) so the present proof does not apply.
This alone, however, does not immediately yield a counterexample, because our condition requires local computability for all operators $O_A$, a stronger requirement than equality of the data-processing inequality for a single pair of states.
We likewise leave this point for future investigation.

\subsection{Decay of nonlinear correlation}\label{sec:ncordecay}

Our main result in this section is that for SRC states, the connected correlators defined by the above observables always exhibit clustering.

\begin{theorem}[SRC $\Rightarrow $ $\mcC_{p,q}$ clustering]\label{thm:clusterCpq}
    Let $0<p,q\leq 1$ and assume WLOG $p\le q$.
    Let $\rho$ be an SRC state on a $D$-dimensional lattice, and let $O_1,O_2$ be any operators supported in small regions $A_1,A_2$ of $O(1)$ sites.
    Then
    \begin{equation} \label{eq:thm-corr}
        |\mcC_{p,q}(O_1O_2,\rho)-\mcC_{p,q}(O_1,\rho)\mcC_{p,q}(O_2,\rho)|
        \approx 0,
    \end{equation}
    where the error term is upper bounded by $ C\norm{O_1 }_\infty\norm{O_2}_\infty e^{-p\Big(\frac{\dist(A_1,A_2)}{\xi}\Big)^{\frac 1D}}$.
    Here, $\xi$ depends on $\eta,\zeta,D$, and the constant $C$ depends on $|A_1|,|A_2|$.
\end{theorem}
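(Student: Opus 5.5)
The strategy is to reduce the global nonlinear correlator to a locally computable one on a finite patch. On that patch the state approximately factorizes, and we then use multiplicativity. Let $r=\dist(A_1,A_2)$.

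\emph{Step 1 (localization).} Choose radius-$\ell$ neighbourhoods $A_i\subset X_i$, with buffers $B_i=X_i\setminus A_i$, and $\ell\approx r/3$ so that $\dist(X_1,X_2)\gtrsim r/3$. For a single operator, \cref{thm:local-comp-1} applied to the partition $A_i$-$B_i$-$(X_iB_i)^c$ gives
\[
\mcC_{p,q}(O_i,\rho)=\mcC_{p,q}(O_i,\rho_{X_i})+O\big(\|O_i\|\,e^{-p\ell/2\zeta}\big).
\]
Here the SRC bound \eqref{def:CMI-decay} is used with $g(A_i)$ of order $O(1)$. For the product $O_1O_2$, take $A=A_1A_2$, which is not contiguous. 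I would therefore peel off one region at a time, using the DPI in Fact~\ref{thm:DPI} together with \cref{thm:local-comp-1}.

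\emph{Step 2 (the joint patch).} We must control $I(A_1A_2:C\,|\,B_1B_2)$ with $C=(X_1X_2)^c$. By the chain rule this is at most $I(A_1:C A_2 B_2\,|\,B_1)+I(A_2:C\,|\,B_1B_2A_1)$. The first term is an SRC quantity for the contiguous $A_1$ with annulus $B_1$. The second term has a conditioning system that is not an annulus of $A_2$ alone, so Definition~\ref{def:gapped} does not apply directly. This is where the stretched exponential $e^{-p(r/\xi)^{1/D}}$ enters. I would run the same sublinear-split trick as after \cref{thm:gap->LR}. Either pass through the recovery formulation (Fact~\ref{fact:FoR->CMI} and \cref{thm:gap->LR}), where the $g(c|A|r_1^D)$ factor forces $r_1\sim r^{1/D}$, or enlarge $A_1$ to a region whose size is polynomial in the radius. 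Either route gives a CMI error of order $e^{-(r/\xi)^{1/D}}$, hence a correlator error of order $e^{-p(r/\xi)^{1/D}/2}$ after applying \cref{thm:local-comp-1}. This yields
\[
\mcC_{p,q}(O_1O_2,\rho)\approx\mcC_{p,q}(O_1O_2,\rho_{X_1X_2}).
\]

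\emph{Step 3 (factorization).} Next compare $\rho_{X_1X_2}$ with $\rho_{X_1}\otimes\rho_{X_2}$. The correlation bound \eqref{eq:DOC}, converted to trace norm via \eqref{eq:third-equivlence}, gives trace distance $\epsilon\le d_{X_1}f(X_1)e^{-r/3\eta}$. This conversion brings in the factor $d_{X_1}=d^{O(\ell^D)}$, which is a second reason to take $\ell\sim(\alpha r)^{1/D}$ rather than linear in $r$. The Bures distance is then at most $\sqrt{\epsilon}$. Continuity, item \ref{prop5:bul-ti} of \cref{prop:Cpqbasic}, gives an error of order $\epsilon^{p/2}$. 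Multiplicativity, item \ref{prop5:bul-mult}, then gives
\[
\mcC_{p,q}(O_1O_2,\rho_{X_1}\otimes\rho_{X_2})=\mcC_{p,q}(O_1,\rho_{X_1})\,\mcC_{p,q}(O_2,\rho_{X_2}).
\]
Finally, combining this with Step 1 through the triangle inequality, and using $\mcC\le\|O\|$ from item \ref{prop5:bul-range}, gives the claim. The dimension-dependent prefactors are absorbed by choosing $\alpha$ small.

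The main obstacle is Step 2, the non-annular CMI for the joint region. My first attempt is the chain-rule decomposition above. If that fails, I would fall back to conditioning through the disjoint union $B_1B_2$, treating $A_1\cup B_1$ as a single enlarged contiguous region and absorbing the resulting $g(|X_1|)$ growth with the sublinear choice of $\ell$.
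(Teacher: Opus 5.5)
\textbf{Assessment.} Your architecture is the same as the paper's. You localize each correlator with \cref{thm:local-comp-1}, factorize the joint patch using correlation decay converted to trace norm, and finish with continuity and multiplicativity. Steps~1 and~3 are sound, provided you use one radius $\ell\sim(\alpha r)^{1/D}$ throughout instead of starting with $\ell\approx r/3$.

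\textbf{The gap is Step~2.} You flag it as the main obstacle but leave it open. Neither the detour through \cref{thm:gap->LR} and Fact~\ref{fact:FoR->CMI} nor the ``enlarge $A_1$'' fallback is carried out as an argument. The obstacle is also not real. The second chain-rule term is controlled by monotonicity of CMI:
\begin{equation*}
I(A_2:A_1B_1C\,|\,B_2)=I(A_2:A_1B_1\,|\,B_2)+I(A_2:C\,|\,A_1B_1B_2)\ \ge\ I(A_2:C\,|\,A_1B_1B_2).
\end{equation*}
Likewise $I(A_1:C|B_1B_2)\le I(A_1:A_2B_2C|B_1)$. Hence
\begin{equation*}
I(A_1A_2:C\,|\,B_1B_2)\le I(A_1:A_2B_2C\,|\,B_1)+I(A_2:A_1B_1C\,|\,B_2).
\end{equation*}
Both terms on the right are SRC quantities for a contiguous $A_i$ with its own annulus $B_i$. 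Each is bounded by $g(A_i)e^{-\ell/\zeta}$, with no $g(|X_1|)$ growth and no sublinear split. This is exactly the paper's bound $\delta_{12}^2\le\delta_1^2+\delta_2^2$.

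No peeling is needed either. \cref{thm:local-comp-1} holds for an arbitrary tripartite state, so it applies directly to the non-contiguous $A=A_1A_2$ with $B=B_1B_2$. Contiguity only matters when invoking \Def{def:gapped}, and the display above handles that.

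\textbf{Where the stretched exponential comes from.} It does not come from Step~2. Its only source is the prefactor $d_{X_1}f(X_1)=e^{O(\ell^D)}$ that appears when you convert \eqref{eq:DOC} into the trace-norm bound of Step~3. That prefactor forces $\ell^D\lesssim\alpha r$. You list this as a ``second reason,'' but it is the only one. Consistently, the paper remarks that under trace-norm decay \eqref{eq:tracedecaypoly} with a polynomial prefactor one can take $\ell$ linear in $r$ and get a pure exponential.

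With the chain-rule bound inserted and $\ell\sim(\alpha r)^{1/D}$ fixed, your error terms $e^{-p\ell/2\zeta}$ (single regions), $e^{-p\ell/2\zeta}$ (joint region) and $\epsilon^{p/2}$ (factorization) combine to give the stated bound.
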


The $1/D$ exponent in the exponential arises from the gap between the correlation bound in \Def{def:gapped} and the trace-norm estimate. 
If the SRC state $\rho$ enjoys a favorable correlation scaling of the form
\begin{align}\label{eq:tracedecaypoly}
    \norm{\rho_{AC}-\rho_A\otimes \rho_C}_1 \le \poly(A)\,e^{-\frac{\dist(A,C)}{\eta}},
\end{align}
then the error term in \cref{eq:thm-corr} is improved to
\begin{align}
    C\norm{O_1 }_\infty\norm{O_2}_\infty e^{-\frac{p\dist(A_1,A_2)}{\xi}} ,
\end{align}
where $C$ depends on $A_1,A_2$ and 
$\xi$ depends on $\eta$ and $\zeta$.

    To convey the main idea of the proof, we first consider the simplified setting in which both the mutual information and the conditional mutual information vanish exactly beyond certain length scale.
    For this simplified version, the bound in \cref{eq:thm-corr} will be identically zero. 
    The proof for the general case is given in Appendix~\ref{app:non-lin}.

\begin{proof}[Proof (for vanishing MI and CMI)]

We divide the lattice into $A=A_1A_2$, $B=B_1B_2$ where $B_i$ is a buffer around $A_i$, and $C$ is the complement (see \Fig{fig:corr-proof}).
We choose $B_i$ to be large enough such that we have exact Markov chain and exact vanishing correlations.
In particular, the same state $\rho_{ABC}$ are QMCs over $A_1-B_1-CB_2A_2$, $A_2-B_2-CB_1A_1$, and $A$-$B$-$C$ (the Markov property for the last partition follows from that for the first two).

Now we use \cref{thm:local-comp}(1) three times:
\begin{equation}\label{eq:3equations}
  \begin{aligned}
    \mcC_{p,q}(O_1,\rho_{A_1B_1}) = \mcC_{p,q}(O_1,\rho_{ABC}),\\
    \mcC_{p,q}(O_2,\rho_{A_2B_2}) = \mcC_{p,q}(O_2,\rho_{ABC}),\\
    \mcC_{p,q}(O_1O_2,\rho_{AB}) = \mcC_{p,q}(O_1O_2,\rho_{ABC}).\\
\end{aligned}   
\end{equation}
Here, $O_1$ and $O_2$ are supported on $A_1$ and $A_2$ respectively.

Next, due to the vanishing of correlation, we have $\rho_{AB}=
\rho_{A_1 B_1}\otimes \rho_{A_2B_2}$ (see \cref{eq:first-equivalence}), implying tensorization of the 
correlations by multiplicativity (\cref{prop5:bul-mult} in Proposition~\ref{prop:Cpqbasic})
\begin{align} \label{eq:tensorization}
    \mcC_{p,q}(O_1O_2,\rho_{AB})=\mcC_{p,q}(O_1,\rho_{A_1B_1})\cdot\mcC_{p,q}(O_2,\rho_{A_2B_2}) .
\end{align}

Combining \cref{eq:3equations,eq:tensorization}, we then conclude \cref{eq:thm-corr} with zero on the r.h.s.. 
\end{proof}

\begin{figure}
    \centering
    \begin{tikzpicture}[scale=0.6]

\def\rA{0.5}
\def\rB{2}

\definecolor{Acol}{RGB}{110,135,185}
\definecolor{Bcol}{RGB}{175,195,230}
\definecolor{Outline}{RGB}{55,75,120}

\foreach \x in {0,1,...,9} {
  \foreach \y in {0,1,...,9} {
    \fill[black] (\x,\y) circle (0.055);
  }
}

\fill[Bcol, opacity=0.78] (2,2) circle[radius=\rB];
\fill[Acol, opacity=0.95] (2,2) circle[radius=\rA];
\draw[Outline, thick] (2,2) circle[radius=\rB];
\draw[Outline, thick] (2,2) circle[radius=\rA];

\fill[Bcol, opacity=0.78] (6.9,7.0) circle[radius=\rB];
\fill[Acol, opacity=0.95] (6.9,7.0) circle[radius=\rA];
\draw[Outline, thick] (6.9,7.0) circle[radius=\rB];
\draw[Outline, thick] (6.9,7.0) circle[radius=\rA];

\node at (2,2) {\large $A_1$};
\node at (6.9,7.0) {\large $A_2$};

\node at (0.85,2.75) {\large $B_1$};
\node at (5.55,7.45) {\large $B_2$};

\node at (1.45,7.25) {\large $C$};

\end{tikzpicture}
    \caption{Lattice partition in \cref{thm:clusterCpq}.}
    \label{fig:corr-proof}
\end{figure}

{~}

In the context of strong-to-weak symmetry breaking (SWSSB) \cite{ref:swssb1,ref:lessa2025fidelity,ref:Sala2024Renyi}, the operators $O_i$ are chosen to be charged under a symmetry, forcing the 1-point correlator $\mathrm F _p(O_i,\rho)$ to vanish automatically when $\rho$ is strongly symmetric (that is, $S\rho\propto \rho$ where $S$ is the symmetry).
Therefore, the vanishing of \emph{connected} correlator is equivalent to $\mathrm F_p(O_1O_2,\rho)=0$.
In standard ground-state physics, however, the concept of short-range correlation is defined independently of any underlying symmetry, where connected correlators are the relevant quantities to consider. 
Our approach here extends such a generalization to nonlinear correlators in the context of mixed-state physics.

While $\mcC_{p,q}$ already encloses most previously discussed non-linear observables,
one notable exception is the Rényi-2 correlator~\cite{ref:swssb1,ref:Sala2024Renyi}, defined as:
\begin{equation}
    \frac{\Tr(O\rho O^\dagger \rho)}{\Tr(\rho^2)}=\frac{\norm{\rho^{\frac{1}{2}}O\rho^{\frac{1}{2}}}_2^2}{\norm{\rho}_2^2}.
\end{equation}
This is outside our range of $(p,q,s)$ and in fact the function $\Tr(O\rho O^\dagger \rho)$ is not concave or convex in $\rho$.
Nevertheless, note that $\norm{\rho^{\frac{1}{2}}O\rho^{\frac{1}{2}}}_2\leq \norm{\rho^{\frac{1}{2}}O\rho^{\frac{1}{2}}}_1 = F_1$, hence it must exhibit clustering if $\rho$ is a strongly symmetric SRC state and $O$ is charged, as in the SWSSB scenario.

\section{Canonical Purification} 
\label{sec:Purification}

In this section we study the correlation properties of the canonical purification, also dubbed the thermofield double (TFD) state, of a state $\rho$, defined by
\begin{equation}
    \ket{\sqrt{\rho}} = \sum_{i=1}^N\sqrt \rho \otimes \Id \ket{i,i} ,
\end{equation}
Here $\{\ket{i}\}$ is the computational basis. 
We discuss the relation between $\rho$ being SRC and the decay of correlations in $\ket{\sqrt\rho}$. 
Throughout this section, we denote regions in the auxiliary (purifying) system with a bar, e.g., $\bar A$ is the region in the auxiliary system corresponding to $A$ in the original system.

\subsection{Decay of correlation}\label{sec:TFD-cluster}
Our main results in this section, which essentially stem from\footnote{Note that, technically, we need to decompose and optimize some of the proofs of \cref{thm:local-comp,thm:clusterCpq}, which renders a tighter bound than using them as black boxes.} \cref{thm:local-comp,thm:clusterCpq}
with $p=q=1/2$, are that $\rho$ being QMC implies local marginals of $\ket{\sqrt{\rho}}$ are locally computable, and that $\rho$ being SRC implies decay of correlations in $\ket{\sqrt\rho}$.

\begin{theorem}[Local computability of  $\ket{\sqrt\rho}$]\label{thm:TFD-LC}
For any tripartite state $\rho_{ABC}$: 
\begin{align}
&\norm{\Tr_{B\bar B C\bar C}\ketbra{\sqrt{\rho_{ABC}}}-\Tr_{B\bar B}\ketbra{\sqrt{\rho_{AB}}}}_2\nonumber \\
&\lesssim \condinf{A}{C}{B}_\rho^{\frac{1}{2}}.
\end{align}
\end{theorem}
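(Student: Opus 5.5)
The plan is to view the reduced state of the canonical purification on $A\bar A$ as a quadratic form in operators on $A$, and to sandwich it between two states using the data-processing inequality of Fact~\ref{thm:DPI} at $p=q=\tfrac12$. This should give $I^{1/2}$, whereas using \cref{thm:local-comp-1} as a black box only gives $I^{1/4}$.

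\textbf{Step 1 (reformulation).} For a state $\sigma$ on $ABC\cdots$, write $\tau_\sigma\EqDef\Tr_{\text{rest}}\ketbra{\sqrt{\sigma}}$ for the marginal on $A\bar A$. Since $\langle\sqrt\sigma|X_A\otimes Y^T_{\bar A}|\sqrt\sigma\rangle=\Tr(\sqrt\sigma X_A\sqrt\sigma\, Y_A)$, the matrix elements of $\tau_\sigma$ are those of the superoperator
\begin{equation}
K_\sigma(X)\EqDef\Tr_{\text{rest}}\big(\sqrt\sigma\, X_A\sqrt\sigma\big),\qquad \langle Y,K_\sigma(X)\rangle_{HS}=\Tr\big(\sqrt\sigma X\sqrt\sigma Y^\dagger\big),
\end{equation}
acting on $L(\BBH_A)$, up to a realignment of indices. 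Realignment preserves the Hilbert--Schmidt norm, so $\norm{\tau_\sigma-\tau_{\sigma'}}_2=\norm{K_\sigma-K_{\sigma'}}_{HS}$. Moreover, $K_\sigma$ is a positive semidefinite operator on the Hilbert space $(L(\BBH_A),\langle\cdot,\cdot\rangle_{HS})$. Its quadratic form is
\begin{equation}
\langle X,K_\sigma X\rangle=\Tr(\sqrt\sigma X\sqrt\sigma X^\dagger)=\mcC_{\frac12,\frac12}(X,\sigma)^2,
\end{equation}
which is exactly the function $f$ of Fact~\ref{thm:DPI}(1) with $p=q=\tfrac12$.

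\textbf{Step 2 (operator sandwich).} By Fact~\ref{thm:DPI}(2), $f$ does not decrease under channels acting off $A$. Let $\mcR:B\to BC$ be a recovery channel, for instance the rotated/averaged Petz map of \cite{ref:fawzi2015approximate}, and set $\hat\rho\EqDef\mcR(\rho_{AB})$. Applying DPI to $\Tr_C$ and then to $\mcR$ gives, for every $X$,
\begin{equation}
\langle X,K_{\rho}X\rangle\le\langle X,K_{\rho_{AB}}X\rangle\le\langle X,K_{\hat\rho}X\rangle .
\end{equation}
Hence, as operators on $L(\BBH_A)$,
\begin{equation}
0\le K_{\rho_{AB}}-K_\rho\le K_{\hat\rho}-K_\rho .
\end{equation}
For $0\le P\le Q$ we have $\norm{P}_2\le\norm{Q}_2$, so
\begin{equation}
\norm{\tau_{\rho_{AB}}-\tau_\rho}_2\le\norm{\tau_{\hat\rho}-\tau_\rho}_2\le\norm{\tau_{\hat\rho}-\tau_\rho}_1 .
\end{equation}

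\textbf{Step 3 (the two purifications are close).} By monotonicity of the trace norm under partial trace,
\begin{equation}
\norm{\tau_{\hat\rho}-\tau_\rho}_1\le\norm{\ketbra{\sqrt{\hat\rho}}-\ketbra{\sqrt\rho}}_1\le 2\,\norm{\sqrt{\hat\rho}-\sqrt\rho}_2 .
\end{equation}
Expanding the square, $\norm{\sqrt{\hat\rho}-\sqrt\rho}_2^2=2\big(1-\Tr\sqrt{\hat\rho}\sqrt\rho\big)$. The Holevo affinity obeys $\Tr\sqrt{\hat\rho}\sqrt\rho\ge \mathrm F(\hat\rho,\rho)^2$, and Fawzi--Renner gives $\mathrm F(\hat\rho,\rho)^2\ge 2^{-\condinf ACB_\rho}$. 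Together these yield $\norm{\sqrt{\hat\rho}-\sqrt\rho}_2^2\lesssim \condinf ACB_\rho$, and the theorem follows with a constant independent of $d_A$.

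\textbf{Main obstacle.} The delicate points are in Step 2. First, the identification of $\tau$ with $K$ must genuinely be an HS-isometry; here the transpose convention on $\bar A$ has to be tracked. Second, the DPI must hold in the operator (Loewner) sense on $L(\BBH_A)$ and not only for Hermitian $X$. This should follow because Fact~\ref{thm:DPI} holds for arbitrary complex $O_A$, and $\langle X,KX\rangle$ for all complex $X$ determines the Hermitian operator $K$. If the affinity inequality in Step 3 turned out to be too lossy, the fallback is to use the purified-distance bound on $\norm{\sqrt{\hat\rho}-\sqrt\rho}_2$ directly, which still gives the $I^{1/2}$ scaling.
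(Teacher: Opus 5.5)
Your proof is correct and is essentially the paper's argument: the paper also sandwiches $\langle O\otimes O^*\rangle_{\sqrt{\cdot}}=\mcC_{\frac12,\frac12}(O,\cdot)^2$ between $\rho_{ABC}$, $\rho_{AB}$ and a Fawzi--Renner recovered state using the data-processing inequality, and then bounds $\norm{\sqrt{\rho}-\sqrt{\sigma}}_2\lesssim I^{1/2}$ (via \cref{eq:DBvs2} rather than your H\"older bound $\Tr(\sqrt{\hat\rho}\sqrt{\rho})\ge\mathrm F^2$). The one difference is the last step: you use the Loewner order on the realigned marginal together with $0\le P\le Q\Rightarrow\norm{P}_2\le\norm{Q}_2$, whereas the paper works dually, splitting an arbitrary test operator $O_{A\bar A}$ into four positive-type pieces via \cref{lem:CJ} (a Choi--Jamio{\l}kowski decomposition) and then invoking duality of the 2-norm; your version is the same idea in primal form and avoids the four-term split.
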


\begin{corollary}[Clustering in $\ket{\sqrt\rho}$]
\label{thm:canonical-clustering}
    Let $\rho$ be an SRC state on a $D$-dimensional lattice.
    Then its canonical purification $\ket{\sqrt{\rho}}$ exhibits decay of correlations in the sense of connected two-point correlators:
    \begin{equation} \label{eq:thm-canonical}
        \expval{O_{A_1\bar A_1}O_{A_2\bar A_2}}_{\sqrt{\rho}} -        \expval{O_{A_1\bar {A_1}}}_{\sqrt{\rho}}                 \expval{O_{A_2\bar A_2}}_{\sqrt{\rho}} 
        \approx 0,
    \end{equation}
    where the error term is upper bounded by 
        $C\norm{O_{A_1\bar A_1}}_2 \norm{O_{A_2\bar A_2}}_2 e^{-\frac 12
        \Big(\frac{\dist(A_1,A_2)}{\xi}\Big)^{\frac 1D}}$.
\end{corollary}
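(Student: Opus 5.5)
The plan is to mirror the proof of \cref{thm:clusterCpq}, with \cref{thm:TFD-LC} playing the role of \cref{thm:local-comp-1}. The key observation is that the reduced state of $\ket{\sqrt{\rho}}$ on $A\bar A$ is governed by the Rényi-1 structure at $p=q=\tfrac12$. Concretely, for any operator $O_{A\bar A}=\sum_{ij} X_i\otimes \bar Y_j$, the expectation $\expval{O_{A\bar A}}_{\sqrt\rho}$ is a linear combination of $\Tr(\sqrt\rho X_i\sqrt\rho Y_j^T)$. Hence \cref{thm:TFD-LC} gives approximate local computability of all such correlators. Pairing with Hilbert--Schmidt (Hölder with the $2$-norm) costs a factor $\norm{O}_2\cdot \condinf{A}{C}{B}^{1/2}$, which explains the $\norm{\cdot}_2$ norms in the statement.

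\textbf{Step 1: geometry.} Partition the lattice as in \Fig{fig:corr-proof}. Let $A=A_1A_2$, let $B_i$ be annular buffers around $A_i$ of radius $r_2$, and let $C$ be the rest. Set $r=\dist(A_1,A_2)$ and choose $r_2$ (and an inner splitting $r_1$) as in the stretched-exponential choice \cref{eq:sublinear-split}.

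\textbf{Step 2: local computability.} Apply \cref{thm:TFD-LC} three times. First, to $A_1$--$B_1$--$(\text{rest})$, so that the $A_1\bar A_1$ marginal of $\ket{\sqrt\rho}$ is close to that of $\ket{\sqrt{\rho_{A_1B_1}}}$. Second, similarly for $A_2$. Third, to $A$--$B$--$C$, so that the $A\bar A$ marginal is close to the $A\bar A$ marginal of $\ket{\sqrt{\rho_{AB}}}$. Each error is bounded by the square root of the relevant CMI, which SRC (\Def{def:gapped}) makes $\lesssim g(\cdot)^{1/2}e^{-r_2/2\zeta}$. The third CMI $\condinf{A_1A_2}{C}{B_1B_2}$ is bounded via the chain rule by CMIs of the first two types plus correlation terms, as in the general proof of \cref{thm:clusterCpq}.

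\textbf{Step 3: factorization.} If $\rho_{AB}=\rho_{A_1B_1}\otimes\rho_{A_2B_2}$, then $\sqrt{\rho_{AB}}$ factorizes, so $\ket{\sqrt{\rho_{AB}}}=\ket{\sqrt{\rho_{A_1B_1}}}\otimes\ket{\sqrt{\rho_{A_2B_2}}}$ and the connected correlator vanishes exactly. In general this is only approximate, so one needs a continuity estimate for $\sigma\mapsto\ket{\sqrt\sigma}$. Powers–Størmer gives $\norm{\sqrt{\sigma}-\sqrt{\tau}}_2^2\le\norm{\sigma-\tau}_1$, and therefore
\[
\bigl\|\ketbra{\sqrt{\rho_{AB}}}-\ketbra{\sqrt{\rho_{A_1B_1}\otimes\rho_{A_2B_2}}}\bigr\|_1\lesssim \norm{\rho_{AB}-\rho_{A_1B_1}\otimes\rho_{A_2B_2}}_1^{1/2}.
\]
The right-hand side is controlled by the correlation decay between $A_1B_1$ and $A_2B_2$, which are separated by $r-2r_2$. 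Converting the operator correlation \eqref{eq:DOC} into a trace-norm bound costs a factor $d_{A_1B_1}=e^{O(r_2^D)}$ via \cref{eq:third-equivlence}. This cost is exactly what forces the sublinear choice $r_2\sim r^{1/D}$ and yields the stretched exponential with the factor $\tfrac12$ coming from the square roots.

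\textbf{Step 4: combine.} Collect the errors from Steps 2 and 3 via the triangle inequality on the correlator, using Hölder with the $2$-norms of the $O$'s for the Hilbert--Schmidt errors.

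The main obstacle is the bookkeeping in Step 3 against Step 2: the buffers must be large enough for the CMI errors to be small, but small enough that $e^{O(|A_iB_i|)}$ does not overwhelm $e^{-r/\eta}$. I would first try $r_2=(\alpha r)^{1/D}$ with small $\alpha$. This makes the correlation term $e^{O(\alpha r)}e^{-r/2\eta}$, while the CMI terms are $e^{-\Omega(r^{1/D})}$ and dominate, giving the stated $e^{-\frac12(r/\xi)^{1/D}}$. A secondary concern is bounding the joint CMI $\condinf{A}{C}{B}$ when $B$ is disconnected. If the chain rule is too lossy, I would instead apply \cref{thm:TFD-LC} sequentially: first remove $C$ relative to $A_1$, and then relative to $A_2$.
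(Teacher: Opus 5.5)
Your proposal is correct and follows essentially the same route as the paper. The paper applies \cref{thm:TFD-LC} three times, paired with $\norm{\cdot}_2$. It handles tensorization through continuity of $\sigma\mapsto\ket{\sqrt\sigma}$, using the Bures distance with \cref{eq:DBvs2} and Fuchs--van de Graaf, which is equivalent to your Powers--St\o rmer estimate. It then takes the buffer choice $r\sim\dist(A_1,A_2)^{1/D}$ from the proof of \cref{thm:clusterCpq}.

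One small simplification: the joint CMI needs no correlation terms. The chain rule and strong subadditivity give $\condinf{A_1A_2}{C}{B_1B_2}\le\condinf{A_1}{A_2B_2C}{B_1}+\condinf{A_2}{A_1B_1C}{B_2}$ directly (\cref{eq:entropy-calculus}), which also removes the concern about the disconnected buffer.
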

Here, the 2-norm $\norm{O_{A_1\bar A_1}}_2$ is defined by viewing $O_{A_1\bar A_1}$ as an operator on $A_1\bar A_1$ rather than on the entire doubled system;
the same convention applies to $\norm{O_{A_2\bar A_2}}_2$.
Similar to \cref{thm:clusterCpq}, $\xi$ depends on $\eta,\zeta,D$, while the constant $C$ depends on $|A_1|,|A_2|$.
If, in addition, the SRC state $\rho$ satisfies \cref{eq:tracedecaypoly}, then the error term can be improved to decay exponentially in $\dist(A_1,A_2)$.
One difference is that we have 2-norm rather than $\infty$-norm.
As will become evident from the proof, this is because we consider operators in the form of $O_{A\bar A}$, which act on both the physical system and the purification system.
In contrast, operators in  \cref{thm:clusterCpq} act on the physical system only.

We now prove \cref{thm:TFD-LC} and \cref{thm:canonical-clustering} in
the simplified setting in which both MI and CMI vanish exactly beyond a certain length scale. 
Full proofs for the general case can be found in appendix \ref{app:TFDproof}.

\begin{proof}[Proof 
(for vanishing MI and CMI)]
First, consider operators of the form $O_A\otimes O^*_{\bar A}$, acting on the physical and purifying system, respectively. 

For such operators, we have:
\begin{equation} \label{eq:CP-Cpq}
        \mcC_{\frac{1}{2},\frac{1}{2}}(O,\rho)
        =\left[\Tr(\sqrt{\rho}O\sqrt{\rho}O^\dagger)\right]^{\frac{1}{2}}
        = \expval{O\otimes O^*}{\sqrt{\rho}}^{\frac{1}{2}}.
    \end{equation}
Recall that the Markov condition implies local computability (\cref{thm:local-comp}(1)), which asserts that
\begin{equation}
        \mcC_{\frac12,\frac12}(O_A,\rho_{AB}) = \mcC_{\frac12,\frac12}(O_A,\rho_{ABC}).
\end{equation}
Therefore, whenever $\rho_{ABC}$ is a QMC, 
\begin{equation}
\expval{O_A\otimes O^*_{\bar A}}_{\sqrt{\rho_{ABC}}}=\expval{O_A\otimes O^*_{\bar A}}_{\sqrt{\rho_{AB}}}.
\end{equation}

Due to linearity, the above equation also applies to any operator 
that admits a decomposition\footnote{The condition $\lambda_i\geq 0$ is not required here in the exact setting. 
We impose this requirement for the convenience when proving the general case.}
\begin{equation}\label{eq:positiveOAA}
    \sum_i \lambda_i\, O_{iA}\otimes O_{i\bar A}^*,~~\lambda_i\geq 0.
\end{equation}
Now we claim that:
\begin{lemma}\label{lem:CJ}
    Any operator $O_{A\bar A}$ admits a decomposition as:
\begin{equation}\label{eq:OAAdecompose} 
    O_{A\bar{A}} = O_{A\bar{A}}^{(1)} - O_{A\bar{A}}^{(2)} + iO_{A\bar{A}}^{(3)}- iO_{A\bar{A}}^{(4)}, 
\end{equation}
where each $O_{A\bar{A}}^{(k)}$ admits a decomposition in the form of \cref{eq:positiveOAA}
and $\norm{O_{A\bar{A}}^{(k)}}_2\leq \norm{O_{A\bar{A}}}_2$ ($k=1,2,3,4$).
\end{lemma}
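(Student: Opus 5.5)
The plan is to identify operators on $A\bar A$ with operators on the Hilbert space $L(\BBH_A)$ equipped with the Hilbert--Schmidt inner product. Under this identification, operators of the form \cref{eq:positiveOAA} correspond exactly to positive semidefinite operators. The four-term splitting of \cref{eq:OAAdecompose} is then the standard Cartesian decomposition followed by the Jordan decomposition, and everything is transported back by an isometry.

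\emph{Step 1: the realignment map.} Write $|X\rangle\rangle$ for an operator $X\in L(\BBH_A)$ regarded as a vector, with $\langle\langle X|Y\rangle\rangle=\Tr(X^\dagger Y)$. Define
\[
\mathcal J:\ L(\BBH_A\otimes\BBH_{\bar A})\to L\big(L(\BBH_A)\big)
\]
by linear extension of $\mathcal J(E_{ij}\otimes E_{kl}^*)=|E_{ij}\rangle\rangle\langle\langle E_{kl}|$. Here $E_{ij}$ are the matrix units and $E_{kl}^*=E_{kl}$ in the computational basis.

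I then need to check three properties.
\begin{enumerate}[(i)]
\item $\mathcal J(X\otimes Y^*)=|X\rangle\rangle\langle\langle Y|$ for all $X,Y$. Both sides are linear in $X$ and antilinear in $Y$: entrywise conjugation $Y\mapsto Y^*$ is antilinear, and so is the bra $\langle\langle Y|$. So it suffices to verify the identity on matrix units, where it holds by definition.
\item $\mathcal J$ is bijective.
\item $\mathcal J$ is an isometry for $\norm{\cdot}_2$. It sends the orthonormal basis $\{E_{ij}\otimes E_{kl}\}$ to the orthonormal basis $\{|E_{ij}\rangle\rangle\langle\langle E_{kl}|\}$ of $L(L(\BBH_A))$, so it is unitary with respect to the Hilbert--Schmidt inner products. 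This gives (ii) and (iii) together.
\end{enumerate}

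\emph{Step 2: decompose on the other side.} Let $M=\mathcal J(O_{A\bar A})$. Write $M=H+iK$ with
\[
H=\tfrac12(M+M^\dagger),\qquad K=\tfrac1{2i}(M-M^\dagger).
\]
Both are Hermitian, and by the triangle inequality $\norm{H}_2,\norm{K}_2\le\norm{M}_2$. Next split $H=H_+-H_-$ and $K=K_+-K_-$ into positive and negative parts using the spectral decomposition. These parts have orthogonal supports, so $\norm{H}_2^2=\norm{H_+}_2^2+\norm{H_-}_2^2$, hence $\norm{H_\pm}_2\le\norm{H}_2\le\norm{M}_2$, and likewise for $K_\pm$.

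Each of the four positive parts has a spectral decomposition $\sum_i\lambda_i|O_i\rangle\rangle\langle\langle O_i|$ with $\lambda_i\ge0$. By Step 1(i), its preimage under $\mathcal J$ is $\sum_i\lambda_i O_{i}\otimes O_{i}^*$, which is exactly of the form \cref{eq:positiveOAA}.

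\emph{Step 3: pull back.} Set $O^{(1)}=\mathcal J^{-1}(H_+)$, $O^{(2)}=\mathcal J^{-1}(H_-)$, $O^{(3)}=\mathcal J^{-1}(K_+)$ and $O^{(4)}=\mathcal J^{-1}(K_-)$. Linearity of $\mathcal J^{-1}$ gives \cref{eq:OAAdecompose}. The isometry property gives $\norm{O^{(k)}}_2\le\norm{M}_2=\norm{O_{A\bar A}}_2$.

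The only delicate point is Step 1(i). The complex conjugate in $O_{\bar A}^*$ is what makes the correspondence linear rather than sesquilinear, so the map has to be defined on matrix units with this convention made consistent. Once that is done, the rest is routine linear algebra.
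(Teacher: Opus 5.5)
Your proof is correct and follows the paper's argument. The paper fixes an orthonormal basis $\{\hat P_a\}$ of $\mathrm L(\BBH_A)$, writes $O_{A\bar A}=\sum_{a,b}M_{ab}\hat P_a\otimes\hat P_b^*$, splits $M$ into Hermitian/anti-Hermitian and then positive/negative parts, and transports back by orthonormality; here $M$ is exactly the matrix of your $\mathcal J(O_{A\bar A})$ in the basis $\{|\hat P_a\rangle\rangle\}$, and the paper's footnote notes this same basis-free Choi--Jamio{\l}kowski reading.
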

As a result, for any operator $O_{A\bar A}$ and QMC $\rho_{ABC}$, we have:
\begin{equation}\label{eq:TFDlocalcomp}
\expval{O_{A\bar{A}}}_{\sqrt{\rho_{ABC}}}
=\expval{O_{A\bar{A}}}_{\sqrt{\rho_{AB}}}.
\end{equation}

The rest of arguments is the same as \cref{thm:clusterCpq}.
In short, decomposing $A=A_1A_2$, $B=B_1B_2$, $O_{A\bar A}=O_{A_1\bar{A_1}}\otimes O_{A_2\bar{A_2}}$
the factorization $\rho_{AB}=\rho_{A_1B_1}\otimes \rho_{A_2B_2}$ implies
\begin{equation}
    \expval{O_{A\bar{A}}}_{\sqrt{\rho_{AB}}}=
    \expval{O_{A_1\bar{A_1}}}_{\sqrt{\rho_{AB}}}
    \expval{O_{A_2\bar{A_2}}}_{\sqrt{\rho_{AB}}}.
\end{equation}
Combining it with \cref{eq:TFDlocalcomp} (and that for $A_1$ and $A_2$), we arrive at \cref{eq:thm-canonical} with 0 in the r.h.s..
\end{proof}

\begin{proof}[Proof of lemma \ref{lem:CJ}]
(In this proof only, we use hat for operators.)
Fixing an orthonormal  
basis $\{\hat P_a\}$ for $\mathrm L(\BBH_A)$, $\Tr(\hat P_a^\dagger \hat P_b)=\delta_{ab}$,
we decompose $\hat O=\hat O_{A\bar A}$ as:
\begin{align} \label{eq:fist-OAA}
    \hat O=\sum_{a,b} M_{ab} \hat P_a \otimes \hat P_b^* .
\end{align}
The matrix $M=(M_{ab})$ is arbitrary, but we may decompose it
into Hermitian and anti-Hermitian parts, and then take the positive and negative parts of each, resulting in
$M=M^{(1)}-M^{(2)}+iM^{(3)}-iM^{(4)}$, where $\norm{M^{(k)}}_2 \le\norm{M}_2 $.
We then define
\footnote{
Readers preferring a basis-free approach may notice that we are performing a Choi–Jamiołkowski transformation on $\hat{O}$ and decomposing the transformed operator.}
\begin{align} \label{eq:second-OAA}
    \hat O^{(k)}\EqDef \sum_{a,b} M^{(k)}_{ab} \hat P_a \otimes \hat P_b^*,~~~~k=1,2,3,4.
\end{align}

We claim that each $\hat O^{(k)}$ admits a decomposition in the form of \cref{eq:positiveOAA}.
Consider $k=1$ for example.
We perform the spectral decomposition $M_{ab}^{(1)}=\sum_{j} \lambda_j u_{aj}
 u_{bj}^*$, ($\lambda_j\ge 0$).
Hence, \cref{eq:second-OAA} becomes
\begin{align}
    \hat O^{(1)} = \sum_{j} \lambda_j \hat S_j \otimes \hat S_j^* ,
\end{align}
where $\hat S_j=\sum_a u_{aj}\hat P_a$, satisfying the form of \cref{eq:positiveOAA}.

Finally, orthonormality of $\{\hat P_a\}$ implies $\norm{\hat O^{(k)}}_2 = \norm{M^{(k)}}_2$ and $\norm{M}_2 = \norm{\hat O}_2$, hence
\begin{align}
    \norm{\hat O^{(k)}}_2 \le  \norm{\hat O}_2  .
\end{align}
\end{proof}

In particular, since thermal Gibbs states of local Hamiltonians are always approximately Markovian \cite{ref:Poulin2012markov,ref:Kuwahara2025CMI,ref:chen2025markovian}, our result implies, in essence, that Gibbs states are short-range correlated if and only if its canonical purification (the thermofield double) is short-range correlated. 

\Cref{thm:canonical-clustering} also shows that, if we have a continuous family of SRC mixed states, then the corresponding family of canonically purified states always exhibit decay of correlations.
One physical interpretation is that a phase transition of the canonical purification always implies a phase transition in the mixed state.
In other words, the classification of (mixed or pure state) phases based on canonical purification is at least as coarse as the classification based on MI and CMI.
This partially justifies previous studies using canonical purification to understand mixed state phases \cite{lin2021entanglement, PhysRevB.94.155125, lu2024bilayerconstructionmixedstate}. 

However, Ref.~\cite{chiralpaper} shows that there exist two mixed states that are in different mixed-state phases, yet their canonical purifications are in the same pure state phase.
Therefore, the classification based on canonical purification is strictly more coarse than the mixed state classification.

\subsection{On the converse implication}
It is natural to ask whether the converse of \cref{thm:canonical-clustering} holds: namely, do short-range correlations in the canonical purification imply an SRC mixed state (\Def{def:gapped})? 
Note that the decay of correlation is obvious, so the non-trivial property to be established is the decay of CMI.

We provide a partial answer, affirming that this is indeed the case under certain conditions. 
\begin{theorem}[Clustering in $\ket{\sqrt{\rho}}$ $\Rightarrow$ SRC]
    (1) For a stabilizer (mixed) state\footnote{Here the stabilizer mixed state for a stabilizer group $S$ is the maximally mixed state in the stabilizer subspace. We do not assume $S$ is locally generated, otherwise $\rho$ is already a Markov network \cite{ref:Poulin2012markov} and there is nothing to prove.} $\rho$, if its canonical purification obeys decay of MI, then $\rho$ has decay of CMI.
    
    (2) For a 1D classical (mixed) state $\rho$, if its canonical purification has exact zero correlation length, i.e., $I(AA':CC')_{\ket{\sqrt{\rho}}}=0$ for any $A$ and $C$ that are separated by at least one site, then $\rho$ has superpolynomial decay of CMI.
\end{theorem}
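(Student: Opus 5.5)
For part (1), I will work directly with the stabilizer structure. Let $\rho = 2^{-(n-k)}\ldots$ be the normalized projector onto the code space of a stabilizer group $S$, so $\rho = \frac{1}{2^n}\sum_{s\in S} s$ and $\sqrt\rho\propto\rho$. The canonical purification is then itself a stabilizer state on the doubled system, with stabilizer group generated by $\{s\otimes \bar{\Id} : s\in S\}$ together with the Bell-type stabilizers $\{P\otimes P^*: P \text{ Pauli}\}$ restricted to those commuting with $S$; concretely $\ket{\sqrt\rho}\propto (\rho\otimes\Id)\ket{\Phi^+}$, and its stabilizer group is $\{s\otimes \Id\}\cdot\{P\otimes P^* : P\in C(S)\}$, where $C(S)$ is the centralizer. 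For a stabilizer state, the entropy of any region is $|X|-\log_2|G_X|$, with $G_X$ the subgroup of stabilizers supported in $X$. Similarly, for the mixed state $\rho$, $S(\rho_X)=|X|-\log_2|S_X|$. I will write $I(A:C|B)_\rho$ as $\log_2|S_{AB}|+\log_2|S_{BC}|-\log_2|S_B|-\log_2|S_{ABC}|$ and $I(A\bar A:C\bar C)_{\sqrt\rho}$ analogously, using $G_X$ for $X\in\{A\bar A, C\bar C, A\bar AC\bar C\}$. The key step is to express $|G_{X\bar X}|$ in terms of $S$. Elements of $G$ supported on $X\bar X$ are products $s\cdot(P\otimes P^*)$, where $P\in C(S)$ and $s P$ is supported in $X$. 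I expect this to give $\log|G_{X\bar X}| = \log|S_X| + \log|C(S)_X|$ up to quotienting by $S\cap$ (centralizer), where $C(S)_X$ is the group of logicals/stabilizers restricted to $X$. By the cleaning-lemma-type identity $\log|C(S)_X| = 2|X| - \log|S_{X^c}|$-type relations (the standard counting $|S_X|\,|C(S)_{X}|$ relation for complementary regions), the TFD mutual information should become a combination of $\log|S_\cdot|$ terms. Then I will aim to show the identity
\[
I(A\bar A : C\bar C)_{\sqrt\rho} = I(A:C|B)_\rho + I(A:C)_\rho + (\text{nonnegative terms}),
\]
with $B$ the complement of $AC$, or at least $I(A\bar A:C\bar C)\ge I(A:C|B)_\rho$. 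Since $A,B,C$ partition the system and $\rho$ is stabilizer, CMI is an integer-valued combinatorial quantity, so decay of TFD MI, forcing it to be $0$ at large distance, gives vanishing CMI. The main obstacle is getting the group-counting identity exactly right, in particular handling elements of $S$ that contribute to both the $s\otimes\Id$ and the $P\otimes P^*$ parts.

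For part (2), let $\rho=\sum_x p(x)\ketbra{x}$, so that $\ket{\sqrt\rho}=\sum_x\sqrt{p(x)}\ket{x}\ket{x}$, a classical-coherent state. The reduced state on $X\bar X$ is $\sum_{x_X,x'_X}\sum_{y}\sqrt{p(x_X y)p(x'_X y)}\ket{x_Xx_X}\bra{x'_Xx'_X}$. The condition $I(A\bar A:C\bar C)=0$ for separated $A,C$ means this reduced operator factorizes. I would take diagonal blocks, which give $p(x_A,x_C)=p(x_A)p(x_C)$, and off-diagonal entries, which give Bhattacharyya-type factorization: $\sum_y\sqrt{p(x_Ax_Cy)p(x'_Ax'_Cy)} = F_A(x_A,x'_A)F_C(x_C,x'_C)$. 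In 1D, with $A=[1,a]$, a single-site buffer $b$, and $C$ beyond, I will try to show that these identities for all choices of separated intervals force near-Markovianity. The natural quantity is the fidelity between the conditionals $p(\cdot_C|x_A,x_b)$ for different $x_A$. Using the factorization iteratively over growing buffers $b$ of length $\ell$ (applying the hypothesis with the buffer included in the traced region $y$), I expect a recursive inequality in which the CMI across a buffer of length $\ell$ contracts by a factor depending on $\ell$, giving superpolynomial rather than exponential decay. This step is the hardest, and it is not clear what the exact contraction is. I would first test it on explicit examples, such as hidden-Markov chains, to identify the right monotone, likely the Rényi-$1/2$ (Bhattacharyya) CMI, and then convert to von Neumann CMI by continuity.
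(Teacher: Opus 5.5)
Part (1) of your proposal is sound but takes a different route from the paper. Part (2) has a genuine gap.

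\textbf{Part (1).} The direct counting goes through, with two corrections.

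First, there is no quotient to take. An element $(sP)\otimes P^*$ of the doubled stabilizer group that is supported on $X\bar X$ forces $P\in C(S)_X$ (from the $\bar X$ factor), and then $s\in S_X$. The map $(s,P)\mapsto (sP)\otimes P^*$ is injective, so $|G_{X\bar X}|=|S_X|\,|C(S)_X|$ exactly.

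Second, the centralizer count you guessed has the wrong form. A Pauli on $X$ commutes with $S$ iff it commutes with the restricted group $S|_X$. That group has order $|S|/|S_{X^c}|$, since the kernel of restriction is $S_{X^c}$. Symplectic complementation then gives $\log|C(S)_X| = 2|X|-\log|S|+\log|S_{X^c}|$.

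Together these give $\mathrm{S}(X\bar X)_{\ket{\sqrt\rho}}=\log|S|-\log|S_X|-\log|S_{X^c}|=I(X:X^c)_\rho$. This is exactly the lemma the paper proves by a different method: it cites the Bravyi et al.\ local-Clifford normal form (Bell pairs, classically correlated bits, decoupled parts) and checks the identity summand by summand.

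Apply the lemma to $A|BC$, $C|AB$ and $B|AC$, and use purity, $\mathrm{S}(B\bar B)=\mathrm{S}(A\bar AC\bar C)$. This yields
\[
I(A\bar A:C\bar C)_{\ket{\sqrt\rho}} = I(A:BC)_\rho+I(C:AB)_\rho-I(B:AC)_\rho = I(A:C)_\rho+I(A:C|B)_\rho .
\]
So your ``nonnegative terms'' vanish. No integrality argument is needed, since the CMI is bounded directly by the TFD mutual information.

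Your formula for the CMI also has the wrong sign. With $\mathrm{S}(\rho_X)=|X|-\log|S_X|$ one gets $I(A:C|B)_\rho=\log|S|+\log|S_B|-\log|S_{AB}|-\log|S_{BC}|$.

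Your route is self-contained and elementary, and it carries over verbatim to prime-dimensional qudits. The paper's route is shorter given the cited structure theorem.

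\textbf{Part (2).} The step you flag as hardest is a recursion in which the CMI across a buffer of length $\ell$ contracts. That step is the whole content of the claim, and nothing in your plan produces it. The diagonal and Bhattacharyya factorizations only restate $I(A\bar A:C\bar C)=0$. You give no mechanism that turns them into a bound on $I(A:C|B)_\rho$, and an unspecified per-step contraction would not be expected to deliver a superpolynomial rate.

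The missing idea is to pass through measurement-induced entanglement:
\begin{enumerate}
\item Measure $B\bar B$ in the computational basis on $\ket{\sqrt\rho}$. The outcome $s_B\overline{s_B}$ occurs with probability $p(s_B)$, and it leaves $A\bar AC\bar C$ in the canonical purification of the conditional distribution $p(\cdot|s_B)$.
\item For such a classical-coherent pure state, $I(A:C)_{p(\cdot|s_B)}\le I(A:C\bar C)=\mathrm{S}(A\bar A)$, because $\mathrm{S}(A)=\mathrm{S}(\bar A)$.
\item Averaging over outcomes bounds $I(A:C|B)_\rho$ by the average post-measurement entanglement between $A\bar A$ and $C\bar C$.
\end{enumerate}

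The superpolynomial decay then comes from a nontrivial external input: Hastings' result (Prop.~3.1 of his paper on nonnegative wavefunctions). It states that a 1D pure state with nonnegative amplitudes and exactly zero correlation length has superpolynomially decaying measurement-induced entanglement. The state $\ket{\sqrt\rho}=\sum_s\sqrt{p_s}\ket{s\bar s}$ has nonnegative amplitudes, so the result applies.

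Your plan never uses this sign-free structure, and the argument hinges on it. As it stands, part (2) is unproved.
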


\begin{proof}
    (1) We first claim that for a bipartite stabilizer state $\rho_{AB}$, we have
    \begin{equation}\label{eq:stabnogap}
        \mutinf AB_\rho = S(A\Bar{A})_{\ket{\sqrt\rho}}.
    \end{equation}
    To show it, we note that \cRef{bravyi2006ghz} implies that $\rho_{AB}$ is local-Clifford equivalent to the tensor product of maximally entangled states, 
    maximally correlated classical states ($\frac{1}{2}(\ketbra{00}+\ketbra{11})$), and decoupled states on $A$ and $B$.
    Each state above contributes equally to both sides of \cref{eq:stabnogap}, hence \cref{eq:stabnogap} holds.    

    Applying \cref{eq:stabnogap} three times on $A|BC$, $C|AB$ and $B|AC$, a straightforward calculation shows that for any tripartite stabilizer state $\rho_{ABC}$, we have:
    \begin{equation}
        \mutinf{A}{C}_\rho+\condinf{A}{C}{B}_\rho=\mutinf{A\Bar{A}}{C\Bar{C}}_{\ket{\sqrt\rho}}.
    \end{equation}
    Therefore, for stabilizer states, decay of MI in canonical purification implies decay of CMI in the original mixed state.

    (2) Consider a classical state $\rho=\sum_s p_s\ketbra{s}$ where $s$ denotes bit strings.
    Its canonical purification is $\ket{\sqrt{\rho}} = \sum_s \sqrt{p_s}\ket{s\bar{s}}$.
    We note that, for this state in the computational basis, all amplitudes are nonnegative.
    
    Let us measure $B\bar{B}$ in the computational basis.
    Each measurement outcome has the form of $s_B\overline{s_B}$.
    We have:
    \begin{equation}
        \prob{s_B{s_{\bar B}},\ket{\sqrt{\rho}}} = \sum_{s_A,s_C}p_{s_As_Bs_C} = \prob{s_B,\rho}.
    \end{equation}
    Here the l.h.s is the probability of getting $s_B\overline{s_B}$ upon measuring $B\bar{B}$ on $\ket{\sqrt{\rho}}$ while the r.h.s. is the probability of getting $s_B$ upon measuring $B$ on $\rho$.
    For each measurement outcome, we denote the after-measurement pure state on $A\bar{A}C\bar{C}$ as $\ket{\psi(s_B{s_{\bar B }})}$.
    It is exactly the canonical purification of the conditional classical state $P_{s_B}\rho P_{s_B}/\prob{s_B,\rho}$.
    It follows from monotonicity of MI that \cite{dutta2021canonical}:
    \begin{equation}
        \condinf{A}{C}{s_B} \leq \ent{A\bar{A}}_{\ket{\psi(s_B\overline{s_B})}},
    \end{equation}
    where the l.h.s. is the classical mutual information of $\rho$ conditioned on $B$ being $s_B$.
    Therefore,
    \begin{equation}
    \begin{aligned}
        \condinf{A}{C}{B}_\rho 
        &= \sum_{s_B} \prob{s_B,\rho} \condinf{A}{C}{s_B}\\
        &\leq \sum_{s_B} \prob{s_B\overline{s_B},\ket{\sqrt{\rho}}} \ent{A\bar{A}}_{\ket{\psi(s_B{s_{\bar B}})}}.
    \end{aligned}
    \end{equation}
    The r.h.s. is the average entanglement entropy between between $A\bar{A}$ and $C\bar{C}$ in the post-measurement states after measuring $B\bar{B}$ in the computational basis.
    It is dubbed measurement induced entanglement (MIE) \cite{hastings2016quantum,lin2023probing}.

    Ref.~\cite{hastings2016quantum} (proposition 3.1) shows that if a 1D pure state with nonnegative amplitudes has exact zero correlation length, then its MIE has super-polynomial decay.
    Therefore, in our scenario, $\condinf{A}{C}{B}_\rho$ must decay super-polynomially.
\end{proof}

On the opposite direction, we note that for a tripartite state $\rho_{ABC}$, the vanishing of $A\bar A$-$C\bar C$ correlations in the canonical purification does not imply the vanishing of $\condinf{A}{C}{B}$ in the original state.
See Appendix~\ref{sec:counteregTFD} for an explicit example.

This example alone, however, does not preclude the possibility that the decay of correlations in the canonical purification implies the decay of CMI in the original state.
Rather, it should be taken as an indication that if such a proof exists, it must be an intrinsically many-body phenomenon, not a tripartite one.
We leave this intriguing question for future studies.

\section{Discussion and Outlook}
In this work, we have established a unified framework to characterize stable phases of matter, encompassing both pure and mixed states, in or out of equilibrium. 
We proposed that a phase is ``locally stable" if any local operation can be reversed by a local quantum channel, and we have shown that this operational property is equivalent to the static property of short-range correlations as defined by exponential decay of MI and CMI.  These properties serve as robust, invariant characteristics of a phase. 
Furthermore, we demonstrated that local stability directly implies the clustering of a broad family of non-linear correlators, independent of symmetry considerations. 
In particular, an important result of our work relates mixed and pure states: we proved that a mixed state is locally stable only if its canonical purification exhibits short-range correlations.  As a byproduct, we found that local stability provides a sufficient condition for local post-selection to be implemented locally by a quantum channel, and we also found an equivalent characterization of quantum Markov chains in terms of local computability.

Aside from the technical open questions already mentioned in the text, our work motivates the following questions.

\textbf{Lindbladians and locally stable steady states.}
There remain many further questions regarding relating dynamics to correlation properties.  For example, given a locally stable state, can one construct a parent local Lindbladian with the target state as a steady state?  Does the parent Lindbladian always serve as an efficient recovery against local operations?  If so, then a single trajectory of many measurements, with each disturbance dissipated by Lindbladian dynamics, could be used to probe the state \cite{ref:chen2025metastability,jiang2026predictingpropertiesquantumthermal}.  Furthermore, what are necessary and sufficient conditions on a local Lindbladian for its steady state to be locally stable?  And does such a Lindbladian always rapidly dissipate local perturbations to the steady state?  

It is also interesting to understand the role of the spectral gap in these questions. 
In \cref{sec:dynamics}, we noted that although Proposition \ref{clm:Var-conv} holds in general, initial states arising from local perturbations to a steady state may still remain close to the steady state. 
Along similar lines, for such initial states, a suitable notion of local gap may be more relevant than the global spectral gap. 
We leave a further exploration of this question and its connection to Refs.~\cite{ref:Lucas2025metstablity,ref:chen2025markovian} for future work.

\textbf{Correlation and entanglement in canonical purification.}
In \cref{thm:canonical-clustering}, we established that the canonical purification of SRC states exhibits clustering as measured by the two-point correlation function.
It is natural to ask whether this clustering property can be strengthened.
For example, one might hope to establish decay of local-global or even global-global correlations.

Note that if we can strengthen \cref{thm:TFD-LC} from the 2-norm to the 1-norm, then the same argument used for \cref{thm:canonical-clustering} would also yield decay of local-global correlations — and, under a suitable strengthening of \cref{def:gapped}, even global-global correlations.
However, currently, we are only able to prove this improved version of \cref{thm:TFD-LC} under stronger conditions.
One such condition is that $\rho$ is sufficiently close to an exact quantum Markov chain (see Appendix~\ref{app:2norm1norm}).
Another such condition is that $\ket{\sqrt\rho}$ has Schmidt rank $r = O(1)$ across any bipartition, which implies $\norm{\cdot}_1 \leq \sqrt{r} \norm{\cdot}_2$.
We leave the general case of approximate quantum Markov chains for future study.

In the global-global setting, an even stronger formulation of the question is whether $\ket{\sqrt\rho}$ inherits decay of mutual information whenever $\rho$ itself has decaying MI and CMI. 
More generally, can we bound the entropic quantities of $\ket{\sqrt\rho}$ in terms of those of $\rho$?

\textbf{Classification of phases.} 
Another motivation for strengthening \cref{thm:TFD-LC} is as follows.
If such improvement holds, then Uhlmann's theorem implies that a local channel on $\rho$ would correspond to a local unitary in the canonical purification: for any quantum channel $\Phi_C$, there exists a unitary $U_{BC}$ such that $U_{BC}\ket{\sqrt{\rho_{ABC}}}\approx \sqrt{\ket{\Phi_C(\rho_{ABC})}}$.
Consequently, given a finite-depth locally reversible channel circuit, we may apply the above result repeatedly in a gate-by-gate fashion (see, e.g., Ref.~\cite{ref:Sang2025stability}) and construct a finite-depth unitary circuit, thus strengthening the correspondence between mixed state phases and pure state phases via canonical purification\footnote{We thank Chong Wang for suggesting this correspondence between locally reversible circuits and local unitary circuits on the canonical purification.  See upcoming work \cite{colleagues} in this direction.}.

\begin{acknowledgments} 
We thank Itai Arad, Liang Fu, Leonardo Lessa, Subhayan Sahu, Shengqi Sang, David Sutter, and Chong Wang for many helpful discussions.  Research at Perimeter Institute is supported in part by the Government of Canada through the Department of Innovation, Science and Industry Canada and by the Province of Ontario through the Ministry of Colleges and Universities.  

{\it Note added}: While this work was completed, we became aware of a concurrent independent work \cite{chennote}, which has overlap with the connection between the local stability and short-range correlations properties (\cref{sec:equiv}).
We would also like to draw attention to an upcoming independent work \cite{colleagues}, which has overlap with our \cref{thm:local-comp} ($p=q=\frac{1}{2}$ case) and \cref{thm:TFD-LC},
and pursues the locally reversible channel / local unitary connection in the outlook.

\end{acknowledgments}

\bibliography{Refs}

\onecolumngrid
\appendix

\section{Notations and Useful Facts} \label{app:settings}

\newcommand{\parasec}[1]{\medskip\noindent\textbf{#1.}\enspace\smallskip}

In this appendix, we set out the conventions, notations, and mathematical facts used in the main text. Further mathematical details can be found in standard textbooks, e.g., Refs.~\cite{ref:wilde2013book,ref:Watrous2018book}. 

\parasec{Setup and notations}

We consider $d$-dimensional qudits on a lattice $\Lambda\subset \mathbb{Z}^D$.
Each site $i\in\Lambda$ carries a local Hilbert space $\BBH_i\cong \BBC^d$; for a region $X\subseteq \Lambda$, 
$\BBH_X = \bigotimes_{i\in X}\BBH_i$ with $d_X=\dim(\BBH_X) = d^{|X|}$.

The space of operators of $\BBH\to \BBH$ is denoted by $\LH$.
We only consider finite dimensional spaces, so all operators are bounded.
The space of (mixed) states in a Hilbert space $\BBH$ is denoted by $\mcS(\BBH)$.
Reduced states (partial traces) will be denoted by subscripts, e.g., $\rho_A$.

We use the standard big-O notation $\bigO{\;}$.
We also denote $x\lesssim y$ if $x=\bigO{y}$.

\parasec{Quantum Operations}

Throughout this work, we refer to completely positive trace-non-increasing maps as quantum operations, and refer to trace-preserving quantum operations as quantum channels.
Any quantum operation can be (non-uniquely) decomposed into the Kraus form:
\begin{align}
\Phi(\cdot)=\sum_k F_k(\cdot)F_k^\dagger,~~~~\sum_k F_k^\dagger F_k \leq \Id.
\end{align}
It is a quantum channel if and only if $\sum_k F_k^\dagger F_k = \Id$.

\parasec{Operators and norms}
 
We consider the family of norms on $\LH$ known as Schatten norms, labeled by an index $p\ge 1$ (well-defined for any $p>0$ but we need $p\ge 1$ to fullfill the triangle inequality), defined as
$
    \norm{A}_p = \big( \Tr{|A|^p}\big) ^{1/p},
$
where $|A|= \sqrt{A^\dagger A}$.
The $p=\infty$ case (maximal singular value) is usually referred to as operator norm.
The $p=1$ norm of $A$ (sum of all singular values) is usually referred to as the trace norm, and
admits the following duality form $\norm{A}_1=\sup_{\norm {E}_\infty=1}\Tr{EA}$.

We often make use of the H\"older inequality: for $\frac{1}{p}+\frac 1 q = \frac 1 r$ and $p,q,r>0$:
\begin{align} \label{eq:Holder}
    \norm{A\cdot B}_r \le \norm{A}_p\norm{B}_q.
\end{align}

The Bures distance between states $\rho,\sigma$ is defined as:
\begin{equation}
    D_B(\rho,\sigma) =\min_{u\in\mathrm{unitary}}\|\sqrt\rho - u\sqrt\sigma\|_2.
\end{equation}
It is related to the (unsquared) Uhlmann fidelity $F(\rho,\sigma)\EqDef\|\sqrt\rho\sqrt\sigma\|_1$ by $D_B^2 = 2-2F$.
It is related to the trace norm by the Fuchs–van de Graaf inequality 
\begin{equation}\label{eq:FvG}
    1-F(\rho,\sigma) \le \frac{1}{2}\|\rho-\sigma\|_1 \le \sqrt{1-F(\rho,\sigma)^2},
\end{equation}
which implies
\begin{equation}
    \frac{1}{2}\|\rho-\sigma\|_1\leq  D_B(\rho,\sigma)\le \|\rho-\sigma\|_1^{\frac12}.
\end{equation}
It is related to the 2-norm by:
\begin{equation}\label{eq:DBvs2}
    \bur{\rho}{\sigma}\leq \norm{\sqrt\rho-\sqrt\sigma}_2 \leq \sqrt{2}\bur{\rho}{\sigma}.
\end{equation}
Here, the second inequality follows from Eq.(VII.39) in \cRef{ref:Bhatia-book} (that $ \norm{|A|-|B|}_2 \leq \sqrt{2}\norm{A-B}_2$    for arbitrary $A$ and $B$).

For two pure states $\ket{\psi}$ and $\ket{\phi}$, we use $\norm{\ket{\psi}-\ket{\phi}}$ to denote their distance in the Hilbert space.

\parasec{Entropic quantities}

We make use of standard information measures.
The von Neumann entropy is $\mathrm{S}(\rho) = -\mathrm{Tr}(\rho\log\rho)$ (throughout the work, the base of $\log$ is 2).
The mutual information (MI) and conditional mutual information (CMI) for a state $\rho$ on $ABC$ are defined as:
\begin{align}
    \mutinf AB_\rho &= \mathrm{S}(A) + \mathrm{S}(B) - \mathrm{S}(AB), \\
    \condinf ACB_\rho &= \mathrm{S}(AB) + \mathrm{S}(BC) - \mathrm{S}(ABC) - \mathrm{S}(B).
\end{align}
Both are non-negative by the subadditivity and the strong subadditivity~\cite{ref:Watrous2018book}.
A useful identity is $\condinf ACB=\mutinf {AB}{C}-\mutinf BC$.

We will also deal with a two-parameter family of relative entropy called $\alpha$-$z$ R\'enyi divergence~\cite{jaksic2011entropic,ref:audenaert2015alpha}.
For $\alpha\in(0,1)\cup(1,\infty)$ and $z>0$, it is defined as:
\begin{equation}\label{eq:alpha-z-def}
    \mathrm{D}_{\alpha,z}(\rho\|\sigma) \EqDef \frac{1}{\alpha-1}\log\mathrm{Tr}\!\left[\left(\sigma^{\frac{1-\alpha}{2z}}\rho^{\frac{\alpha}{z}}\sigma^{\frac{1-\alpha}{2z}}\right)^{\!z}\right].
\end{equation}
Setting $z=1$ gives the Petz-type R\'enyi divergence $\mathrm D_\alpha (\rho\|\sigma) \EqDef\frac{1}{\alpha-1} \log \Tr{\rho^\alpha \sigma^{1-\alpha}}$; Setting $z=\alpha$ gives the sandwiched R\'enyi divergence $\tilde{\mathrm D}_\alpha (\rho\|\sigma) \EqDef\frac{1}{\alpha-1} \log \Tr{\big(\rho^{\frac{1-\alpha}{2\alpha}} \sigma \rho^{\frac{1-\alpha}{2\alpha}}
\big)^{\alpha}}$; 
Setting $\alpha\to 1$ recovers the Umegaki relative entropy for all $z$: $\mathrm{D}(\rho\|\sigma)=\Tr(\rho\log\rho-\rho\log\sigma)$.

\parasec{Quantum Markov chains and sufficiency}

A state $\rho_{ABC}$ is an exact $A$-$B$-$C$ quantum Markov chain (QMC) if one of the following mutually equivalent conditions holds \cite{petz1986sufficient}:
\begin{itemize}
    \item $\mathrm{I}(A:C|B)_\rho = 0$;
    \item there exists a quantum channel $\mcR:B\to BC$ such that $\mcR(\rho_{AB})=\rho_{ABC}$;
    \item $\mcP_{\rho_{BC},\mathrm{Tr}_C}(\rho_{AB})=\rho_{ABC}$, where $\mcP_{\rho_{BC},\mathrm{Tr}_C}$ is the Petz recovery map associated with $\rho_{BC}$ and $\mathrm{Tr}_C$ (here the inverse is taken on $\supp(\rho_{B})$):
\begin{equation}\label{eq:Petz-def}
    \mcP_{\rho_{BC},\mathrm{Tr}_C}(\cdot) = \rho_{BC}^{\frac12}\bigl(\rho_{B}^{-\frac12}(\cdot)\rho_{B}^{-\frac12}\otimes \Id_C\bigr)\rho_{BC}^{\frac12}.
\end{equation}
\end{itemize}
 
More generally, a channel $\Phi$ is called sufficient for a pair of states $(\rho,\sigma)$ if there exists a quantum channel $\mcR$ such that $\mcR\circ\Phi(\rho)=\rho$ and $\mcR\circ\Phi(\sigma)=\sigma$.
For example, $\rho_{ABC}$ being a QMC is equivalent to $\Tr_C$ being sufficient for $(\rho_{ABC},\rho_{BC})$.
Sufficiency can be characterized by the equality in the data processing inequality (DPI): 
\begin{equation}
    \mathrm{D}(\Phi(\rho)\|\Phi(\sigma))\le \mathrm{D}(\rho\|\sigma) ,\qquad\text{equality iff $\Phi$ is sufficient}.  
\end{equation}

The $\alpha$-$z$ R\'enyi divergence $\mathrm{D}_{\alpha,z}$ also satisfies DPI for appropriate choice of $(\alpha,z)$.
The equality also implies sufficiency, with the Petz recovery map providing a universal recovery channel. For a proof see \cRef{hiai2024alpha}.

\parasec{Approximate quantum Markov chains and approximate recovery}

The above equivalence between QMC and recovery admits an approximate version.
We call $\rho_{ABC}$ an approximate QMC if the CMI $\condinf{A}{C}{B}$ is small.
We define the fidelity of recovery (FoR) as:
\begin{equation}
    \mathrm{F}(A;C|B)=\max_{\mcR_{B\to AB}}\fid{\rho_{ABC}}{\mcR(\rho_{BC})}.
\end{equation}
The following facts establish an equivalence between small CMI and large FoR:
\begin{fact}[\cRef{ref:fawzi2015approximate}]
\label{fact:CMI}
$\condinf{A}{C}{B}_\rho\leq \epsilon$ implies there is a quantum channel $\mcR:B\rightarrow AB$ such that $\sigma_{ABC}\EqDef \mcR(\rho_{BC})$ satisfies
$\fid{\rho}{\sigma}\geq 2^{-\frac{1}{2}\epsilon}$, and hence, $\frac{1}{2}\norm{\rho-\sigma}_1 \leq \bur{\rho}{\sigma}\leq  \sqrt{(\ln2)\epsilon}$.
\end{fact}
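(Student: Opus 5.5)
The plan is to take the recovery statement directly from the strengthened data-processing inequality of \cRef{ref:fawzi2015approximate}, and then derive the two distance bounds by elementary manipulation. In the form we need, Fawzi and Renner show that for every tripartite $\rho_{ABC}$ there is a channel $\mcR:B\to AB$ (a rotated/twirled Petz map built from $\rho_{BC}$ and $\Tr_A$) such that
\begin{equation*}
\condinf{A}{C}{B}_\rho \;\ge\; -\log \fid{\rho_{ABC}}{\mcR(\rho_{BC})}^2 .
\end{equation*}
Their fidelity is the squared one, $\|\sqrt\rho\sqrt\sigma\|_1^2$, so here it appears as $\mathrm F^2$ in our unsquared convention. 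One could instead invoke the universal recovery map of Junge et al.; only the existence of some recovery channel matters here.

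\textbf{Step 1 (fidelity bound).} Setting $\sigma_{ABC}=\mcR(\rho_{BC})$ and using $\condinf{A}{C}{B}_\rho\le\epsilon$, the inequality above gives $\fid{\rho}{\sigma}^2\ge 2^{-\epsilon}$. Taking square roots yields $\fid{\rho}{\sigma}\ge 2^{-\epsilon/2}$. The only care needed is matching the squared-versus-unsquared fidelity conventions and the base-2 logarithm used throughout the paper.

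\textbf{Step 2 (Bures bound).} From $D_B^2=2-2F$ we get $D_B(\rho,\sigma)^2\le 2(1-2^{-\epsilon/2})$. Since $1-2^{-x}\le x\ln 2$ for $x\ge 0$ (by convexity of $2^{-x}$), this is at most $\epsilon\ln 2$. Hence $\bur{\rho}{\sigma}\le\sqrt{(\ln 2)\epsilon}$.

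\textbf{Step 3 (trace-distance bound).} By the Fuchs--van de Graaf inequality \eqref{eq:FvG}, $\tfrac12\|\rho-\sigma\|_1\le\sqrt{1-F^2}=\sqrt{(1-F)(1+F)}$. Since $1+F\le 2$, this is at most $\sqrt{2(1-F)}=D_B$. This is exactly the first inequality in the displayed chain already recorded after \eqref{eq:FvG}.

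The only nontrivial ingredient is the Fawzi--Renner inequality itself. Its proof, via Rényi/interpolation arguments and a de Finetti or smoothing step, is well beyond a short argument, so I would cite it rather than reprove it. Everything after it is routine algebra.
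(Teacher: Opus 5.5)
Your proposal is correct and follows the paper's route. The paper cites the bound $\condinf{A}{C}{B}_\rho \ge -2\log \fid{\rho}{\mcR(\rho_{BC})}$ from Fawzi--Renner (with $A$ and $C$ swapped, by symmetry of the CMI) and leaves the Bures and trace-distance bounds to follow from $D_B^2=2-2F$ and Fuchs--van de Graaf, which is exactly what your Steps 1--3 spell out.

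One small slip that does not affect the argument: the rotated Petz map recovering $A$ from $B$ is built from the reference state $\rho_{AB}$ and the channel $\Tr_A$, not from $\rho_{BC}$.
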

Moreover, the recovery channel $\mcR$ can always be taken as a rotated Petz map, see \cRef{ref:junge2018petz}.

\begin{fact}\label{fact:FoR->CMI}
    If there exists a quantum channel $\mcP:B\rightarrow AB$ such that $\sigma_{ABC}\EqDef \mcR(\rho_{BC})$ satisfies $\norm{\rho-\sigma}_1\leq \delta$, then
    \begin{align}
        &\condinf{A}{C}{B}_\rho \le \delta\log \min\{d_{AB},d_C\} + 2\kappa(\delta/2),\label{eq:smallCMI1}\\
        &\condinf{A}{C}{B}_\sigma \le \delta\log \min\{d_{B},d_C\}+ 2\kappa(\delta/2) \label{eq:smallCMI2}.
    \end{align}
    where $\kappa(x)=(1+x)h_2(\frac{x}{1+x})$ (here $h_2$ is the binary entropy function).
\end{fact}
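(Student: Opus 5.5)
The plan is to write both conditional mutual informations as differences of mutual informations, $\condinf{A}{C}{B}=\mutinf{AB}{C}-\mutinf{B}{C}$, and combine two ingredients: data processing for the recovery channel, and continuity of mutual information under the trace-distance bound $\norm{\rho-\sigma}_1\le\delta$. The recovery channel $\mcP:B\to AB$ acts only on $B$ and leaves $C$ untouched. Data processing for mutual information therefore gives the key inequality
\begin{equation*}
\mutinf{AB}{C}_\sigma=\mutinf{AB}{C}_{\mcP(\rho_{BC})}\le \mutinf{B}{C}_\rho .
\end{equation*}
Everything else is continuity.

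For \eqref{eq:smallCMI1}, insert the key inequality into the decomposition for $\rho$:
\begin{equation*}
\condinf{A}{C}{B}_\rho=\mutinf{AB}{C}_\rho-\mutinf{B}{C}_\rho\le \mutinf{AB}{C}_\rho-\mutinf{AB}{C}_\sigma .
\end{equation*}
The right-hand side compares the same bipartite quantity on two states that are $\delta$-close in trace norm. Write $\mutinf{X}{Y}=\ent{Y}-\mathrm{S}(Y|X)$, choosing the roles of $X,Y\in\{AB,C\}$ so that $Y$ is the smaller of $AB$ and $C$. Then apply the Alicki--Fannes--Winter continuity bound to the conditional entropy and Fannes--Audenaert to the marginal entropy, both with $\varepsilon=\delta/2$. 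This gives a bound of the form $\delta\log\min\{d_{AB},d_C\}+2\kappa(\delta/2)$, where $\kappa(x)=(1+x)h_2(\tfrac{x}{1+x})$ is the AFW correction.

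For \eqref{eq:smallCMI2}, apply the decomposition to $\sigma$ and use the key inequality on its first term:
\begin{equation*}
\condinf{A}{C}{B}_\sigma=\mutinf{AB}{C}_\sigma-\mutinf{B}{C}_\sigma\le \mutinf{B}{C}_\rho-\mutinf{B}{C}_\sigma .
\end{equation*}
Since $\norm{\rho_{BC}-\sigma_{BC}}_1\le\delta$ by monotonicity of the trace norm under $\Tr_A$, the same continuity argument now runs on the pair $B,C$. This yields $\delta\log\min\{d_B,d_C\}+2\kappa(\delta/2)$.

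The main obstacle is getting exactly the stated constants in the continuity step: a single $\delta\log\min$ term together with $2\kappa(\delta/2)$. Naively adding the marginal-entropy and conditional-entropy continuity bounds produces a somewhat worse dimensional prefactor. I would first try Winter's tight continuity bound for mutual information, or Shirokov's refinement that bounds $|\mutinf{X}{Y}_\rho-\mutinf{X}{Y}_\sigma|$ directly with the $\kappa$-type correction. If the exact constants do not come out, the statement is only used up to $O(1)$ factors later in the paper, so a bound of the same form with slightly larger constants would suffice for all applications, for instance Theorem~\ref{thm:LR->gap}.
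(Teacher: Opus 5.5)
Your proposal is correct and follows the paper's proof: the same split $\condinf{A}{C}{B}=\mutinf{AB}{C}-\mutinf{B}{C}$, the same data-processing step $\mutinf{AB}{C}_\sigma\le\mutinf{B}{C}_\rho$ for a recovery channel acting only on $B$, and then a continuity bound for $\mutinf{AB}{C}$ (respectively $\mutinf{B}{C}$ after tracing out $A$). The paper gets the exact constants by citing Shirokov's tight continuity bound for mutual information directly, which is your fallback; your intermediate claim that Alicki--Fannes--Winter plus Fannes--Audenaert already gives $\delta\log\min\{\cdot\}+2\kappa(\delta/2)$ is not right, since it puts $\tfrac32\delta$ in front of the logarithm, as your last paragraph itself notes.
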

We sometimes simplify the r.h.s. using $\delta\log d +2\kappa(\delta/2)\leq 5\sqrt{\delta}\log d$ for any $d\geq 2$, which follows from maximizing $\sqrt{\delta}+\frac{2\kappa(\delta/2)}{\sqrt{\delta}}$ over $\delta\in[0,2]$.

We note that, while bounds of similar flavor have appeared in previous literature (see Proposition 35 in \cite{berta2015renyi} and Eq.(10) in \cite{ref:fawzi2015approximate}), their r.h.s. only contain $d_C$.
In contrast, for our purpose, we require the $d_{AB}$ bound, as $C$ may be unbounded.

\begin{proof}[Proof of Fact~\ref{fact:FoR->CMI}]
    \begin{equation}
        \condinf{A}{C}{B}_\rho = \mutinf{AB}{C}_\rho-\mutinf{B}{C}_\rho
        =(\mutinf{AB}{C}_\rho-\mutinf{AB}{C}_\sigma)+(\mutinf{AB}{C}_\sigma-\mutinf{B}{C}_\rho).
    \end{equation}
    The second term $\leq 0$ due to the data-processing inequality.
    For the first term, the continuity bound in Ref.~\cite{shirokov2017tight} states that
    \begin{equation}
        \mutinf{AB}{C}_\rho-\mutinf{AB}{C}_\sigma \leq \delta\log \min\{d_{AB},d_C\}+2\kappa(\delta/2)
    \end{equation}
    where $\kappa(x)=(1+x)h_2(\frac{x}{1+x})$.
    This proves \cref{eq:smallCMI1}.

    Similarly,
    \begin{equation}
        \condinf{A}{C}{B}_\sigma = \mutinf{AB}{C}_\sigma-\mutinf{B}{C}_\sigma
        =(\mutinf{AB}{C}_\sigma-\mutinf{B}{C}_\rho)+(\mutinf{B}{C}_\rho-\mutinf{B}{C}_\sigma).
    \end{equation}
    The first term $\leq 0$ due to the data-processing inequality.
    The second term is bounded as \cite{shirokov2017tight}:
    \begin{equation}
        \mutinf{B}{C}_\rho-\mutinf{B}{C}_\sigma \leq \delta\log \min\{d_B,d_C\} +2\kappa(\delta/2).
    \end{equation}
    This proves \cref{eq:smallCMI2}.
\end{proof}

\section{Proofs in \Sec{sec:recovery}}
\label{app:recovery}

\subsection{Local stability from SRC}

We start by proving \Thm{thm:gap->LR}, which states that 
SRC states are locally stable.

\begin{proof}[Proof of \Thm{thm:gap->LR}]
Consider $B_1$ to be an annulus buffer region of radius $r_1$ around $A$, 
$B_2$ to be an annulus buffer region of radius $r_2$ around $B_1$, and denote $C\EqDef (AB_1B_2)^c$ (see \Fig{fig:Gap-recovery}).
We use decay of correlations in $\rho$ to show that for every unnormalized trajectory $F_k\rho F_k^\dagger$ 
and any observable $O\in \mathrm L (B_2 C)$, we have:
\begin{equation}
\begin{aligned}
    \left|\Tr{F_k \rho F_k^\dagger O} - \Tr{F_k \rho F_k^\dagger }\Tr{\rho O}\right|  
    &=
     \left|\Tr{F_k^\dagger F_k \rho  O} - \Tr{ F_k^\dagger F_k \rho }\Tr{\rho O}\right| \\
      &\le 
    \norm{F_k^\dagger F_k }_\infty \norm{O}_\infty \, f(A) e^{-r_1/\eta},
\end{aligned}
\end{equation}
where we used the fact that $F_k$ and $O$ are non-overlapping.
Taking the supremum over all observables $O$ with $\norm{O}_\infty=1$ we get
\begin{align}
    \norm{ \PTr{AB_1}{F_k \rho F_k^\dagger - \rho\Tr{F_k \rho F_k^\dagger }}}_1 & \le 
    \norm{F_k^\dagger F_k }_\infty f(A) \,e^{-r_1/\eta}.
\end{align}
Pulling out the probabilities $p_k= \Tr{F_k \rho F_k^\dagger }$ and summing over all $k$'s we get:
\begin{align}
    \sum_k p_k\norm{ \PTr{AB_1}{\frac{1}{p_k}F_k \rho F_k^\dagger - \rho}}_1 & \le 
    \sum_k \norm{F_k^\dagger F_k }_\infty f(A) e^{-r_1/\eta}
    \le d_A f(A) e^{-r_1/\eta},
\end{align}
where in the last move we used $\sum\norm{F_k^\dagger F_k}\le \sum\Tr{F_k^\dagger F_k}= \Tr{\Id_A}=d_A$.
Here the trace is over $\mathbb{H}_A$.

We finish by applying the recovery map from Fact~\ref{fact:CMI}:
$\mcP:B_2\to AB_1 B_2$ such that
$\norm{\mcP(\rho_{B_2 C})-\rho}_1\le \sqrt{(4\ln 2)\condinf{AB_1}{C}{B_2}} \le \sqrt {(4\ln 2)g(A B_1)}e^{-r_2/2\zeta}$.
This implies 
\begin{equation}
    \begin{aligned}
    \sum_k p_k\norm{ \mcP\circ \PTr{AB_1}{\frac{1}{p_k}F_k \rho F_k^\dagger} - \rho}_1 
    \le& 
    \sum_k p_k\left(\norm{ \mcP\circ \PTr{AB_1}{\frac{1}{p_k}F_k \rho F_k^\dagger- \rho} }_1
    + \norm{\mcP(\rho_{B_2C})- \rho}_1\right) \\
    \le & d_A f(A) e^{-r_1/\eta} + \sqrt {(4\ln 2)g(A B_1)}e^{-r_2/2\zeta},
\end{aligned}
\end{equation}
where the first inequality is triangle inequality, and the second is data processing inequality with $\mcP$. 
\end{proof}

\subsection{Recovery from dynamics}\label{app:detectRecover}

Here, we present some of the information omitted from \cref{sec:dynamics} on Lindbladian-based recovery.
For this entire section, we assume that $\mcL$ is a Lindbladian satisfying a detailed balance condition with respect to a unique, full-rank steady state $\rho$, having a spectral gap $\gamma>0$. 
Recall the versions of detailed balance condition of focus are:
\begin{align}  \label{def:detailed-balance2}
    \mcL\circ \Gamma = \Gamma \circ \mcL^\dagger, \qquad \Gamma (A) = \begin{cases}
        A  \rho  & \mathrm{GNS}, \\
        \rho^{\frac 12}A \rho^{\frac 12} & \mathrm{KMS} .
    \end{cases}
\end{align}
Here $\mcL^\dagger$ is the (superoperator) adjoint of $\mcL$, defined as $\Tr(A^\dagger \mcL^\dagger (B)) = \Tr((\mcL(A))^\dagger B)$.

First, let us prove proposition \ref{clm:Var-conv}, which states the mixing bound
$\norm{e^{\mcL t}(\tilde \rho)-\rho}_1\le C(\tilde\rho,\rho)e^{-\gamma t }$ 
where $C$ is related to the Rényi-2 divergence or its sandwiched 
analogue, depending on the choice of detailed balance.
This exploits the operational significance of the Lindbladian gap $\gamma$ as a (non-commutative) variance decay exponent~\cite{ref:Temme2010chi}:
\begin{align}\begin{split}\label{eq:Variance}
    \mathrm{Var}_\rho (O_t) & \le \mathrm{Var}_\rho(O_{t=0}) e^{-2\gamma t}, \\
    \text{where}\qquad \mathrm{Var}_\rho (O) & \EqDef \Tr{O^\dagger \Gamma (O)}-|\Tr{O \rho}|^2 .
\end{split}\end{align}
\begin{proof}[Proof of Proposition~\ref{clm:Var-conv}]
    Fix $t\ge0$. Let $E_t$ be the (Hermitian) observable that satisfies $\norm{E_t}_\infty=1$ and
    \begin{align} \label{eq:almost-operator-var}
        \norm{\tilde\rho_t-\rho}_1 = \Tr{E_t\cdot(\tilde \rho_t-\rho)}.
    \end{align}
    Writing $\tilde \rho_t-\rho$ as $\Gamma \big( 
    \Gamma^{-1}(\tilde \rho_t)-\Id\big)$
    and noting that $\Tr{\rho \cdot \big( \Gamma^{-1}(\tilde \rho_t)-\Id\big)}=0$,
    \cref{eq:almost-operator-var} becomes
    \begin{align}
        \norm{\tilde\rho_t-\rho}_1 = \mathrm{Cov}_\rho\Big(E_t,\Gamma^{-1}(\tilde \rho_t)-\Id\Big),
    \end{align}
    where we defined the operator covariance to be $\mathrm{Cov}_\rho(O_A,O_B) \EqDef \Tr{O_A^\dagger \Gamma(O_B)}-\Tr{O_A^\dagger\rho}\Tr{O_B\rho}$.
    Using the Cauchy-Schwartz inequality we get $\mathrm{Cov}_\rho\big(E_t,\Gamma^{-1}(\tilde \rho_t)-\Id\big)
    \le \sqrt{\mathrm{Var}_\rho\big(E_t\big)}\cdot
    \sqrt{\mathrm{Var}_\rho\big(\Gamma^{-1}(\tilde \rho_t)-\Id\big)}$.
     We use H\"older inequality to bound $\mathrm{Var}_\rho\big(E_t\big)\le 1$ and the
    detailed balance equation (\cref{def:detailed-balance2}) for $\Gamma^{-1}(\tilde \rho_t)-\Id=e^{\mcL ^\dagger t}\big(\Gamma^{-1}(\tilde \rho)-\Id\big)$, so that finally,
    \begin{align}
        \norm{\tilde\rho_t-\rho}_1 \lesssim 
        \sqrt{\mathrm{Var}_\rho
        \Big(e^{\mcL ^\dagger t}\big(\Gamma^{-1}(\tilde \rho)-\Id\big)\Big)} 
        \underbrace{\le}_{\text{\cref{eq:Variance}}} 
        e^{-\gamma t}\sqrt{\mathrm{Var}_\rho
        \big(\Gamma^{-1}(\tilde \rho)-\Id\big)}. 
    \end{align}
    
    We finish by reinterpreting the right-hand-side of our last inequality with respect to the choice of $\Gamma$ in \cref{def:detailed-balance2}.
    Short calculation shows that
    \begin{align} \label{eq:Var-is-D2}
        \mathrm{Var}_\rho\big(\Gamma^{-1}(\tilde \rho)-\Id\big)    
        =\Tr{\tilde \rho\cdot\Gamma^{-1}(\tilde \rho)}-1    
        = \begin{cases}
            \Tr{\tilde \rho^2 \rho^{-1}} -1 & \text{for } \Gamma:O\mapsto O\rho \;\;\,\qquad\text{(GNS)},\\
            \Tr{\tilde \rho \rho^{-\frac12}\tilde \rho\rho^{-\frac12}} -1 & \text{for } \Gamma:O\mapsto
            \rho^{\frac 12}O \rho^{\frac 12} \quad\text{(KMS)}.
        \end{cases}
        \end{align}
        The traces are exactly $e^{\mathrm D_2(\tilde \rho\|\rho)}$ in the GNS case, or
        $e^{\tilde {\mathrm D}_2(\tilde \rho\|\rho)}$ in the KMS case (see \cref{eq:sandwich-Reny}).
\end{proof}

Using the above result, the convergence time from a given initial state is guaranteed 
if we could control the initial Rényi-2 divergence with the steady state.
To put in context, we seek an upper bound of this form for 
the steady state obtained after a post-selected Kraus operation on $A$.
Ideally, such a bound should depend only on the local region $A$, resulting in \cref{eq:fast-converge}.
While SRC states are natural candidates to satisfy such a local bound, we leave establishing this for future research.
Instead, we show this for certain instances of SRC Gibbs states.

\begin{corollary}
    Let $\Phi_A(\cdot)=
    \sum_k F_k(\cdot)F_k^\dagger$ be a Kraus decomposition of a local channel. Define
    $p_k=\Tr{F_k\rho F_k^\dagger}$,
    $\sigma_k = \frac{1}{p_k } F_k\rho F_k^\dagger$, and $\mcR_t=e^{\mcL t}$. 
    Then 
    \begin{align}
            \sum_k p_k\norm{\mcR_t(\sigma_k)-\rho}_1 \le C(A)\, e^{-\gamma t},
    \end{align}
    is achieved for Gibbs samplers of: 1. Commuting Hamiltonians.
    2. 1D Hamiltonians. 3. High temperatures.
\end{corollary}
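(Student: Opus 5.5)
By Proposition~\ref{clm:Var-conv} applied to each $\sigma_k$, and then concavity of the square root, it suffices to show that the average
\begin{equation}
\sum_k p_k\, e^{\frac12 \mathrm D_2(\sigma_k\|\rho)}\le \Big(\sum_k p_k\, e^{\mathrm D_2(\sigma_k\|\rho)}\Big)^{\frac12}
\end{equation}
is bounded by a function of $A$ alone, independent of system size. For the KMS case we would use $\tilde{\mathrm D}_2$ in place of $\mathrm D_2$. With $\sigma_k=F_k\rho F_k^\dagger/p_k$ in the GNS case, the right-hand side squared equals
\begin{equation}
\sum_k \frac{1}{p_k}\Tr\big(F_k\rho F_k^\dagger\, \rho^{-1}\, F_k\rho F_k^\dagger\big).
\end{equation}
So the task reduces to bounding the single quantity $\Tr(F\rho F^\dagger\rho^{-1}F\rho F^\dagger)/\Tr(F\rho F^\dagger)$ for a local operator $F$ with $\norm{F}_\infty\le1$. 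Our target bound is $\lesssim \norm{\rho^{-1}F\rho}_\infty^2$ times an $O(d_A)$ factor from summing over $k$.

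The key step is to control the modular-conjugated operator $\rho^{-1}F\rho=e^{\beta H}Fe^{-\beta H}$, the imaginary-time evolution of $F$. Once this is done we estimate
\begin{equation}
\Tr\big(F\rho F^\dagger\rho^{-1}F\rho F^\dagger\big)=\Tr\big(\rho F^\dagger (\rho^{-1}F\rho)F^\dagger\big)\le \norm{\rho^{-1}F\rho}_\infty\,\Tr\big(\rho F^\dagger F\big)\cdot(\ldots).
\end{equation}
More carefully, we write $\rho^{-1}F\rho\, F^\dagger = G$ and use $|\Tr(\rho F^\dagger G)|\le \norm{G}_\infty\norm{F\rho}_1$-type Hölder estimates. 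A cleaner route is Cauchy--Schwarz in the GNS inner product, which gives a bound of the form $p_k\norm{\rho^{-1}F_k\rho}_\infty^2$. Dividing by $p_k$ and summing over a Kraus family of size at most $d_A^2$ (taken WLOG minimal) yields $C(A)=O\big(d_A^2\max_k\norm{\rho^{-1}F_k\rho}_\infty^2\big)$. The KMS case is analogous with $\rho^{-1/2}F\rho^{1/2}$.

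We would then verify the locality of imaginary-time evolution in each of the three cases.

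\emph{Commuting Hamiltonians.} $e^{\beta H}Fe^{-\beta H}=e^{\beta H_A}Fe^{-\beta H_A}$, where $H_A$ collects the terms overlapping $A$. Its norm is therefore at most $e^{2\beta\norm{H_A}}=e^{O(|A|)}$.

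\emph{High temperature.} The series $\sum_n \beta^n\,\mathrm{ad}_H^n(F)/n!$ converges, with norm $e^{O(|A|)}$, when $\beta$ is below a threshold set by the interaction degree. This follows from the standard cluster-expansion bound $\norm{\mathrm{ad}_H^n F}\le n!\,(c)^n|A|$.

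\emph{1D.} We would invoke Araki's theorem that imaginary-time evolution of local operators in 1D is bounded at all $\beta$, giving $\norm{e^{\beta H}Fe^{-\beta H}}\le e^{O(|A|)}$ (quasi-locally).

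The main obstacle is the first reduction: in general the Kraus operators $F_k$ are not invertible, so bounding $e^{\mathrm D_2}$ by $\norm{\rho^{-1}F\rho}$ must avoid dividing by small $p_k$. The Cauchy--Schwarz form above is designed so that exactly one factor $p_k$ cancels; we would check this carefully, possibly replacing $\rho^{-1}F\rho$ by the symmetric version $\rho^{-1/2}F\rho^{1/2}$ to handle both detailed-balance conventions uniformly. The remaining case-by-case imaginary-time bounds are standard and would be cited.
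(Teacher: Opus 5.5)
Your route is the paper's: Proposition~\ref{clm:Var-conv} for each trajectory, a bound on the R\'enyi-2 factor of $\sigma_k$ through the imaginary-time-evolved Kraus operator, a sum over $k$, and the three locality estimates. Using GNS and Jensen instead of the paper's KMS and Cauchy--Schwarz over $k$ is cosmetic. Two steps, however, fail as written.

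First, the per-trajectory inequality $\Tr(F\rho F^\dagger\rho^{-1}F\rho F^\dagger)\le p\,\norm{\rho^{-1}F\rho}_\infty^2$ is false. Take a single qubit with $\rho=\mathrm{diag}(\rho_0,\rho_1)$, $\rho_0>\rho_1$, and the reset Kraus operator $F=|0\rangle\langle 1|$. The left side is $\rho_1^2/\rho_0$, while the right side is $\rho_1^3/\rho_0^2$. Your displayed identity $\Tr(\rho F^\dagger(\rho^{-1}F\rho)F^\dagger)$ also loses a factor of $F$. The symmetric version you anticipated is the correct one:
\[
\Tr\big(\rho^{-1}(F\rho F^\dagger)^2\big)=\norm{\big(\rho^{-1/2}F\rho^{1/2}\big)\big(\rho^{1/2}F^\dagger\big)}_2^2\le p\,\norm{\rho^{-1/2}F\rho^{1/2}}_\infty^2 .
\]
This is the same quantity the paper obtains in the KMS case from $\Tr(\sigma_k X)\le\norm{X}_\infty$ with $X=\rho^{-1/2}\sigma_k\rho^{-1/2}$ and the $C^*$-identity. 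So both conventions only need control of $e^{\beta H/2}Fe^{-\beta H/2}$.

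Second, you cannot take the Kraus family minimal without loss of generality. The averaged error $\sum_k p_k\norm{\mcR_t(\sigma_k)-\rho}_1$ depends on the decomposition, not only on $\Phi_A$. The statement, like \Def{def:local-rec}, must hold for every decomposition, and a decomposition may have arbitrarily many elements. If you normalize $\norm{F_k}_\infty\le 1$, take $\max_k$, and multiply by the number of terms, the bound grows with that number (e.g.\ one Kraus operator close to $\Id$ plus many tiny ones).

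The fix, which is exactly the paper's step, is to keep the estimate homogeneous: $\norm{e^{\beta H/2}F_ke^{-\beta H/2}}_\infty\le h(|A|)\norm{F_k}_\infty$. Then the $k$-th term of $\sum_k p_k e^{\mathrm D_2(\sigma_k\|\rho)}$ is at most $h(|A|)^2\norm{F_k}_\infty^2$. Summing with $\sum_k\norm{F_k}_\infty^2\le\sum_k\Tr(F_k^\dagger F_k)=d_A$ gives $C(A)=h(|A|)\,d^{|A|/2}$ for every decomposition.

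A minor point: your high-temperature estimate $\norm{\mathrm{ad}_H^nF}\le n!\,c^n|A|$ has the wrong form, because $|A|$ enters combinatorially. The cited bound (Lemma~3.1 of \cRef{ref:arad2016connecting}) gives $(1-\beta/\beta_c)^{-O(|A|)}$. That is still $e^{O(|A|)}$, so your conclusion there stands.
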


\begin{proof}
    For now, we work in the KMS regime. 
    Plugging to \cref{eq:Var-is-D2} the input state $\tilde \rho =\sigma_k=\frac 1{p_k} F_k\rho F_k^\dagger$  for $\rho \propto e^{-\beta H}$ where $H$ is a local Hamiltonian, we get
    \begin{align} \label{eq:bound1}
        \Tr{\tilde \rho \rho^{-\frac12}\tilde \rho\rho^{-\frac12}}
        =  \norm{\rho^{-\frac12}\tilde \rho\rho^{-\frac12}}_\infty
        \le \frac{1}{p_k}\norm{e^{\beta H/2}F_k e^{-\beta H/2} }_\infty^2 ,
    \end{align}
    where in the first move we used H\"older's inequality and in the second we used
    the $C^*$-identity of $\norm{\cdot}_\infty$.
    We seek for a bound of the form 
    \begin{align} \label{eq:bound2}
        \norm{e^{\beta H/2}F_k e^{-\beta H/2} } _\infty\le h(|A|)  \norm{F_k}_\infty,
    \end{align}
    where $h(x)=e^{\bigO{x}}$ is some function that encompasses a $\beta$ dependence.
    Assuming \cref{eq:bound2}, we get that $\mcR_t$ is a desired recovery channel:
    \begin{equation}
    \begin{aligned}
        \sum_k p_k\norm{\mcR_t(\sigma_k)-\rho}_1
    & \underbrace{\le}_{\text{Prop.}~\ref{clm:Var-conv}}
    \sum_k p_k\,e^{-\gamma  t} 
    \Big(\Tr{\sigma_k \rho^{-\frac12}\sigma_k\rho^{-\frac12}}\Big)^{\frac12}
    \\
   (\text{Eqs.~\eqref{eq:bound1} \& \eqref{eq:bound2}}) \qquad & \le e^{-\gamma t} h(A) \sum_k\sqrt{p_k}\norm{F_k}_\infty \\
     (\text{Cauchy-Schwarz}\& \,\norm{ F_k}_\infty^2\le \Tr{F_k^\dagger F_k}) \qquad & \le e^{-\gamma t} h(A)\Big(\sum_k p_k\Big)^{\frac12} \Big(\sum_{k'}\Tr{F_{k'}^\dagger F_{k'}}\Big)^{\frac12} 
     \\
     (\sum_k F_k^\dagger F_k=\Id) \qquad & \le e^{-\gamma t} h(A) d^{\frac{|A|}{2}}.
    \end{aligned}            
    \end{equation}
    Here in the 3rd line, the trace is over $\mathbb{H}_A$.
    
    It is left to justify \cref{eq:bound2} for specific Gibbs states. Note that in general,
    the expression in the l.h.s of \cref{eq:bound2}, which involves the imaginary time 
    evolution of a local operator, might scale exponentially with system size. 
    The cases known to be locally bounded are commuting Hamiltonians,
    1D Hamiltonians, and high temperature Hamiltonians where $\beta<\beta _c$ and 
    $T_c$ depends on the local geometry of the lattice~\cite{ref:arad2016connecting}. Assume that $H=\sum_i H_i$ where $0\leq H_i\leq \Id$,
    \begin{align} \label{eq:Gibbs-evolution-cases}
            \norm{e^{\beta H}F_k e^{-\beta H} } _\infty\le 
            \norm{F_k}_\infty\cdot
            \begin{cases}
                e^{\bigO{\beta |A|}} & \text{ Commuting Hamiltonians} ,\\
                e^{8\beta e^{8\beta}} e^{4\beta|A|} & \text{ 1D Hamiltonians} ,\\
                \Big(1-\beta/\beta_c\Big)^{-\bigO{|A|}} & \text{ High-$T$}. 
            \end{cases}
    \end{align}
    In the above, the commuting case is simply achieved by commuting the terms in $H$ that do not overlap with $F_k$.
    The 1D case is due to \cite{araki1969gibbs,perez2023locality},    
    and the high temperature case  is Lemma 3.1 of \cRef{ref:arad2016connecting}.
\end{proof}

So far in this section, we illustrated how $e^{\mcL t}$ can be seen as a  recovery map, but  only briefly addressed its locality properties.
Rigorous estimates can be obtained using the Lieb-Robinson bounds when the Lindbladian is frustration-free, yielding a strictly local recovery map as defined in \ref{def:local-rec}. We refer the reader to \cRefs{ref:brandao2015area,ref:chen2025markovian} for related results.

We complement this direction by purposing an alternative recovery map, 
which is strictly local, and tailored for commuting Hamiltonian Gibbs states.
To do this, we consider a detectability lemma superoperator derived from the Lindbladian~\cite{ref:aharonov2011detectability}.
This approach was also considered in \cRef{ref:Kastoryano2016commutinggibbssampler}
for deriving a strong clustering condition for such Gibbs states.

For now, assume that $\mcL=\sum_i \mcL_i$,
such that every local term is supported on $O(1)$ neighboring sites and
satisfies GNS condition locally:
\begin{align*}
    \forall i\qquad\Rightarrow \qquad\mcL_i \Gamma = \Gamma \mcL_i^\dagger, \tab \text{where} \; \; \Gamma(A)=\rho\cdot A.
\end{align*}
These maps are only known to exist for commuting Hamiltonians, e.g., with a \emph{Davies} generators (see \cRef{ref:Kastoryano2016commutinggibbssampler}), and therefore will be our focus here.
Let $\mcP_i$ be its infinite time expectations,
$\mcP_i\EqDef \lim_{t\rightarrow \infty} e^{\mcL_i t}$, and divide the set $\{\mcP_i\}$
into layers\footnote{For example, a Lindbladian defined on a 1D lattice
with $\mcL_i$ which are 2-local requires two layers -- ``even" and ``odd".}.
The detectability lemma superoperator is given by the product of these 
layers~\cite{ref:aharonov2011detectability}.

\begin{figure}
    \centering
    \includegraphics[width=0.6\linewidth]{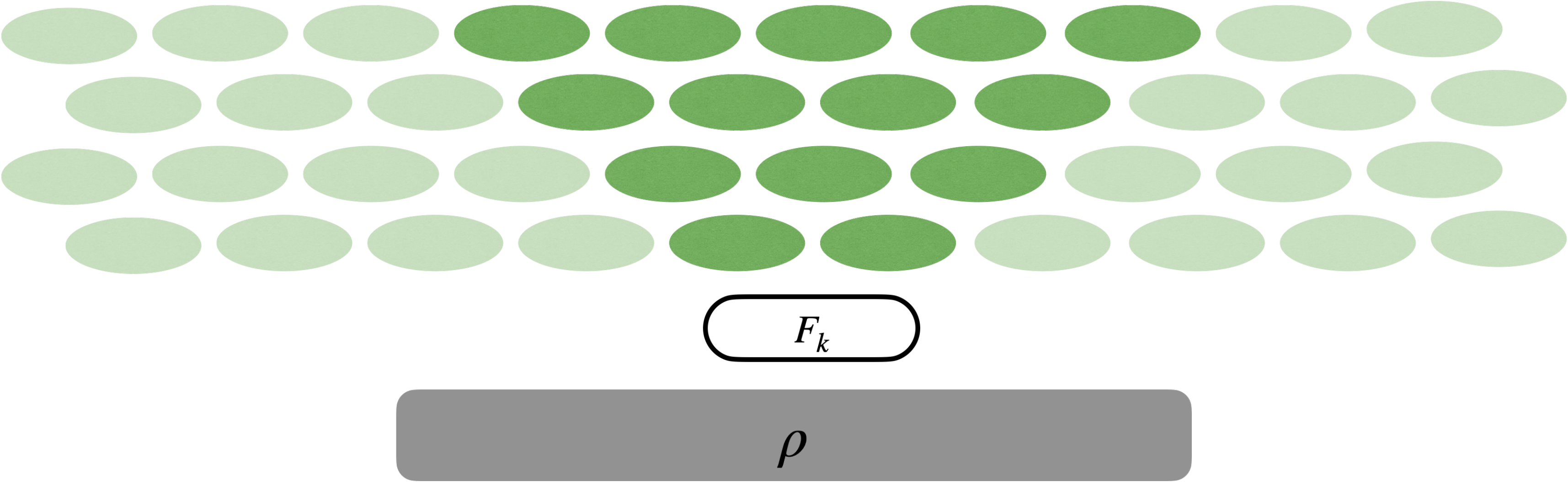}
    \caption{The light cone structure in \eqref{eq:Recovery-DL}.
    The green gates describe $m$ layers of the $\mcP_i$ projectors.
    The faded projectors commute with $F_k$ and absorbed in $\rho$, leading to an effective
    recovery channel with small support.}
    \label{fig:light-cone-DL}
\end{figure}

Let $A$ be a small region, and define the local recovery map
to be the tower of $\mcP_i$'s that are in the light-cone of $m$ layers of the detectability lemma operator
around $A$,
\begin{align}
    \label{eq:Recovery-DL}
    \mcR_m = \prod_{i\in \LC A} \mcP_i .
\end{align}
The light-cone is defined by taking $m$ layers of the detectability lemma operator,
and eliminate all the elements that are not in the light-cone around $A$ (see \Fig{fig:light-cone-DL}). 
We prove that $\mcR_m$ yields local stability.
\begin{prop} \label{clm:LR-Gibbs}
    Let $\rho\propto e^{-\beta H}$ where $\beta>0$ and $H=\sum_i h_i$ is a commuting $k$-local Hamiltonian where $k=O(1)$ and $0\preceq h_i \preceq \Id$.
    Let $\mcL=\sum_i \mcL_i$ be a local Gibbs sampler with a spectral gap $\gamma>0$.
    For a small region $A$, let $\mcR_m$ he recovery channel from \eqref{eq:Recovery-DL}.
    Then for any local channel $\Phi_A(\cdot)=\sum_k F_k (\cdot) F_k^\dagger$,
    we have  
    \begin{align*}
        \sum_k p_k \norm{\mcR_m \big(\frac{1}{p_k}F_k\rho F_k^\dagger\big)-\rho}_1 \le 
        d^{\frac{|A|}{2}}e^{\bigO{\beta|A|}}e^{-m/\xi}
    \end{align*}
    where $\xi = -\ln^{-1}[1-\bigO \gamma ]$.
\end{prop}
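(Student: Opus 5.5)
The plan has three steps: absorb the projectors outside the light cone of $A$, bound the remaining trajectory by the gap of the detectability-lemma operator, and control the initial deviation using commutativity.

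\emph{Step 1 (light cone).} Let $\mcD\EqDef \mcP_{L}\cdots\mcP_{1}$ be the detectability-lemma superoperator, i.e. the product over layers of the infinite-time projectors $\mcP_i$. Each $\mcP_i$ is a local channel satisfying GNS detailed balance with respect to $\rho$, so $\mcP_i(\rho)=\rho$ and $\mcD$ fixes $\rho$. Every $\mcP_i$ whose support is disjoint from $A$ commutes with $F_k(\cdot)F_k^\dagger$. Hence, in $\mcD^m(F_k\rho F_k^\dagger)$, the projectors outside the backward light cone can be pushed through $F_k$ and absorbed into $\rho$ (see \Fig{fig:light-cone-DL}). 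This leaves
\begin{equation}
\mcR_m(F_k\rho F_k^\dagger)=\mcD^m(F_k\rho F_k^\dagger),
\end{equation}
so it suffices to bound $\sum_k p_k\norm{\mcD^m(\sigma_k)-\rho}_1$.

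\emph{Step 2 (contraction of $\mcD$).} I would mimic the proof of Proposition~\ref{clm:Var-conv}, with $e^{\mcL t}$ replaced by $\mcD^m$. Under local GNS detailed balance, each $\mcP_i^\dagger$ is an orthogonal projection in the GNS inner product $\langle X,Y\rangle_\rho=\Tr(X^\dagger Y\rho)$, and $\mcP_i\Gamma=\Gamma\mcP_i^\dagger$. The detectability lemma, in the superoperator form of \cRef{ref:Kastoryano2016commutinggibbssampler}, gives
\begin{equation}
\norm{\mcD^{\dagger}X}_{\rho}\le(1-\bigO{\gamma})^{1/2}\norm{X}_\rho
\end{equation}
for $X$ orthogonal to $\Id$, since the common kernel of the $\mcL_i$ is spanned by $\Id$. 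Writing $\sigma_k-\rho=\Gamma(\Gamma^{-1}\sigma_k-\Id)$ and applying Cauchy--Schwarz as before, I get
\begin{equation}
\norm{\mcD^m(\sigma_k)-\rho}_1\le (1-\bigO{\gamma})^{m/2}\bigl(\Tr(\sigma_k^2\rho^{-1})-1\bigr)^{1/2}.
\end{equation}
This gives the decay $e^{-m/\xi}$ with $\xi=-\ln^{-1}[1-\bigO{\gamma}]$, with constant factors absorbed.

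\emph{Step 3 (initial deviation).} Using commutativity, $\rho^{-1/2}F_k\rho^{1/2}=e^{\beta H_A/2}F_ke^{-\beta H_A/2}$, where $H_A$ collects the terms overlapping $A$. Therefore
\begin{equation}
\Tr(\sigma_k^2\rho^{-1})\le p_k^{-1}e^{\bigO{\beta|A|}}\norm{F_k}_\infty^2,
\end{equation}
as in \cref{eq:Gibbs-evolution-cases}. The GNS version of \cref{eq:bound1} should be handled carefully here, via $\Tr(\sigma_k\rho^{-1/2}\cdot\rho^{-1/2}\sigma_k)$ and Hölder. Summing with Cauchy--Schwarz and $\sum_k\norm{F_k}_\infty^2\le d^{|A|}$ gives the prefactor $d^{|A|/2}e^{\bigO{\beta|A|}}$.

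The main obstacle is Step 2: the detectability lemma is usually stated for Hamiltonian projectors on Hilbert space. Here I must verify that the $\mcP_i^\dagger$ are genuinely self-adjoint projections in the GNS Hilbert space, which follows from local GNS detailed balance, and that the relevant gap of $\sum_i(\mathrm{id}-\mcP_i)$ is comparable to $\gamma$. Then the standard detectability-lemma bound applies verbatim. A secondary point is ensuring that the light-cone truncation in Step 1 is exact rather than approximate, which relies on commuting locality of the $\mcP_i$ and holds for commuting Hamiltonians.
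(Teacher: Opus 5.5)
Your proposal is correct and follows essentially the paper's proof: exact light-cone absorption of the $\mcP_i$ that commute with $F_k(\cdot)F_k^\dagger$ and fix $\rho$; the detectability lemma applied to projectors made self-adjoint by local GNS detailed balance; a Cauchy--Schwarz step from a $2$-norm to the trace norm; and the commuting-Hamiltonian bound $\Tr(F_k\rho F_k^\dagger\rho^{-1}F_k\rho F_k^\dagger)\le p_k e^{\bigO{\beta|A|}}\norm{F_k}_\infty^2$, followed by $\sum_k\sqrt{p_k}\norm{F_k}_\infty\le d^{|A|/2}$. The only difference is presentational. You work in the GNS-weighted inner product, as in Proposition~\ref{clm:Var-conv}, whereas the paper conjugates by $\Gamma^{\pm 1/2}$ into the Hilbert--Schmidt picture and uses $\norm{\rho^{1/2}}_2=1$; this is isometrically the same computation. (What actually appears is $\mcD\Gamma=\Gamma\,\mcP_L^\dagger\cdots\mcP_1^\dagger$ rather than $\Gamma\mcD^\dagger$, which is harmless since the detectability bound holds for either ordering.)
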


\begin{proof}
    Let $\mcL = \sum_i \mcL_i$ be a local Lindbladian such that $\mcL_i$ satisfies detailed
    balance with respect to $\rho$.
    As a result, $\rho$ is also a fixed point of every $\mcL_i$,
    and moreover, $\mcP_i $ projects on $\rho$.
    Divide the projectors into layers, and stack $m$ 
    iterations of these layers one by one to get a circuit $\mcP_{(m)}$ consisting 
    of $\bigO m$ layers.
    Let $\Phi(\cdot) =F_k(\cdot)F_k^\dagger $ be a quantum channel on $A$
    given in a Kraus form.
    We claim that 
    \begin{align} \label{eq:1-1-norm} \begin{split}
        \norm{\left(\mcP_{(m)}-\ketbra{\rho}{\Id}\right)  (F_k\rho F_k^\dagger)}_1 & \le 
        C \sqrt{\Tr{F_k\rho F_k^\dagger}}\norm{F_k},\\
        \text{for} \quad C= & \; e^{\bigO{\beta |A|}}(1-\bigO{\gamma})^m 
    \end{split} \end{align}
    We prove \Eq{eq:1-1-norm} below, but for now, assume it holds,
    denote $p_k\EqDef\Tr{F_k\rho F_k^\dagger}$ and average over \eqref{eq:1-1-norm}:
    \begin{align}
        \sum_k p_k\norm{\frac{1}{p_k}\mcP_{(m)} (F_k \rho F_k^\dagger)-\rho
    }_1 \le C \sum_k \sqrt{p_k} \norm{F_k} \le  C \sqrt{\sum_k p_k}\sqrt{\sum_k \Tr{F_k^\dagger F_k}}= C d^{\frac{|A|}{2}} ,
    \end{align}
    i.e., $\mcP_{(m)}$ is a recovery channel satisfying \Def{def:local-rec}.
    Note that the $\mcP_i$ terms in $\mcP_{(m)}$ supported outside
    the light cone of $A$ commute with $F_k$ and absorbed in $\rho$,
    reducing it to the $\mcP_i$ terms inside the 
    light cone (see \Fig{fig:light-cone-DL}),
    leading to $\mcR_m$ supported in a ball of radius $\bigO{m}$ around $A$.
    
    The rest of the proof will be devoted to showing \Ineq{eq:1-1-norm}.
    Note that the detailed balance condition, $\mcL_i \Gamma = \Gamma \mcL_i^\dagger $,
    is equivalent to  $\mcH_i\EqDef\Gamma^{-\frac12}\mcL_i \Gamma^{\frac12}$ being hermitian.
    In particular, $\mcH\EqDef - \Gamma^{\frac12}\mcL \Gamma^{\frac12}$ is an hermitian super-operator,
    frustration free,
    with a unique ground state being $\sqrt \sigma$ and $\gap(\mcH)=\gap(\mcL)=\gamma$.
    Each $\mcH_i$ is in-fact local due to detailed balance (see Theorem 3.2 in \cRef{ref:firanko2024superhamiltonian}).
    This allows us to apply the detectability lemma for frustration free Hamiltonian
    (Corollary 3 in \cite{Anshu2016detectability})
    \begin{align} \label{eq:detectability-lemma}
        \Delta \EqDef \big(\prod_i P_i \big)^m - \ketbra{\sqrt \sigma}{\sqrt \sigma}; \tab \norm{\Delta}_{2\rightarrow 2}
        \le (1 - \bigO{\gamma})^m .
    \end{align}
    where $P_i$ projects onto the groundspace of $\mcH_i$, i.e.,  $P_i=\Gamma^{-\frac12}\mcP_i \Gamma^{\frac12}$.
    Recalling that $\Gamma^{\frac12}(\sqrt \sigma)=\sigma$ and $\Gamma^{-\frac12}(\sqrt \sigma)=\Id$,
    Ineq.~\eqref{eq:1-1-norm} becomes 
    \begin{align} \label{eq:rotated-DL}
        \norm{\left(\Gamma^{\frac12}\Delta\Gamma^{-\frac12}\right)  (F_k\rho F_k^\dagger)}_1 & \le 
        C \Tr{F_k\rho F_k^\dagger}
    \end{align}
        which we now estimate
        \begin{align}\label{eq:long-split}\begin{split}
            \norm{\Gamma^{\frac 1 2}\Delta\Gamma^{-\frac 1 2}(F_k\rho F_k^\dagger)}_1
            & =
            \norm{\rho^{\frac12}\Delta(\rho^{-\frac12}
             F_k\rho F_k^\dagger)}_ 1 
             \\
            \text{(Cauchy-Schwartz)\quad}
              & \le \norm{\rho^{\frac12}}_2 \norm{\Delta(\rho^{-\frac12}
            F_k\rho F_k^\dagger)}_2 \\
           \text{($\Tr{\rho} =1$)} \quad & \le \norm{\Delta}_{2\rightarrow 2}\norm{\rho^{-\frac12}
            F_k\rho F_k^\dagger}_2 \\
            & \le \norm{\Delta}_{2\rightarrow 2}\sqrt{\Tr{F_k\rho F_k^\dagger\rho^{-1}
            F_k\rho F_k^\dagger}}\\
            \text{(H\"older Inequality)} \quad
            & \le \norm{\Delta}_{2\rightarrow 2}\sqrt{\Tr{F_k\rho F_k^\dagger}}
            \sqrt{\norm{\rho^{-1}F_k\rho }\norm{F_k^\dagger}}\\
            \text{First case in \cref{eq:Gibbs-evolution-cases}} \quad & \le \norm{\Delta}_{2\rightarrow 2}
            e^{\bigO{\beta |A|}}\sqrt{p_k}\norm{F_k}.
        \end{split}\end{align}
\end{proof}

One should note the differences between our proof and Thm. 2.6 from \cRef{ref:Kastoryano2016commutinggibbssampler}:
The authors obtained an esitimate on the norm of a strong notion of clustering which is given in the $\rho$-weighted norm, at which $\mcL^\dagger$ and $\mcP^\dagger$ are hermitian.
They also use a local Davies generator, and assume it is locally primitive.
We, on the other hand, do not assume a locally primitive Lindbladian that is locally a Davies generator.
More importantly, our proof navigates between the 2-norm (in the rotated basis) and the 1-norm in the usual basis, rather than performing the analysis in the convenient $\rho$-weighted norm.
For this purpose, e.g., it is crucial for us to start from $F_k\rho F_k^\dagger$.

\section{Proofs in \Sec{sec:non-lin}}
\label{app:non-lin}

\begin{proof}[Proof of Prop.~\ref{prop:Cpqbasic}]

(\cref{prop5:bul-mon} - Monotonicity)
Let us compare $\mcC_{p,q}$ and $\mcC_{p,q'}$ where $q\leq q'$.
We have:
\begin{equation}
    \norm{\rho^{\frac{p}{2}}O\rho^{\frac{q'}{2}}}_{2s'}
    =
    \norm{\rho^{\frac{p}{2}}O\rho^{\frac{q}{2}}\rho^{\frac{q'-q}{2}}}_{2s'}
    \leq 
    \norm{\rho^{\frac{p}{2}}O\rho^{\frac{q}{2}}}_{2s}\norm{\rho^{\frac{q'-q}{2}}}_{\frac{2}{q'-q}}
    =
    \norm{\rho^{\frac{p}{2}}O\rho^{\frac{q}{2}}}_{2s}
\end{equation}
Here, we use the Holder's inequality and the following:
\begin{equation}
    \frac{1}{2s'}=\frac{p+q'}{2}=\frac{p+q}{2}+\frac{q'-q}{2}=\frac{1}{2s}+\frac{1}{2/(q'-q)}.
\end{equation}

    (\cref{prop5:bul-ti} - Continuity) Using triangle inequality and H\"older's inequality, we have 
        \begin{equation} \label{eq:TI-derivation}
        \begin{aligned}
            |\mcC_{p,q}(O,\rho)-\mcC_{p,q}(O,\sigma)|
            \leq \; & \norm{\rho^{\frac{p}{2}}O \rho^{\frac{q}{2}} - \sigma^{\frac{p}{2}}O \sigma^{\frac{q}{2}}}_{2s}\\
            \leq \; & \norm{(\rho^{\frac{p}{2}}-\sigma^{\frac{p}{2}})O\rho^{\frac{q}{2}}}_{2s} + \norm{\sigma^{\frac{p}{2}}O(\rho^{\frac{q}{2}}-\sigma^{\frac{q}{2}})}_{2s}\\
            \leq \; & \norm{\rho^{\frac{p}{2}}-\sigma^{\frac{p}{2}}}_{\frac{2}{p}}\norm{O}_\infty \norm{\rho^{\frac{q}{2}}}_{\frac{2}{q}} + (p\leftrightarrow q)\\
            \leq \; & \norm{\rho^{\frac{1}{2}}-\sigma^{\frac{1}{2}}}_{2}^p \norm{O}_\infty  + (p\leftrightarrow q)
        \end{aligned}
    \end{equation}
     Here in the last two steps we use Eq.(X.7) in \cRef{ref:Bhatia-book}, i.e., for positive semidefinite $A$ and $B$, we have: 
     \begin{equation}\label{eq:bhatia1}
         \norm{A^r-B^r}_k\leq \norm{\abs{A-B}^r}_k,~~(0\leq r\leq 1, k\geq 1),
     \end{equation}
    (Here, we let $A=\rho^{\frac{1}{2}}$, $B=\sigma^{\frac{1}{2}}$, $r=p$, $k=\frac{2}{p}$).
    Finally, using \cref{eq:DBvs2}, we get:
    \begin{align}
        |\mcC_{p,q}(O,\rho)-\mcC_{p,q}(O,\sigma)|
        \le 
        \norm{O}_\infty (2^{\frac p2}\bur{\rho}{\sigma}^p+2^{\frac q2}\bur{\rho}{\sigma}^q)
        \le 
        \norm{O}_\infty 2^{3/2}\bur{\rho}{\sigma}^{\min\{p,q\}}.
    \end{align}

    (\cref{prop5:bul-log} - Log-Convexity:) Fixing $(p_0,q_0)$, $(p_1,q_1)$, we consider the following matrix-valued function:
    \begin{equation}
        G(z)= \rho^{\frac{(1-z)p_0+zp_1}{2}}O \rho^{\frac{(1-z)q_0+zq_1}{2}}.
    \end{equation}
    This is an (element-wise) bounded holomorphic function of $z\in\mathbb{C}$.
    Fixing $O$ and $\rho$, $G(z)$ is (element-wise) bounded when $\Re(z)\in[0,1]$. 
    Moreover, for $\forall t\in\mathbb{R}$,
    \begin{align}
        \norm{G(it)}_{2s_0}=\mcC_{p_0,q_0},~~
        \norm{G(1+it)}_{2s_1}=\mcC_{p_1,q_1}.~~
    \end{align}
    Therefore, using the Hadamard three-line theorem for operator-valued functions 
    (see, e.g., Theorem 2 in \cRef{beigi2013sandwiched}), 
    we get, for $\forall \theta\in[0,1]$, that:
    \begin{equation}\label{eq:Had3line}
        \log\norm{G(\theta)}_{2s_\theta} \leq (1-\theta) \log \mcC_{p_0,q_0} + \theta\log \mcC_{p_1,q_1},
    \end{equation}
    where $s_\theta$ is defined by
    \begin{equation}
        \frac{1}{2s_\theta} = \frac{1-\theta}{2s_0} + \frac{\theta}{2s_1},\qquad s_{i}=\frac{1}{p_i +q_i}~~ (i=1,2).
    \end{equation}
    Straightforward calculation shows that 
    \begin{equation}
        \norm{G(\theta)}_{2s_\theta}=\norm{\rho^{\frac{p_\theta}{2}}O \rho^{\frac{q_\theta}{2}}}_{\frac{2}{p_\theta+q_\theta}}=\mcC_{p_\theta,q_\theta},
    \end{equation}
    where $p_\theta=(1-\theta)p_0+\theta p_1$ and $q_\theta=(1-\theta)q_0+\theta q_1$. 
    Therefore, \cref{eq:Had3line} is exactly the desired log-convexity.   
\end{proof}

    \begin{proof}[Proof of \cref{thm:local-comp-1}(1)]
    We apply the data processing inequality \cref{eq:DPICpq} twice,
    once with $\mathrm{Tr}_C$ and once with the recovery map $\mcP$:
    \begin{align}
    \mcC_{p,q}(O_A,\rho_{ABC})
    \leq
    \mcC_{p,q}(O_A,\rho_{AB})
     \leq
     \mcC_{p,q}(O_A,\mcP(\rho_{AB})).
    \end{align}
    Then, applying continuity bound (\cref{prop5:bul-ti} of Prop.~\ref{prop:Cpqbasic}) with $\rho_{ABC}$ and $\mcP(\rho_{AB})$, we find:
    \begin{equation}
        \mcC_{p,q}(O_A,\mcP(\rho_{AB}))
        \leq
        \mcC_{p,q}(O_A,\rho_{ABC})
        + 2^{\frac{p}{2}+1} \norm{O_A}_\infty \bur{\mcP(\rho_{AB})}{\rho_{ABC}}^p.
    \end{equation}
    The desired result is then given by fact \ref{fact:CMI}.
    \end{proof}

\begin{proof}[Proof of \cref{thm:clusterCpq}]\label{proofclusterCpq}
    We follow the proof strategy outlined in the main text, relaxing the assumption on $\rho$ to allow for decaying correlations and conditional mutual information.
    Define $B_1$ (and $B_2$) from \cref{fig:corr-proof} to be an annulus of radius $r$ around $A_1$ (and $A_2$).
    We separate the desired bound into three components:
    \begin{equation}\label{eq:clustering-nts}
        \begin{split}
            |\mcC_{p,q}(O_1O_2,\rho)-\mcC_{p,q}(O_1,\rho)\mcC_{p,q}(O_2,\rho)|
           \le &  \underbrace{|\mcC_{p,q}(O_1O_2,\rho)-\mcC_{p,q}(O_1O_2,\rho_{AB})|}_ {\text{(a)}}\\
            + &\underbrace{|\mcC_{p,q}(O_1O_2,\rho_{AB})-\mcC_{p,q}(O_1,\rho_{A_1B_1})\cdot \mcC_{p,q}(O_2,\rho_{A_2B_2})|} _{\text{(b)}}\\
            + & \underbrace{|\mcC_{p,q}(O_1,\rho_{A_1B_1})\cdot \mcC_{p,q}(O_2,\rho_{A_2B_2})-\mcC_{p,q}(O_1,\rho)\cdot\mcC_{p,q}(O_2,\rho)|}
             _{\text{(c)}}.
        \end{split}
    \end{equation}

    In the following, we upper bound the above three terms.
    For convenience, we denote 
    \begin{gather}
        \condinf{A}{B}{C}=\delta_{12}^2,~~~\condinf{A_1}{B_1}{CB_2A_2}=\delta_{1}^2,~~~\condinf{A_2}{B_2}{CB_1A_1}=\delta_{1}^2,\label{eq:defdeltaepsilon1}\\
        \bur{\rho_{AB}}{\rho_{A_1 B_1}\otimes \rho_{A_2 B_2}}= \epsilon.\label{eq:defdeltaepsilon2}
    \end{gather}
     We assume $\norm{O_1}_\infty = \norm{O_2}_\infty =1$ for simplicity and assume WLOG that $p\le q$.
\begin{itemize}
    \item     Term (a).
     We apply \cref{eq:localcompute3} with $A=A_1A_2$,
     $B=B_1B_2$ and $C$ (see \cref{fig:corr-proof})
    with $O_A=O_1O_2$ and find:
    \begin{align} \label{eq:clustering-proof-1}
        |\mcC_{p,q}(O_A,\rho_{ABC})-\mcC_{p,q}(O_A,\rho_{AB})|
        \lesssim \delta_{12}^{p} .
    \end{align}
    \item Term (b). 
    We apply  the continuity bound (\cref{prop5:bul-ti}) for $\rho_{AB}$ and $\rho_{A_1B_1}\otimes \rho_{A_2B_2}$, then apply the multiplicativity of $\mcC$ (\cref{prop5:bul-mult}):
    \begin{align}
        |\mcC_{p,q}(O_A,\rho_{AB})-\mcC_{p,q}(O_1,\rho_{A_1B_1})\cdot \mcC_{p,q}(O_2,\rho_{A_2B_2})|\lesssim
        \epsilon^p .
    \end{align}
    \item Term (c). 
    Applying \cref{eq:localcompute3} on each individual terms gives
    \begin{equation}
        \abs{\mcC_{p,q}(O_i,\rho_{A_iB_i})-\mcC_{p,q}(O_i,\rho)}\lesssim  \delta_i^{p},~~i=1,2.
    \end{equation}
    Using the bound $\mcC_{p,q}(O_i,\rho_{A_iB_i})\le 1$ (\cref{prop5:bul-range} of Prop.~\ref{prop:Cpqbasic}) and triangle inequality
    we conclude that the term (c) is upper bounded by 
    $\lesssim \delta_1^p+\delta_2^p$.
\end{itemize}
    Collecting all the terms, we get:
    \begin{align}  \label{eq:Cpq-deltas} 
         |\mcC_{p,q}(O_1O_2,\rho)-\mcC_{p,q}(O_1,\rho)\mcC_{p,q}(O_2,\rho)|         
         \le
         2^{\frac{3}{2}} (\delta_1^p+\delta_2^p+\delta_{12}^p+\epsilon^p) .
    \end{align}

    Next, let us estimate the $\delta$'s amd the $\epsilon$ associated with a SRC state (recall \cref{def:gapped}).
    \begin{itemize}
        \item $\delta_1, \delta_2$. 
   The SRC assumption \cref{def:CMI-decay} on $\rho$ gives:
     \begin{align} \label{eq:bound-delta3}
        \delta_i  \lesssim (\condinf{A_i}{(A_iB_i)^c}{B_i})^{\frac 12}
        \le g({A_i})^{\frac12}\, e^{-\frac r{2\zeta}},\qquad i=1,2.
    \end{align}
    \item $\delta_{12}$.
    The chain rule and strong sub-additivity imply:
\begin{equation}\label{eq:entropy-calculus}
\begin{aligned}
    \condinf{A_1A_2}{C}{B_1B_2}
    = \condinf{A_1}{C}{B_1B_2} + \condinf{A_2}{C}{ A_1B_1B_2} \le \condinf{A_1}{A_2B_2C}{ B_1} + \condinf{A_2}{A_1B_1C}{ B_2},
\end{aligned}
\end{equation}
which further implies:
    \begin{align} \label{eq:bound-delta1}
        \delta_{12} \leq \sqrt{\delta_1^2+\delta_2^2}\lesssim \max_i\{g(A_i)^{\frac12} \}\, e^{-\frac r{2\zeta}} .
    \end{align}
    \item $\epsilon$.
    The SRC assumption (\cref{eq:DOC}) on $\rho$ gives the correlation decay, which further implies the bound on the trace distance (\cref{eq:third-equivlence}):
    \begin{align*}
    \norm{\rho_{AB}-\rho_{A_1B_1} \rho_{A_2B_2}}_1 
    \lesssim
    d_{A_1B_1} f(A_1B_1)e^{-\dist(B_1,B_2)/\eta} 
     = e^{\bigO{r^D}-\dist(A_1,A_2)/\eta},
    \end{align*} 
    where we used
    $\{f(x),d_x\}=e^{\bigO x}$ and $\dist(B_1,B_2)=\dist(A_1,A_2)-2r$ 
    (here the constant factor in the big $O$ notation is $A$-independent).
    The Fuchs–van de Graaf inequality then implies:
    \begin{equation} \label{eq:bound-epsilon}
    \epsilon=\bur{\rho_{AB}}{\rho_{A_1B_1}\rho_{A_2B_2}} \le \norm{\rho_{AB}-\rho_{A_1B_1}\otimes \rho_{A_2B_2}}_1^{\frac 12}
    \leq
    e^{\bigO{r^D}-\dist(A_1,A_2)/2\eta},
    \end{equation} 
\end{itemize}

A bound on l.h.s. of \cref{eq:clustering-nts} can be obtained by plugging bounds \eqref{eq:bound-delta3},\eqref{eq:bound-delta1},\eqref{eq:bound-epsilon} into \cref{eq:Cpq-deltas} and tuning $B_1$ and $B_2$ (equivalently, tuning $r$) to optimize the overall result.
To make \eqref{eq:bound-epsilon} small, we choose:
    \begin{align} \label{eq:choise-r}
        r=\bigO{\dist(A_1,A_2)}^{ 1/ D}
    \end{align}
Under this choice, we get:
\begin{equation}
    \delta_1, \delta_2, \delta_{12} \lesssim g(A_1)^{\frac{1}{2}}e^{-\bigO{\dist(A_1,A_2)}^{ 1/ D}},~~~
    \epsilon\lesssim e^{-\bigO{\dist(A_1,A_2)}}.
\end{equation}
Plugging it into \cref{eq:Cpq-deltas}, we get
\begin{equation}
    |\mcC_{p,q}(O_1O_2,\rho)-\mcC_{p,q}(O_1,\rho)\mcC_{p,q}(O_2,\rho)| \leq C(A)
    e^{-p\bigO{\dist(A_1,A_2)}^{ 1/ D}}.
\end{equation}
\end{proof}

\textbf{Remark:}
    In case where
    $\norm{\rho_{AB}-\rho_{A_1B_1} \rho_{A_2B_2}}_1 \le
    \poly(A_1B_1)\,e^{-\dist(B_1,B_2)/\eta}$, 
    we can choose $r=\bigO{\dist(A_1,A_2)}$.
    Then the final bound admits the form
    \begin{align} \label{eq:proving-remark}
|\mcC_{p,q}(O_1O_2,\rho)-\mcC_{p,q}(O_1,\rho)\mcC_{p,q}(O_2,\rho)|
         \leq C(A)
    e^{-p\cdot\bigO{\dist(A_1,A_2)}}.
    \end{align}

\section{Proofs in \Sec{sec:Purification}}
\label{app:purifications}

\subsection{Omitted proofs}\label{app:TFDproof}

\begin{proof}[Proof of \cref{thm:TFD-LC}]  

    For operators of the form $ O \otimes O^*$, we have
    \begin{equation} \label{eq:CP-Cpq-1}
        \mcC_{\frac{1}{2},\frac{1}{2}}(O,\rho)
        =\left[\Tr(\sqrt{\rho}O\sqrt{\rho}O^\dagger)\right]^{\frac{1}{2}}
        = \expval{O\otimes O^*}{\sqrt{\rho}}^{\frac{1}{2}}.
    \end{equation}
    Applying data processing inequality
    (\cref{thm:DPI}) on \cref{eq:CP-Cpq-1}, we find:
    \begin{equation}\label{eq:OAA-DPI}
    \begin{aligned}
        &\expval{O\otimes O^*}_{\sqrt{\rho_{ABC}}}
        \leq \expval{O\otimes O^*}_{\sqrt{\rho_{AB}}} 
        &\leq \expval{O\otimes O^*}_{\sqrt{\sigma_{ABC}}}.
    \end{aligned}
    \end{equation}
Here, $\sigma_{ABC} = \mcR_{B\to BC}(\rho_{AB})$ is a recovered state using the approximate Markov property (fact \ref{fact:CMI}) so that
$\delta=\bur{\rho}{\sigma}\leq \condinf{A}{C}{B}^{\frac{1}{2}}$.
It also follows from \cref{eq:DBvs2} that
\begin{equation} \label{eq:sqrt-dist}
    \norm{\ket{\sqrt{\rho}}-\ket{\sqrt{\sigma}}}=\norm{\sqrt{\rho}-\sqrt{\sigma}}_2 \lesssim \delta.
\end{equation}

Due to linearity, inequality~\eqref{eq:OAA-DPI} holds for any operator $O_{A\bar{A}}$ that admits a decomposition
\begin{equation}\label{eq:positiveOAA2}
    O_{A\bar{A}}=\sum_i \lambda_i O_{iA}\otimes O_{i\bar A}^*,~~\lambda_i\geq 0.
\end{equation}
Combined with \cref{eq:sqrt-dist}, this gives, for  any operator satisfying \cref{eq:positiveOAA2}:
\begin{equation}\label{eq:positiveOAAbound}
\begin{aligned}
        0&\leq  \expval{O_{A\bar{A}}}_{\sqrt{\rho_{AB}}}-\expval{O_{A\bar{A}}}_{\sqrt{\rho_{ABC}}}
    \lesssim \delta \norm{O_{A\bar{A}}}_\infty.
\end{aligned}
\end{equation}

Now we apply \cref{lem:CJ} to any operator $O_{A\bar A}$ and apply \cref{eq:positiveOAAbound} to each the four terms in the decomposition.
Noticing that $\norm{ O ^{(k)}}_\infty \le \norm{ O^{(k)}}_2 \le \norm{O}_2$, we find that, for any operator $O_{A\bar A}$:
\begin{equation} \label{eq:OAA-final}
    |\expval{O_{A\bar{A}}}_{\sqrt{\rho_{AB}}} - \expval{O_{A\bar{A}}}_{\sqrt{\rho_{ABC}}}| 
    \lesssim 
    \delta \norm{O_{A\bar A}}_2.
\end{equation}
It is equivalent to \cref{thm:TFD-LC} by the standard relation  $\norm{A}_2=\sup_{\norm {E}_2=1}\Tr{EA}$.
\end{proof}

\begin{proof}[Proof of \cref{thm:canonical-clustering}] 

We use the same strategy as that in \cref{thm:clusterCpq}.
In particular, we use the definition \cref{eq:defdeltaepsilon1,eq:defdeltaepsilon2}.
Repeating the arguments towards \cref{eq:OAA-final}, we find:
\begin{align}  |\expval{O_{A_1\bar{A}_1}}_{\sqrt{\rho_{A_1B_1}}} - \expval{O_{A_1\bar{A_1}}}_{\sqrt{\rho_{ABC}}}| 
    \lesssim 
    \delta_1 \norm{O_{A_1\bar A_1}}_2,\label{locD1}\\
|\expval{O_{A_2\bar{A_2}}}_{\sqrt{\rho_{A_2B_2}}} - \expval{O_{A_2\bar{A_2}}}_{\sqrt{\rho_{ABC}}}| 
    \lesssim 
    \delta_2 \norm{O_{A_2\bar A_2}}_2,\label{locD2}\\
|\expval{O_{A\bar{A}}}_{\sqrt{\rho_{AB}}} - \expval{O_{A\bar{A}}}_{\sqrt{\rho_{ABC}}}| 
    \lesssim 
    \delta_{12} \norm{O_{A\bar A}}_2.\label{locD3}
\end{align}

Moreover, using \cref{eq:sqrt-dist} again, we get:
\begin{equation}
    \norm{\ket{\sqrt{\rho_{AB}}}-\ket{\sqrt{\rho_{A_1B_1}}\otimes\sqrt{\rho_{A_2B_2}}}}
    \lesssim \bur{\rho_{AB}}{\rho_{A_1B_1}\otimes\rho_{A_2B_2}} = \epsilon ,
\end{equation}
and accordingly 
 \begin{equation} \label{eq:OAA-tensorization}
    |\expval{O_{A\bar{A}}}_{\sqrt{\rho_{AB}}} - \expval{O_{A\bar{A}}}_{\sqrt{\rho_{A_1B_1}\otimes\rho_{A_2B_2}}}| 
    \lesssim 
    \epsilon\norm{O_{A\bar A}}_\infty
    \leq
    \epsilon\norm{O_{A\bar A}}_2.
\end{equation}
In our problem, $O_{A\bar A}=O_{A_1\bar A_1}O_{A_2\bar A_2}$, hence
\begin{align}
    \expval{O_{A\bar{A}}}_{\sqrt{\rho_{A_1B_1}\otimes{\rho_{A_2B_2}}}} 
    = 
    \expval{O_{A_1\bar{A_1}}}_{\sqrt{\rho_{A_1B_1}}}
    \expval{O_{A_2\bar{A_2}}}_{\sqrt{\rho_{A_2B_2}}} .
\end{align}

Collecting the above equation, we arrive at
\begin{align}
    |\expval{O_{A_1\bar{A_1}}O_{A_2\bar{A_2}}}_{\sqrt{\rho}} - \expval{O_{A_1\bar{A_1}}}_{\sqrt{\rho}}\expval{O_{A_2\bar{A_2}}}_{\sqrt{\rho}}
    \lesssim 
    \norm{O_{A_1\bar{A_1}}}_2 \norm{O_{A_2\bar{A_2}}}_2 \big(\delta_1+\delta_2+\delta_{12}+\epsilon \big).
\end{align}
The final desired result, \cref{thm:canonical-clustering}, is obtained by setting $\delta_1,\delta_2,\delta_{12},\epsilon$ as those in \hyperref[proofclusterCpq]{the proof of} \cref{thm:clusterCpq}.
\end{proof}

\subsection{Comments on \cref{thm:TFD-LC}}\label{app:2norm1norm}

In this subsection, we first provide a simplified proof of \cref{thm:TFD-LC} for exact QMC. 
We then discuss how it can be strengthened under a stronger condition.

\begin{thmprime}{thm:TFD-LC}[Exact case]
    If $\rho_{ABC}$ is an exact QMC over $A$-$B$-$C$, then $\rho_{ABC}^{\frac12} = \rho_{BC}^{\frac12}\rho_{B}^{-\frac12}\rho_{AB}^{\frac12}$ and 
    \begin{equation}\label{locD4}
        \Tr_{B\bar B C\bar C}\ketbra{\sqrt{\rho_{ABC}}}
        = \Tr_{B\bar B}\ketbra{\sqrt{\rho_{AB}}}.
    \end{equation}
\end{thmprime}
\begin{proof}
We make use of another characterization of QMC \cite{petz1986sufficient}, which asserts that:
\begin{equation}
    \rho_{ABC}^{it} \rho_{BC}^{-it} = \rho_{AB}^{it} \rho_B^{-it} \quad \forall t \in \mathbb{R}.
\end{equation}
By analytic continuation, taking $t=\frac{i}{2}$, we find:
\begin{equation}
    \rho_{ABC}^{\frac12} = \rho_{BC}^{\frac12}\rho_{B}^{-\frac12}\rho_{AB}^{\frac12}.
\end{equation}
(Another quick way to prove it is to use the structural theorem in \cRef{hayden2004structure}, which asserts that
\begin{equation}
    \rho_{ABC}= \bigoplus_j p_j \rho_{AB_j^L}\otimes \rho_{B_j^RC},
\end{equation}
and notice that the square root can be taken independently for each direct summand.)

Translating this equation into pure states, we get:
\begin{equation}
    \ket{\sqrt{\rho_{ABC}}} = V_{B\to BC\bar C}\ket{\sqrt{\rho_{AB}}},~~~~
    V_{B\to BC\bar C}(\cdot)\EqDef\rho_{BC}^{\frac12}\rho_{B}^{-\frac12}(\,(\cdot)\otimes \sum_i\ket{i_Ci_{\bar C}}).
\end{equation}
We can confirm that $V$ is an isometry by explicit calculation:
\begin{equation}
    V^\dagger V = \Tr_C\left((\rho_{BC}^{\frac12}\rho_{B}^{-\frac12})^\dagger (\rho_{BC}^{\frac12}\rho_{B}^{-\frac12})\right)
    = \Tr_C(\rho_{B}^{-\frac12}\rho_{BC}\rho_{B}^{-\frac12}) = id_B.
\end{equation}
Namely, the squared root of the Petz map \cref{eq:Petz-def} (with a partial Choi–Jamiołkowski isomorphism) also serves as a isomorphism that maps $\ket{\sqrt{\rho_{AB}}}$ to $\ket{\sqrt{\rho_{ABC}}}$.
\Cref{locD4} then follows immediately.
\end{proof}

\begin{corollary}\label{conj-1norm}
    If $\rho$ is $\epsilon$-close (in trace norm) to an exact QMC, then
    \begin{equation}\label{locD21}
\norm{\Tr_{B\bar B C\bar C}\ketbra{\sqrt{\rho_{ABC}}}-\Tr_{B\bar B}\ketbra{\sqrt{\rho_{AB}}}}_1
\lesssim \sqrt{\epsilon}
    \end{equation}
\end{corollary}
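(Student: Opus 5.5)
Let $\tau_{ABC}$ be an exact $A$-$B$-$C$ quantum Markov chain with $\norm{\rho_{ABC}-\tau_{ABC}}_1\le\epsilon$. The plan is a three-term triangle inequality that passes through $\tau$:
\begin{align*}
\norm{\Tr_{B\bar B C\bar C}\ketbra{\sqrt{\rho_{ABC}}}-\Tr_{B\bar B}\ketbra{\sqrt{\rho_{AB}}}}_1
\le{}& \norm{\Tr_{B\bar B C\bar C}\ketbra{\sqrt{\rho_{ABC}}}-\Tr_{B\bar B C\bar C}\ketbra{\sqrt{\tau_{ABC}}}}_1\\
&+\norm{\Tr_{B\bar B C\bar C}\ketbra{\sqrt{\tau_{ABC}}}-\Tr_{B\bar B}\ketbra{\sqrt{\tau_{AB}}}}_1\\
&+\norm{\Tr_{B\bar B}\ketbra{\sqrt{\tau_{AB}}}-\Tr_{B\bar B}\ketbra{\sqrt{\rho_{AB}}}}_1 .
\end{align*}
The middle term vanishes identically by the exact case, \cref{thm:TFD-LC}$'$ (\cref{locD4}), applied to $\tau$. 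The remaining two terms are continuity estimates for the canonical purification.

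For the first term, monotonicity of the trace norm under partial trace reduces it to $\norm{\ketbra{\sqrt\rho}-\ketbra{\sqrt\tau}}_1$. For pure states this is at most $2\norm{\ket{\sqrt\rho}-\ket{\sqrt\tau}}$, and we have $\norm{\ket{\sqrt\rho}-\ket{\sqrt\tau}}=\norm{\sqrt\rho-\sqrt\tau}_2$. By \cref{eq:DBvs2} the latter is at most $\sqrt2\,\bur{\rho}{\tau}$, and the Fuchs–van de Graaf bound gives $\bur{\rho}{\tau}\le\norm{\rho-\tau}_1^{1/2}\le\sqrt\epsilon$. An alternative route is the Powers–Størmer inequality $\norm{\sqrt\rho-\sqrt\tau}_2^2\le\norm{\rho-\tau}_1$, which gives the same bound. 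Either way the first term is $\lesssim\sqrt\epsilon$.

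For the third term, monotonicity of the trace norm under partial trace gives $\norm{\rho_{AB}-\tau_{AB}}_1\le\epsilon$. Applying the same chain of inequalities to the pair $\rho_{AB},\tau_{AB}$ then bounds this term by $\lesssim\sqrt\epsilon$.

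Summing the three terms yields \cref{locD21}. The only point needing care is that the canonical purification $\ket{\sqrt{\rho_{ABC}}}$ involves $\sqrt\rho$ itself, not an arbitrary purification. This is exactly why the estimate must go through $\norm{\sqrt\rho-\sqrt\tau}_2$, where the Powers–Størmer/Bures bound applies, instead of through Uhlmann's theorem, which would only produce some purification. With that observation there is no genuine obstacle, and the $\sqrt\epsilon$ scaling arises precisely from this square-root continuity.
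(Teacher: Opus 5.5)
Your proof is correct and follows the paper's argument: a triangle inequality through the nearby exact QMC, the exact-case identity \cref{locD4} removing the middle term, and square-root continuity of the canonical purification applied to both the $ABC$ and $AB$ marginals. The paper gets $\norm{\sqrt{\rho}-\sqrt{\tau}}_2\le\norm{\rho-\tau}_1^{1/2}$ from \cref{eq:bhatia1} with $r=\frac12$, $k=2$, which is the Powers--St{\o}rmer route you list as an alternative; your Bures/Fuchs--van de Graaf route costs only an extra factor $\sqrt2$.
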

\begin{proof}
    Let $\sigma_{ABC}$ be an exact QMC such that $\norm{\rho-\sigma}_1\leq \epsilon$.
    We have proved that:
    \begin{equation}\label{conj:TFD}
        \Tr_{B\bar B C\bar C}\ketbra{\sqrt{\sigma_{ABC}}}
        = \Tr_{B\bar B}\ketbra{\sqrt{\sigma_{AB}}}.
    \end{equation}
     
    It remains to estimate the difference of both sides to those from $\rho$.
    For any two PSD matrices $A$ and $B$ with trace 1, we have:
    \begin{equation}
        \norm{\ketbra{\sqrt{A}}-\ketbra{\sqrt{B}}}_1
        \leq 2 \norm{\ket{\sqrt{A}}-\ket{\sqrt{B}}}
        = 2\norm{\sqrt{A}-\sqrt{B}}_2
        \leq 2\norm{\sqrt{A}-\sqrt{B}}_1^{\frac12}.
    \end{equation}
    Here in the last step, we use \cref{eq:bhatia1} with $r=\frac{1}{2}$ and $k=2$.
    Plugging $(\rho_{ABC},\sigma_{ABC})$ and $(\rho_{AB},\sigma_{AB})$ into the above inequality respectively, we get:
    \begin{equation}\label{locD23}
    \begin{aligned}
        &\norm{\Tr_{B\bar B C\bar C}\ketbra{\sqrt{\rho_{ABC}}}-\Tr_{B\bar B C\bar C}\ketbra{\sqrt{\sigma_{ABC}}}}_1 \lesssim \sqrt{\epsilon},\\
        &\norm{\Tr_{B\bar B }\ketbra{\sqrt{\rho_{AB}}}-\Tr_{B\bar B}\ketbra{\sqrt{\sigma_{AB}}}}_1 \lesssim \sqrt{\epsilon}.
    \end{aligned}
    \end{equation}
    The desired result is obtained by combining \cref{locD21,locD23}.    
\end{proof}

\section{Some Counterexamples}\label{app:countereg}
\subsection{Zero MI but not locally implementable}
Consider the following $n$-qubit state:
\begin{equation}
    \rho=\frac{1}{2}\ketbra{0}\otimes\frac{id}{2^{n-1}} + \frac{1}{2}\ketbra{1}\otimes\rho_{\text{parity}},
\end{equation}
where $\rho_{\text{parity}}$ is a $(n-1)$-qubit state:
\begin{equation}
    \rho_{\text{parity}} \propto \sum_{a_2+\cdots+a_n=0} \ketbra{a_2a_3\cdots a_n}.
\end{equation}
Tracing out a qubit $k\in\{2,\cdots, n\}$ on $\rho$ gives:
\begin{equation}
    \Tr_k\rho = \frac{id}{2^{n-1}},
\end{equation}
hence the state $\rho$ has exactly zero MI:
$\mutinf{A_1\cdots A_{k-1}}{ A_{k+1}\cdots A_n}_\rho=0$.

Now let us postselect $\ket{1}$ on the first qubit. We get:
\begin{equation}
    \sigma = \ketbra{1}\otimes\rho_{\text{parity}}.
\end{equation}
For this state,
\begin{equation}
    \mutinf{A_1\cdots A_{n-1}}{A_n}_\sigma=1.
\end{equation}
However, the original state has
\begin{equation}
    \mutinf{A_1\cdots A_{n-1}}{A_n}_\rho=\frac{1}{2}.
\end{equation}
Data processing inequality implies that there is no quantum channel on $A_1\cdots A_{n-1}$ that converts $\rho$ to $\sigma$.

\subsection{Nonzero CMI but locally implementable}
Consider $\rho=\rho_{\text{parity}}$ to be the $n$-qubit parity state as above. 
We claim that any local operation on qubit 1 can be implemented by a quantum channel on $1$ and $2$.

To show it, we rewrite $\rho$ as:
\begin{equation}
    \rho \propto \ketbra{00}\otimes \rho_0 +  \ketbra{11}\otimes \rho_0 +\ketbra{01}\otimes \rho_1
    + \ketbra{10}\otimes \rho_1.
\end{equation}
Here $\rho_0$ and $\rho_1$ are $(n-2)$-qubit parity states with even and odd parity, respectively.
Applying a Kraus operator $F$ on qubit 1, we get:
\begin{equation}
    \sigma \propto K\rho K^\dagger =
    \frac{1}{2(\alpha^2+\beta^2)}
    \ketbra{\alpha0}\otimes \rho_0 +  \ketbra{\beta1}\otimes \rho_0 
    +\ketbra{\alpha1}\otimes \rho_1     + \ketbra{10}\otimes \rho_1.
\end{equation}
Here $\ket{\alpha}=K\ket{0}$ and $\ket{\beta}=K\ket{1}$ are unnormalized states, $\alpha^2=\braket{\alpha}$, $\beta^2=\braket{\beta}$.

To prepare $\sigma$ from $\rho$, it suffices to apply a quantum channel $\Phi$ on 1 and 2 such that:
\begin{equation}
\begin{aligned}
        \Phi(\ketbra{00})=\Phi(\ketbra{11}) &= \frac{1}{\alpha^2+\beta^2}(\ketbra{\alpha 0} + \ketbra{\beta1}),\\
        \Phi(\ketbra{01} )=\Phi(\ketbra{10}) &= \frac{1}{\alpha^2+\beta^2}(\ketbra{\alpha 1} + \ketbra{\beta0}).
\end{aligned}
\end{equation}
Such $\Phi$ always exists. For example, we can measure the parity, replace qubits 1 and 2 by the desired states conditioning on the measurement result, then erase the register.

The same arguments also establish that any local operation on qubit 1 to $k$ can be implemented by a quantum channel on 1 to $k+1$.

\subsection{Zero MI and CMI but not locally recoverable}

We present a tripartite state over $ABC$ such that $I(A:C)=I(A:C|B)=0$ but a postselction on $A$ is not recoverable by a quantum channel on $AB$.
This shows that a partition like the one in \cref{fig:Gap-recovery}, with four subregions, is necessary.

Let $d_A=d_C=2, d_B=3$.
We denote the basis on the qubits ($A$ and $C$) as $\{\ket 0, \ket 1\}$; 
for the qutrit $C$ we denote the basis as $\{\ket 0, \ket 1,\ket 2\}$. 
Define $\rho_{ABC}$ as:
\begin{equation}
\begin{aligned}
    &\rho_{ABC}=\frac{1}{2}\rho_{ABC}^{(0)}+\frac{1}{2}\rho_{ABC}^{(1)},\\
    &\rho_{ABC}^{(0)}=\ketbra{0_A}\otimes (\frac{\ket{0_B0_C}+\ket{1_B1_C}}{\sqrt{2}})(\frac{\bra{0_B0_C}+\bra{1_B1_C}}{\sqrt{2}}),\\    &\rho_{ABC}^{(1)}=\ketbra{1_A}\otimes\ketbra{2_B}\otimes\frac{\ketbra{0_C}+\ketbra{1_C}}{2}.
\end{aligned}
\end{equation}
Straightforward calculation shows that $\mutinf{A}{C}_\rho=\condinf{A}{C}{B}_\rho=0$.

However, suppose we perform a projection $\ketbra{1_A}$, which has success probability 1/2, then we get $\rho_{ABC}^{(1)}$.
This state has no correlation between $AB$ and $C$, so we can not recover the original state by a quantum channel on $AB$, which has nonzero $AB|C$ correlation.

\subsection{Non-Markov but zero MI in canonical purification}
\label{sec:counteregTFD}
Consider the 4--qubit state
\begin{equation}
    \rho
    = \frac{1}{16}\!\left(
        \mathbf{1}
        + \frac12 (Z_1 Z_2 + Z_3 Z_4)
    \right).
\end{equation}
We regard it as a tripartite system, where $A=\{1\}, B=\{2,3\}, C=\{4\}$.

The state is classical so all von-Neumann entropies are Shannon entropies.
Explicit calculation shows that:
\begin{equation}
\begin{aligned}
S(1,2,3,4) &= 4\times (-\frac 18\log_2(\frac 18))
+8\times (-\frac{1}{16}\log_2(\frac{1}{16}))=  \frac72,\\
S(2,3) &= 2,\\
S(1,2,3) &= S(2,3,4) = 1+ 2\times (-\frac38\log_2(\frac38)-\frac18\log_2(\frac18))
         = 4 - \frac{3}{4}\log_2 3.
\end{aligned}
\end{equation}
Thus the conditional mutual information is
\begin{equation}
\begin{aligned}
I(1:4 \mid 2,3)
= H(1,2,3) + H(2,3,4)
   - H(2,3) - H(1,2,3,4)
= \frac52 - \frac{3}{2}\log_2 3\neq 0.
\end{aligned}
\end{equation}

To compute its canonical purification, we note that
\begin{align}
    \sqrt \rho = \frac{1}{16}\left[(2+\sqrt 2)\mathbf 1
    + \sqrt 2 (Z_1Z_2+Z_3Z_4)+ (\sqrt 2 -2)Z_1Z_2Z_3Z_4\right].
\end{align}
Therefore, in the Bell basis, 
\begin{align}
    \ket {\sqrt\rho}= \frac{1}{4}\left[(2+\sqrt 2)\ket{\phi_+\phi_+\phi_+\phi_+}
    + \sqrt 2 (\ket{\phi_-\phi_-\phi_+\phi_+}+\ket{
    \phi_+\phi_+\phi_-\phi_-
    })+ (\sqrt 2 -2)\ket{\phi_-\phi_-\phi_-\phi_-}\right].
\end{align}
Explicit calculation shows that
\begin{align*}
    \rho_{AA'} = \rho_{1 1'}
    & =  \frac{1}{4}\left[(2+\sqrt2)\ketbra{\phi_+}{\phi_+} 
    +(2-\sqrt2)\ketbra{\phi_-}{\phi_-} 
    \right].
\end{align*}
By symmetry, $\rho_{AA'}$ has a similar form.
The joint state $\rho_{AA'CC'}$ is
\begin{align*}
    \rho_{AA'CC'} =  \frac{1}{16}\left[(2+\sqrt2)^2\ketbra{\phi_+\phi_+}{\phi_+\phi_+} 
    +(2-\sqrt2)^2\ketbra{\phi_-\phi_-}{\phi_-\phi_-}
    +2(\ketbra{\phi_-\phi_+}{\phi_-\phi_+}+\ketbra{\phi_+\phi_-}{\phi_+\phi_-})
     \right]
\end{align*}
It is then obvious that $\rho_{AA'CC'} = \rho_{AA'} \otimes \rho_{CC'}$ hence $I(AA':CC')=0$.

\end{document}